\documentclass[aps,pra,superscriptaddress,twocolumn,footinbib,floatfix]{revtex4-2}

\usepackage{amsmath,amssymb,amsfonts,mathtools,bm}

\usepackage{physics}

\usepackage{graphicx}
\usepackage[caption=false]{subfig} 
\usepackage{float}
\usepackage{multirow}
\usepackage{placeins} 

\usepackage[hyperfootnotes=false]{hyperref}
\usepackage[capitalise]{cleveref}
\usepackage{url}
\usepackage{xcolor}
\definecolor{darkblue}{rgb}{0,0,0.5}
\hypersetup{
  colorlinks=true,
  linkcolor=black,
  citecolor=darkblue,
  urlcolor=darkblue
}

\newtheorem{proposition}{Proposition}
\newtheorem{definition}{Definition}
\newenvironment{proof}[1][Proof]{\noindent\textbf{#1.} }{\ \rule{0.5em}{0.5em}}

\newcommand{\defeq}{\vcentcolon=}

\DeclareMathOperator*{\argmax}{arg\,max}
\DeclareMathOperator*{\argmin}{arg\,min}

\definecolor{psred}{rgb}{0.8,0.2,0.2}
\definecolor{revblue}{rgb}{0.0,0.2,0.75}

\newcommand{\SP}[1]{\textcolor{black}{#1}}

\makeatletter
\def\algbackskip{\hskip-\ALG@thistlm}
\makeatother

\begin{document}
\title{Free-Space Quantum Networks and Optimized Fiber-Reinforcement}

\author{Alasdair I.~Fletcher}
\affiliation{Department of Computer Science, University of York, York YO10 5GH, United Kingdom}
\author{Ignazio Pedone}
\affiliation{nodeQ, 71-75 Shelton
Street, Covent Garden, London WC2H 9JQ, United Kingdom}
\author{Stefano Pirandola}
\thanks{Correspondence addressed to \href{mailto:stefano.pirandola@york.ac.uk}{stefano.pirandola@york.ac.uk}}
\affiliation{Department of Computer Science, University of York, York YO10 5GH, United Kingdom}
\begin{abstract}
Free-space quantum communication provides a flexible complement to fiber-based quantum networks, but its point-to-point capacity is fundamentally limited by diffraction, atmospheric extinction and beam wandering induced by turbulence. In this work, we study the end-to-end performance of large-scale free-space quantum networks connecting randomly distributed fixed or mobile users, modelled as Waxman random graphs. We derive the mean network capacity, edge consumption and connectivity phase transitions for both single-path and multi-path (flooding) routing. We also study router-centered star networks, deriving the full distribution of end-to-end capacities as a function of the router's coverage radius. We then consider how performance may be improved by reinforcing free-space networks with a small number of optimally placed fiber-based backbone nodes. We prove that any optimal backbone configuration must correspond to a capacity-maximizing Voronoi tessellation of the network region, and show that this can be efficiently approximated by a centroidal Voronoi tessellation via Lloyd's algorithm, with backbone nodes connected according to a Delaunay triangulation. Numerical results show that even a modest number of backbone nodes substantially improves end-to-end capacity and reduces edge consumption for both mobile and fixed users. 
\end{abstract}
\maketitle

\section{Introduction\label{sec:introduction}}
The current classical internet connects a multitude of users on a vast global scale for communication, distributed computation and many other processes and services. It operates over a wide range of domains, from small scale local area networks to global satellite based communications \cite{Handley2019,Preet_2022}. It therefore necessarily contains a variety of connection types including wired connections such as fiber-optic cable and wireless connections, typically operating through free-space. Wireless communications have revolutionised modern networking \cite{Agiwal2016-5Greview,Raychaudhuri2012FrontiersWireless}, providing improved flexibility and cost-effectiveness over wired connections, whilst still maintaining sufficiently high communication rates. In particular mobile network technology is dependent on efficient wireless communication in order to operate \cite{Raychaudhuri2012FrontiersWireless}.

The future quantum internet \cite{Kimble2008QInternet,UniteQInt} will aim to emulate the current classical internet, enabling multiple users to connect over long distances for quantum-secure communication \cite{Joshi2020,CambridgeQN,Chen2021_46nodeQN,QKD250km}, quantum information processing and distributed quantum computation \cite{VanMeter2016PathToScalableQC,QINetworkChallenges2020,Cuomo2020TowardsDistQC,Cirac1999DistQcomp}. Nonetheless, the inherent fragility of quantum systems to decoherence and the inability to clone quantum information \cite{Wooters1982} render high rate distribution of quantum information far more challenging than its classical counterpart. The rate of point-to-point quantum information transmission is known to be fundamentally distance limited as described by the Pirandola-Laurenza-Ottaviani-Banchi (PLOB) bound \cite{PLOB} which decays exponentially with distance.

Given these difficulties, much attention has focused on transmission of quantum states in networks through optical fiber, both theoretically \cite{QuntaoRandQNets,ZhangQInt} and experimentally~\cite{CambridgeQN,Chen2021_46nodeQN}. Nonetheless, it is clear future large-scale quantum networks will need to operate in a hybrid manner, combining fiber and free-space links, and will require analytical methods for approximating the performance of such large-scale structures \cite{OPGQN,Harney2022HybridQNs,IRQN}.

Recent work has tightly upper-bounded the capacities of free-space quantum communication channels \cite{FS,MasoudFS} by making use of techniques from turbulence theory \cite{AndrewsTurb}, optics \cite{SiegmanLasers}, and quantum information theory \cite{Holevo19,Simeone_2026}. However, when applied to ground-based free-space links, the capacity of such channels is markedly less than the equivalent fiber-based channel, so it may not be possible to simply use free-space wireless links wherever desired.

The topology of a future quantum internet is likely to be emergent and highly complex but many expected topological features may be captured by random network models. This was first explored for fiber-based networks in \cite{QuntaoRandQNets,ZhangQInt}, which identified critical densities of users to achieve end-to-end capacity benchmarks. Nonetheless, many fundamental questions remain. Given the fragility of quantum information, it remains unclear whether wireless technology is a feasible mechanism for connecting users at sufficiently high rates of communication, and what proportion of a network must be fiber to increase the performance to the required level.

Our work is inspired by these questions. Firstly, we characterize the end-to-end optimal performance of free space quantum networks in two operational regimes, i.e., considering random fixed and mobile users. The performance is mainly quantified in terms of two-way assisted capacity (or its upper bound) for quantum communication, entanglement distribution and key generation between end points, besides considering the notion of routing consumption, i.e., the fraction of links utilized by the routing solution, following Ref.~\cite{PracticalRouting}. Secondly, we show how the performance can be improved by the inclusion of a limited number of fiber-optic channels, by means of suitably-designed backbone structures.

We structure our paper as follows. We begin by providing background notions. We first review the capacities of free-space quantum channels in \cref{subsec:caps} and consider the consequences for quantum networking through free-space. We discuss the general theoretical apparatus for analyzing quantum networks in a setup agnostic way and various routing strategies for such networks in \cref{subsec:routing_basics}. We introduce the methods for modeling large scale free-space quantum networks in \cref{subsec:rand_Q_net}, in particular by considering the Waxman random network model and routers connecting randomly positioned user nodes. Baseline results for free-space quantum networks are presented in \cref{sec:baseline}. In \cref{sec:optimal_bb_design} we introduce backbone networks as a method of improving network performance and demonstrate how to optimize such structures for maximal end-to-end performance. We present the results of the numerical investigation into the performances of backbone assisted free-space networks in \cref{sec:optimized bb results} and  our conclusions in \cref{sec:conclusions}.

\section{Preliminary notions}

\subsection{Capacities of free-space quantum channels}\label{subsec:caps}
The first step in modelling random quantum networks is to consider the constituent quantum channels which make up the connections of the network. Here we are concerned with free-space networks, possibly reinforced with additional fiber-optic links. In this section, we briefly review the capacities of thermal loss channels in optical fiber and free-space. If not otherwise specified, throughout the paper, by `capacity' we mean a generic capacity assisted by two-way classical communication. Depending on what quantum communication task we consider, i.e., quantum key distribution, quantum information transfer, or entanglement distribution, the generic capacity represents a secret-key agreement capacity ($K_2$), a two-way assisted quantum capacity ($Q_2$), or a two-way assisted entanglement distribution capacity ($D_2$), respectively. Interested readers are referred to~\cite{PLOB,Pirandola2018ChannelSim} for the precise definitions of these capacities. We then use the terminology of `capacity bound' when we refer to an upper bound to the actual capacity. This is needed since the exact capacity may not be known for certain types of quantum channels, such as the thermal-loss channels. In general, when we refer to a mix of `capacities' and `capacity bounds', we use the weaker terminology of `capacity bounds' to cover both cases.

\subsubsection{Quantum fiber channels}
In general, we may model quantum communication in a noisy optical fiber as a bosonic thermal loss channel in which the transmitted mode is mixed on a beamsplitter of transmissivity $\eta$ with an incoming thermal mode with mean photon number $\bar{n}$. Note that we may express the transmissivity $\eta$ in terms of the physical distance traversed by the channel using the expression $\eta=10^{-\gamma d}$ where $d$ is the distance in kilometers and we take $\gamma=0.02~\mathrm{dB}~\mathrm{km}^{-1}$which is the loss rate in the state of the art fiber. 

The capacity of a thermal loss channel can be upper bounded by resorting to the relative entropy of entanglement (REE) of the channel's Choi matrix. This approach leads to~\cite{PLOB}
\begin{equation}
 \mathcal{C}(\mathcal{E})\leq
\begin{cases}
-\log_2\big[(1-\eta)\eta^{\bar{n}}\big]-h(\bar{n}) & \bar{n}<\frac{\eta}{1-\eta} \\
~ 0 & \mathrm{otherwise},
\end{cases}
\label{eq:thermchannelbound}
\end{equation}
where $h$ is the entropic function
\begin{equation}
h(x) \defeq (x+1)\log_2(x+1)-x\log_2(x).
\label{eq:h_func}
\end{equation}

In the case of a lossy (pure-loss) channel, in which $\bar{n}=0$, Eq.~\eqref{eq:thermchannelbound} becomes the PLOB bound~\cite{PLOB}
\begin{equation}
\mathcal{C}(\mathcal{E}) \leq -\log_2(1-\eta).
\end{equation}
In this particular case, the upper bound coincides with the lower bound established in Ref.~\cite{PirPatron09}, so the PLOB bound represents exactly the capacity of the lossy channel, i.e., one may write~\cite{PLOB} 
\begin{equation}
\mathcal{C}(\mathcal{E}) = -\log_2(1-\eta). \label{PLOBeq}
\end{equation}

\subsubsection{Quantum free-space channels}
We are interested in free-space channels in two distinct regimes. Firstly, we consider free-space channels connecting two fixed devices, and secondly, free-space channels inter-connecting mobile nodes. The relevant setup parameters for the two different regimes are shown in \cref{table:Setups}. We begin this section by describing basic notions that affect both types of channels before considering further notions which are specific to each of the two setups.

We review here the work of Ref.~\cite{FS} which established bounds for the capacities of free-space channels. For simplicity let us begin without thermal noise and consider the pure loss case first. The basic setup is as follows: we consider Alice and Bob communicating through free-space across a distance $d$ with the two parties each located at approximately the same altitude $h$. Communication is enabled by means of a Gaussian beam of curvature $R_0$, wavelength $\lambda$ and initial spot size $w_0$. The beam is prepared by Alice and directed towards Bob's detector which has a circular aperture of radius $a_R$. There are three main causes of loss that may be immediately modeled: diffraction, atmospheric extinction and experimental setup losses. We now deal with these in turn. 

\renewcommand{\arraystretch}{1.2}
\begin{table}[t!]
\centering
\begin{tabular}{ |l|c|c|c| } 
 \hline
\textit{\hspace{6.5mm}Parameter} & ~\textit{Symbol}~ & ~\textit{Mobile}~ & ~\textit{Fixed}~ \\ 
 \hline  \hline
Altitude & $h$ & 30~m & 30~m\\ \hline
Beam Curvature & $R_{0}$ & $\infty$ & $\infty$ \\  \hline
Wavelength & $\lambda$ & 800~nm & 800~nm  \\  \hline
Initial spot-size & $w_0$ &   1~mm & 5~cm \\ \hline
Receiver Aperture & $a_R$ & 1~cm &  5~cm \\ \hline
Setup Noise  & $\bar{n}_{\text{ex}}$ & 0.01 &  0.01 \\  \hline
Detector Efficiency & $\eta_{\text{eff}}$ & 0.7 & 0.5 \\ \hline
Pointing error  & $\sigma_{\text{p}}$ & $1.75~$mrad & $1~\mu$rad \\   \hline
Pulse Duration  & $\Delta t$ & $10$ ns & $10$ ns \\ \hline
Field of View  & $\Omega_{\text{fov}}$ & $10^{-4}$ sr & $10^{-10}$ sr \\   \hline
Frequency Filter & $\Delta \lambda$ & 0.1~pm & 0.1~pm \\  \hline
\end{tabular}
\caption{\textbf{Parameter table for point-to-point links in free-space quantum networks with fixed or mobile devices.} The fixed-setup pointing error $\sigma_{\text{p}}$ is listed for completeness; as discussed in \cref{subsubsec:long_range_fixed}, turbulence dominates over pointing error at the long ranges considered for fixed users, so this value does not itself enter the fixed-channel derivation.}
\label{table:Setups}
\end{table}

The beam is diffracted as it propagates in free space causing the spot to increase in size from its initial value to $w(d)$ which can be given by:
\begin{equation}
w(d)=w_0\sqrt{\bigg(1-\frac{
d}{R_0}\bigg)^2 + \bigg(\frac{d}{d_R}\bigg)^2}, \hspace{0.5cm} d_R \defeq \frac{\pi w_0^2}{\lambda}.
\end{equation}
Here, $d_R$ is the \textit{Rayleigh length}, which is the distance at which the beam's cross-section has doubled. Since the receiver's aperture is of some finite size, some of the beam is lost which introduces the following diffraction-induced transmissivity to the channel:
\begin{equation}
\eta_{\text{diff}}(d)=1-\exp\bigg[\frac{-2a_R^2}{w(d)^2}\bigg].
\end{equation}

The second cause of loss is atmospheric extinction due to Rayleigh scattering and absorption. This can be modelled according to the Beer-Lambert relation:
\begin{equation}
\eta_{\text{atm}}(d,h)=\exp\big[-\alpha(h)~d~\big].
\end{equation}
The value of the extinction factor $\alpha(h)$ depends on the density of the atmosphere. By assuming a standard exponentially decaying model of atmospheric density, the extinction factor $\alpha$ may be given as a function of the height above sea-level and is maximized at ground level:
\begin{equation}
\alpha(h)=\alpha_0\exp(-h/h_0),
\end{equation}
where $h_0=6600\text{m}$ and $\alpha_0\simeq 5\times 10^{-6}~\text{m}^{-1}$ is an estimate for the extinction factor at sea level for a beam with wavelength $\lambda=800$nm \cite{Vasylyev2019Sat}. 

The third cause of loss is from the experimental setup due to inefficiencies at the receiver. For example, loss may be incurred due to coupling the incoming beam into an optical fiber at the receiver. Since we wish to remain relatively agnostic to the specifics of any one setup we choose to simply adopt a sensible value for this parameter. Combining each of these factors a first approximation of the channel is a lossy channel with total transmissivity given by combining all three of the transmissivity terms  as follows:
\begin{equation}
\eta(d,h)=\eta_{\text{diff}}(d)~\eta_{\text{atm}}(d,h)~\eta_{\text{eff}},
\label{eq:totaltrans}
\end{equation}
and thus the capacity is given by~\cite{FS}
\begin{equation}
\mathcal{C}(\mathcal{E})=-\log_2\big[1-\eta(d,h)\big],
\label{eq:PLOBFS}
\end{equation}
which is a simple generalization of Eq.~\eqref{PLOBeq}.

This formula may be extended by considering thermal noise and therefore modelling the free-space channel as a thermal loss channel. The main sources of thermal noise for a free-space channel are the environmental background and additional excess noise in the setup. Thus the total average number of photons introduced into the channel $\bar{n}$ is given by:
\begin{equation}
\bar{n}\defeq \eta_{\text{eff}} ~ \bar{n}_b + \bar{n}_{\text{ex}},
\label{eq:noise}
\end{equation}
where $\bar{n}_b$ is the mean number of background environmental photons, collected by a detector with efficiency $\eta_{\text{eff}}$
and $\bar{n}_{\text{ex}}$ is the mean number of photons introduced as excess noise. For a given value of $\bar{n}$ and transmissivity $\eta$ the thermal-loss channel capacity can then be bounded as in~\cref{eq:thermchannelbound}. 

The background photon number may be calculated in terms of various detector parameters and the spectral irradiance $H_{\lambda}=\frac{\pi \lambda}{h c} B_{\lambda}^{\mathrm{sky}}$, where $B_{\lambda}^{\mathrm{sky}}$ is the brightness of the sky in W~m$^{-2}$~nm$^{-1}$~sr$^{-1}$. By combining the pulse duration $\Delta t$, aperture radius $a_R$ and solid angle field of view of the detector $\Omega$ with the spectral irradiance $H_{\lambda}$ we can write:
\begin{equation}
\bar{n}_b=\Delta t ~ \Delta \lambda ~ a_R^2 ~\Omega_{\mathrm{fov}} ~H_{\lambda},
\end{equation}
Values of $H_{\lambda}$ depend on the weather and atmospheric conditions. Assuming that neither the Sun nor Moon is in direct line of sight of the receiver, typically values for the irradiance are given by:
\begin{equation}
    H_{\lambda=800nm} \approx \begin{cases}
    1.9\times 10^{18}, & \text{cloudy day time},\\
    1.9 \times 10^{13},& \text{full Moon, clear night}.
    \end{cases}
\label{eq:irradiance}
\end{equation}
in units of number of photons $\text{m}^{-2}~\text{sr}^{-1}~\text{s}^{-1} \text{nm}^{-1}$. We will choose to assume the worst-case scenario of cloudy daytime operation throughout this paper but are aware that different conditions may improve the performance.

We are now therefore able to provide first approximations for both mobile and fixed free-space quantum channels as thermal lossy channels with the necessary parameters given in \cref{table:Setups}. We note that the fixed setup is characterized by large initial spot sizes and correspondingly sized receiver apertures and a narrow field of view, which is appropriate for scenarios where the target is not moving. On the other hand, the mobile setup is characterized primarily via a wide-angle receiver aperture and smaller initial spot-size and receiver aperture which are befitting of mobile devices. 

However, we have not yet considered one final key source of loss in free-space: beam wandering. The position of the beam is caused to move with respect to the centre of the receiver aperture, due to jitter in devices, pointing error and turbulence in the atmosphere. The effect of this wandering further reduces the transmissivity of the channel. The overall effect of beam wandering is to cause the centroid of the beam to undergo a Gaussian random walk with variance $\sigma^2$.  This variance may be written as the sum of two further variances:
\begin{equation}
\sigma^2=\sigma_{\text{pe}}^2+\sigma_{\text{tb}}^2,
\end{equation}
where $\sigma_{\text{pe}}^2$ is the variance associated with the pointing error caused by misalignment and jitter and $\sigma_{\text{tb}}^2$ is the variance caused by atmospheric turbulence. The strength of turbulent effects can be assessed numerically by invoking the \textit{Rytov number}, $\sigma^2_{\text{Ry}}$, which is a dimensionless parameter describing the size of turbulent effects and defined for a plane wave as:
\begin{equation}
\sigma^2_{\text{Ry}}(d)\defeq~1.23~ C_n^2~ k^{\frac{7}{6}}~ d^{\frac{11}{6}},
\end{equation}
where $C_n^2$ is the index of refraction structure constant and $k$ is the wavenumber of the beam.

If $\sigma^2_{\text{Ry}}\gg1$ then there is strong turbulence and $\sigma^2 \simeq \sigma_{\text{tb}}^2$. Otherwise if $\sigma^2_{\text{Ry}}\sim1$ then there is moderate turbulence and if $\sigma^2_{\text{Ry}}<1$ the effects of turbulence are weak. We shall briefly review the key features that are relevant for each of the two setups we are considering, but we direct interested readers to Refs.~\cite{FS,MasoudFS} for full details.

\subsubsection{Long range fixed users}
\label{subsubsec:long_range_fixed}
For free-space communication over long ranges we expect to experience strong turbulent effects. In this case $\sigma^2 \simeq \sigma_{\text{tb}}^2$, so the pointing-error $\sigma_{\text{p}}$ given for the fixed setup in \cref{table:Setups} is subdominant here and does not otherwise enter the expressions below. The basic idea is that the turbulent effects are so strong that they cannot be resolved by the detector and instead we should average the effect via the spreading of the beam spot-size to the long-term beam waist $w_{\text{lt}}$. Ref.~\cite{MasoudFS} defined a characteristic transmission length $d_i$:
\begin{equation}
d_i\defeq\frac{1}{C_n^2 k^2 l_0^{\frac{5}{3}}}.
\end{equation}

The value $l_0$ is the turbulence inner scale, which describes the distance at which fluctuations in the index of refraction may begin to become correlated. Typically $l_0 \sim 1 ~\text{mm}$. Beyond the distance $d_i$, we expect the effects of turbulence to become pronounced. Turbulence causes further spreading of the beam to the long-term beam waist which is the mean squared radius of the region in which patches of the original beam are found at the receiver. If the distance between the sender and receiver, $d$ is greater than the characteristic distance $d_i$, then it was shown that:
\begin{equation}
w_{\text{lt}}(d)=w(d)\sqrt{1+\frac{8dq}{3kw(d)^2}},
\end{equation}
where $q$ is defined as:
\begin{equation}
q\defeq0.74\sigma^2_{\text{Ry}}(d) Q^{1/6};\hspace{1cm} Q\defeq35.05d/(kl_0^2).
\end{equation}
Alternatively, if $d<d_i$ but the transmission is still affected by moderate-to-strong turbulence then the following expression for $w_{\text{lt}}$ holds:
\begin{equation}
w_{\text{lt}}(d)=w(d)\sqrt{1+1.63~[\sigma_{R_y}^2(d)]^{\frac{6}{5}}\frac{2d}{kw(d)^2}}.
\end{equation}

By using the long term-beam waist, the long term transmissivity can be calculated: 
\begin{equation}
\eta_{\text{lt}}(d)=1-\exp\bigg[\frac{-2a_R^2}{w_{\text{lt}}(d)^2}\bigg],
\end{equation}
which now accounts for the effect of turbulence in addition to diffraction. This term replaces $\eta_{\text{diff}}$ in \cref{eq:totaltrans} and thus the overall transmissivity can be given by:
\begin{equation}
\eta(d,h) =\eta_{\text{lt}}(d)~\eta_{\text{atm}}(d,h)~\eta_{\text{eff}},\label{etalongrange}
\end{equation}
which can be applied in 
Eqs.~\eqref{PLOBeq}
and~\eqref{eq:thermchannelbound} to calculate the capacity of the channel for pure loss and the capacity bound with thermal noise respectively.

\subsubsection{Short range mobile users}
The mobile setup is characterized by its wider field of view and smaller initial spot size, which necessarily imposes short ranges for the communication. Over such ranges the effect of turbulence is effectively negligible. However the effects of the wider field of view incur a larger pointing error than in the fixed case. The variance associated with beam wandering is thus almost entirely caused by pointing error, $\sigma^2\simeq\sigma_{\text{pe}}^2$, and this pointing error variance depends on the distance between sender and receiver as $\sigma_{\text{pe}}^2=\sigma_{\text{p}}^2d^2$ where $\sigma_{\text{p}}$ is the pointing error at the transmitter. It was shown in \cite{FS} that this pointing error induced beam wandering may be described via a fading process since the fluctuations are slow enough that the receiver is capable of resolving the beam-wandering dynamics. 

Fading is a process in which the transmissivity of the quantum channel changes in time according to a stochastic process. In general, a fading channel is given by the ensemble:
\begin{equation}
\mathcal{E}_{F}(\eta_{\max})\defeq \{p(\eta),\mathcal{E}(\eta)\},\hspace{1cm} \eta \in (0,\eta_{\max}),
\end{equation}
where $p$ is the probability density function for the transmissivity $\eta$ and $\mathcal{E}(\eta)$ is the corresponding quantum channel. The capacity of a free-space fading channel with pure loss can be bounded via the following modification of the PLOB bound~\cite{FS}:
\begin{equation}
\mathcal{C}\big[\mathcal{E}_F(\eta_{\max})\big]=-\int_{0}^{\eta_{\text{max}}}~d\eta~ p(\eta) ~ \log_2(1-\eta).
\label{eq:fadingPLOB}
\end{equation}

\begin{figure}[t!]
    \vspace{-0.0cm}
    \centering
    \setlength{\tabcolsep}{0pt}
    \begin{tabular}{c}
    \subfloat[fiber]{\includegraphics[width=0.81\linewidth,trim=0pt 16pt 0pt 2pt,clip]{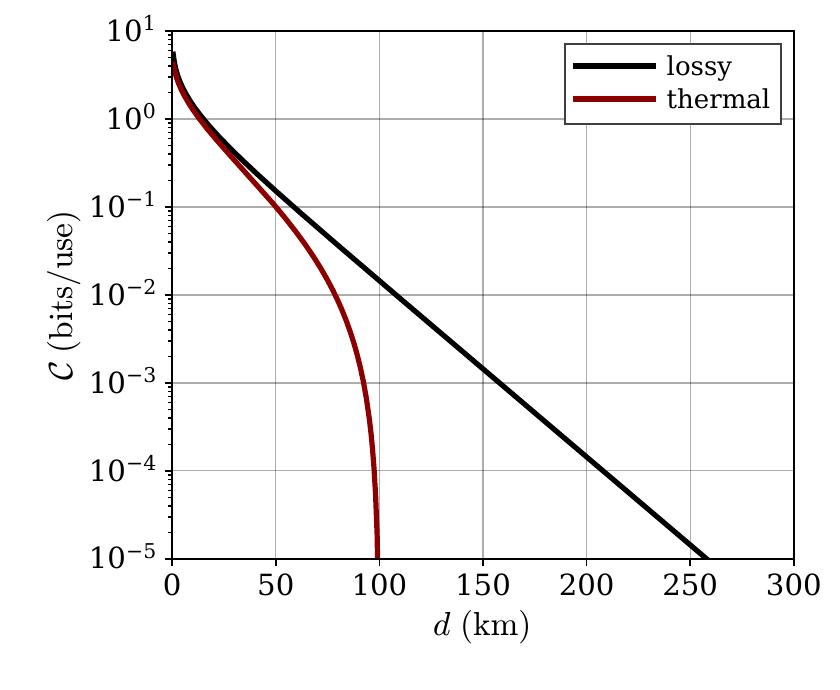}\label{fig:capacity_comp}} \\[-6pt]
    \subfloat[fixed]{\includegraphics[width=0.81\linewidth,trim=0pt 16pt 0pt 2pt,clip]{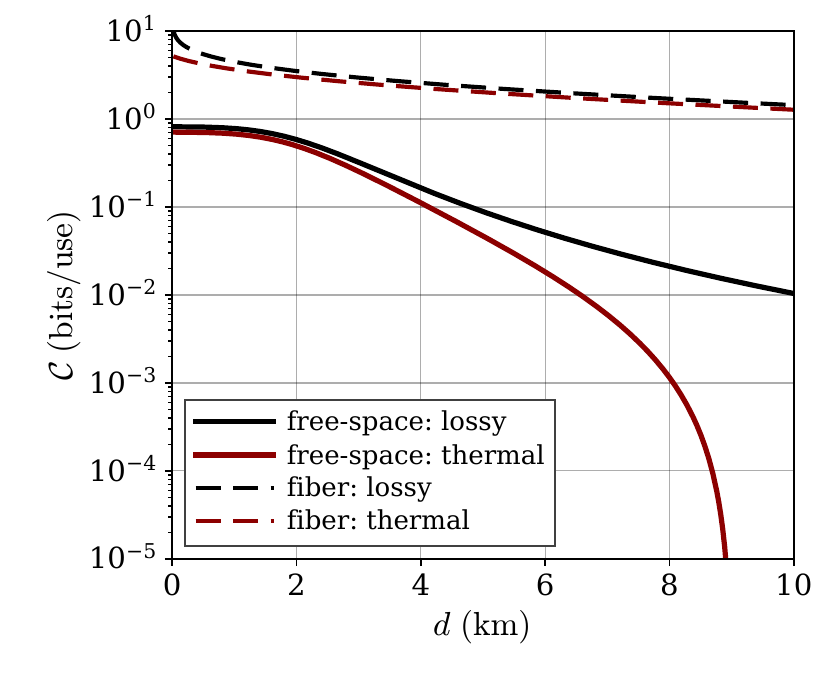}\label{fig:capacity_comp_fixed}} \\[-6pt]
    \subfloat[mobile]{\includegraphics[width=0.81\linewidth,trim=0pt 16pt 0pt 2pt,clip]{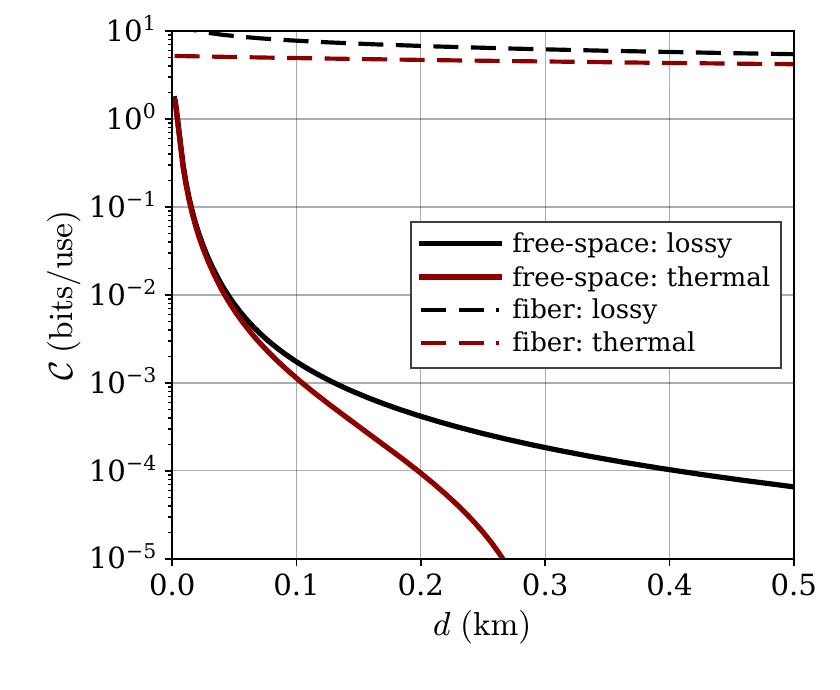}\label{fig:capacity_comp_mobile}}
    \end{tabular}
    \caption{\textbf{Capacity bounds for fiber and free-space quantum channels.} \SP{In each panel, black refers to the capacity bound of a pure-loss channel (``lossy''), while red refers to the capacity bound of a thermal-loss channel (``thermal''). In (a), we show the results for fiber. We plot Eq.~\eqref{PLOBeq} for pure loss and Eq.~\eqref{eq:thermchannelbound} for thermal-loss, assuming $\bar{n}=0.01$. In (b), we consider fixed  (long-range) free-space devices. We plot Eqs.~\eqref{PLOBeq} and~\eqref{eq:thermchannelbound} combined with Eqs.~\eqref{eq:noise} and~\eqref{etalongrange}. In (c), we consider mobile (short-range) free-space devices. Here, we plot the pure-loss formula in Eq.~\eqref{FSplobshort} and the thermal-loss expression in Eq.~\eqref{FSplobTHshort}. Note that, in (b) and (c), the dashed lines reproduce the fiber-based bounds of (a) for comparison. Free-space parameters are chosen as in Table~\ref{table:Setups}}.}
    \label{fig:capacity_comp_composite}
\end{figure}

Considering that the transmissivity follows a Weibull distribution, \cref{eq:fadingPLOB} can be specified into the following capacity bound of a free-space lossy channel~\cite{FS}:
\begin{equation}
\mathcal{C}\big[\mathcal{E}_F(\eta_{\max})\big] \leq -\Delta(\eta_{\text{max}},\sigma_{\text{pe}})\log_2(1-\eta_{\text{max}}),\label{FSplobshort}
\end{equation}
where $\eta_{\max}$ is given by the expression for $\eta$ given in \cref{eq:totaltrans} which corresponds to the case of perfect alignment with no pointing error, while $\Delta(\eta_{\text{max}},\sigma_{\text{pe}})$ is a correction for the imperfect alignment of the beam and receiver and is given by the function:
\begin{equation}
\Delta(\eta,\sigma) \defeq 1 + \frac{\eta}{\text{ln}(1-\eta)} \int_{0}^{\infty} dx \frac{\exp\left[\frac{-\varsigma_0^2}{2\sigma^2}x^{\frac{2}{\kappa}}\right]}{e^x - \eta},
\end{equation}
where $\kappa$ and $\varsigma_0$ are the shape and scale parameters of the Weibull distribution. 

The capacity of the thermal loss channel can then be bounded as~\cite{FS}:
\begin{align}
\mathcal{C}\big[\mathcal{E}_F(\eta_{\max})\big] \leq &-\Delta(\eta_{\text{max}},\sigma_{\text{pe}})\log_2(1-\eta_{\text{max}}) \nonumber \\ &- \mathcal{T}(\eta_{\max},\bar{n},\sigma_{\mathrm{pe}}),\label{FSplobTHshort}
\end{align}
where
\begin{multline}
\mathcal{T}(\eta, \bar{n},\sigma) \defeq \left[ 1 - \exp\left(\frac{-\varsigma_0^2}{2\sigma^2}\left(\text{ln}\left[\frac{\eta}{\bar{n}}\right]\right)^{\frac{2}{\kappa}}\right) \right] \\
\times \left[ \frac{\bar{n}\log_2 \bar{n}}{1-\bar{n}} + h(\bar{n}) \right] - \Delta(\bar{n},\sigma)\log_2(1-\bar{n}).
\end{multline}
Here $h$ is the entropic function of~\cref{eq:h_func} and the thermal number $\bar{n}$ is computed according to Eq.~\eqref{eq:noise} and following equations.

\subsubsection{Comparison}
We now have all the capacity bounds required to model the links in free-space quantum networks. Before we proceed further, let us take the opportunity to compare these capacities. We plot the capacity bounds for fiber-based channels and free-space channels in both fixed and mobile configurations (see \cref{fig:capacity_comp,fig:capacity_comp_fixed,fig:capacity_comp_mobile}). Unsurprisingly fiber-based channels have the best performance, both for pure loss and when thermal noise is considered. Fiber-based quantum channels can maintain a positive capacity for distances up to $\sim100$km, whereas the corresponding maximum distances are $\sim7$km and $\sim0.3$km for fixed and mobile free-space channels. 

Additionally, except for short distances ($\sim2$km for fixed free-space channels; $\sim20$m for mobile free-space channels), the capacity bound for optical fiber links is more than an order of magnitude higher than the equivalent free-space values. Indeed for distances over which the capacity bound for thermal-loss mobile communication is non-zero, the equivalent capacity bound in optical fiber is approximately constant, while in the mobile free-space case its value decays rapidly towards zero. 

The comparatively weak performance of free-space channels highlights that long-distance point-to-point free-space quantum communication is infeasible and that network structures will be required. In particular, the comparatively excellent performance of fiber channels hints that adding a small number of such links to a free-space quantum network may markedly boost the network performance.

\subsection{Quantum Networks, Routingand End-to-End Capacities} \label{subsec:routing_basics}
We can now describe how we shall model quantum networks in a manner which is agnostic to any specific experimental devices. In this case, a quantum network consists of a number of nodes each of which includes a register of quantum states. Pairs of nodes are connected via quantum channels, and states may be exchanged from one node's register to another if a channel exists between them. Such a network may be represented as a weighted graph $\mathcal{N}=(V,E)$, where each network node is represented by a graph vertex $\boldsymbol{v} \in V$. Pairs of vertices $\boldsymbol{i},\boldsymbol{j}$ on the graph share an edge $(\boldsymbol{i},\boldsymbol{j})\in E$ if there exists a quantum channel $\mathcal{E}_{\boldsymbol{i}\boldsymbol{j}}$ between the two corresponding network nodes. The weight of each edge is given by the capacity $\mathcal{C}$ (or the capacity bound) of the channel connecting the two nodes (recall that by capacity we mean the generic two-way assisted capacity, i.e., $Q_2$, $D_2$ or $K_2$).

We define an end-to-end network protocol $\mathcal{P}$ as a means to transport a quantum system from the register of $\boldsymbol{a}$ to the register of $\boldsymbol{b}$ via one or more sequences of network channels. The simplest protocol $\mathcal{P}^\omega$ utilises a single, predetermined path $\omega$ between $\boldsymbol{a}$ and $\boldsymbol{b}$. Such an end-to-end path is defined to be a sequence of distinct network edges connecting the two users:
\begin{equation}
\omega=\big((\boldsymbol{a},\boldsymbol{x}_1),(\boldsymbol{x}_1,\boldsymbol{x}_2),...(\boldsymbol{x}_n,\boldsymbol{b})\big).
\end{equation}
Performing the single-path protocol, $\mathcal{P}^\omega$, involves the sequential transmission of a quantum system from node to node along the path $\omega$ via the corresponding quantum communication channels $\mathcal{E}_{\boldsymbol{a}\boldsymbol{x}_1},\mathcal{E}_{\boldsymbol{x}_1\boldsymbol{x}_2},...\mathcal{E}_{\boldsymbol{x}_n\boldsymbol{b}}$. In general each of the transmissions may be interleaved with local operations and classical communication (LOCCs). Utilising such a path is equivalent to transmission along a repeater chain and the network capacity is given by the minimum of the capacities of the path's constituent quantum channels \cite{ETEarxiv,PirandolaEnd-to-End19}:
\begin{equation}
\mathcal{C}^{\omega}(\mathcal{N})=\underset{(\boldsymbol{i},\boldsymbol{j}) \in \omega}{\min}  \mathcal{C}(\mathcal{E}_{\boldsymbol{ij}}).
\end{equation}

More generally it is possible to optimize the routing strategy between a given pair of end-users to maximize the end-to-end capacity performance. Broadly, such strategies may be grouped into two types: \textit{single-path routing} and \textit{multi-path routing}.  A single-path protocol $\mathcal{P}^s(\boldsymbol{a},\boldsymbol{b})$ utilises sequential transmission of quantum systems from $\boldsymbol{a}$ to $\boldsymbol{b}$ along a single path through the network. Selecting the optimal path is equivalent to solving the widest path problem on the weighted graph and the optimal single-path performance is given by  the single-path capacity $\mathcal{C}^s $\cite{ETEarxiv,PirandolaEnd-to-End19}:
\begin{equation}
\mathcal{C}^s(\mathcal{N},\boldsymbol{a},\boldsymbol{b})=\underset{C}{\min}~\underset{(\boldsymbol{x},\boldsymbol{y})\in\tilde{C}}{\max} 
\mathcal{C}(\mathcal{E}_{\boldsymbol{xy}})
\label{eq:Sp_Cap}
\end{equation}
Here, $C$ refers to a network cut: a bipartition of the vertices of the graph into two sets $A$ and $B$ such that $\boldsymbol{a} \in A$ and $\boldsymbol{b} \in B$. The corresponding cutset $\tilde{C}$ is comprised of the set of network edges which connect a vertex in A with a vertex in B. 

A more powerful routing strategy is to use a multi-path protocol $\mathcal{P}^m(\boldsymbol{a},\boldsymbol{b})$, which utilises multiple paths between $\boldsymbol{a}$ and $\boldsymbol{b}$ in parallel, allowing greater end-to-end performances albeit at the cost of utilising more of the network's resources. The most powerful multi-path routing strategy is a \textit{flooding} protocol, in which each network edge is used exactly once per end-to-end transmission. This leads to the following multi-path (or flooding) capacity $\mathcal{C}^m$, which is derived by application of the min-cut max-flow theorem to quantum networks \cite{ETEarxiv,PirandolaEnd-to-End19}:
\begin{equation}
\mathcal{C}^m(\mathcal{N},\boldsymbol{a},\boldsymbol{b})= \underset{C}{\min} \sum_{(\boldsymbol{x},\boldsymbol{y})\in\tilde{C}} \mathcal{C}(\mathcal{E}_{\boldsymbol{xy}}). \label{eq:Fl_Cap}
\end{equation}

Generally, in a given network there is not typically a single privileged pair of end users $\boldsymbol{a}, \boldsymbol{b}$ whose end-to-end performance is the only metric by which we measure the performance of the network. Instead, we are interested in end-to-end capacity of a protocol $\mathcal{P}$ averaged across all possible pairs of end-users, which we denote $\langle \mathcal{C}^\mathcal{P}(\mathcal{N})\rangle$ and refer to as the mean network capacity for that protocol. It is important to note that in some networks the full set of nodes may not be valid end-users. In this case we have a set of end users $V_e$ and a set of router/repeaters $V_r$ such that $V_e \cup V_r = V$ . In this case the end-to-end performance between routers or between routers and end-users is not considered:
\begin{equation}
\langle\mathcal{C}^{\mathcal{P}}(\mathcal{N})\rangle=\binom{|V_e|}{2}^{-1}\sum_{\boldsymbol{i},\boldsymbol{j}\in V_e} \mathcal{C}^{\mathcal{P}}(\mathcal{N},\boldsymbol{i},\boldsymbol{j}).
\end{equation}
In the case where all the network nodes are in fact valid end-users then $V_e$ is clearly just replaced by the full set of vertices $V$. For large networks, it is typically infeasible to calculate the relevant capacity for every pair of possible end-users in which case we instead make use of a representative sample.

In addition to calculating the mean network performance, we are also interested in assessing the edge consumption of a given network protocol. While flooding provides the ultimate performance limit of any multi-path quantum routing protocol, in practice it is not feasible for a single end-user pair to leverage the resources of the entire network to facilitate their single end-to-end connection. In a realistic network deployment it is highly likely that multiple pairs of users will wish to communicate simultaneously and thus utilising network resources in a scalable manner is an essential feature of routing in quantum networks. To this end we utilise the edge consumption $\tilde{E}$ which is the ratio of the number of edges used by a single communicating end user pair to the total number of network edges, $\tilde{E}(\boldsymbol{a},\boldsymbol{b}) \defeq |E_\mathcal{P}(\boldsymbol{a},\boldsymbol{b})|/|E|$, so that
\begin{equation}
\langle \tilde{E} \rangle = \binom{|V_e|}{2}^{-1}\sum_{\boldsymbol{i},\boldsymbol{j}\in V_e} \tilde{E}(\boldsymbol{i},\boldsymbol{j}).
\end{equation}

\subsection{Random Quantum Networks}
\label{subsec:rand_Q_net}
As with the current classical internet, any future large-scale quantum network is unlikely to have a significantly designed topology. Instead, such networks are likely to grow and develop in an ad-hoc manner \cite{Yook,SF} with the topology and other network properties emergent from an underlying quasi-random process. Moreover in the case of mobile networks, in which users freely travel and form connections, any given set of user locations and network topology merely represents the state of the network at some particular point in time. To this end we study classes of random quantum networks whose parameters may be calibrated to closely approximate likely features of future large scale quantum networks.

We consider the user nodes to take random positions within a convex region $\mathcal{R}$ of the $x-y$ plane, specifically in 
circular regions of various radii. Additionally, the positions of user nodes are drawn from the uniform distribution supported on the region. Generating the nodal positions randomly corresponds to taking a snapshot of the network state at a given time in the mobile user case. Averaging over such instances gives both the long-term time average of a single such network in addition to the overall average of the entire network class. In the fixed user case, averaging over multiple random network instances only corresponds to an overall average of the class of networks.

Having generated the random positions of the nodes, the remaining aspect to be considered is the network edges. We consider the following model for randomly generating network edges:

\begin{definition}[Waxman Network]
A Waxman Network $\mathcal{N}_{\text{Wax}}(n,\mathcal{R},\alpha,\beta,\eta_0,r_0)$ is a simple graph consisting of $n$ nodes generated from a uniform probability distribution on the region $\mathcal{R}$ where the probability that any two nodes $\boldsymbol{i}$ and $\boldsymbol{j}$ share an edge is given by:
\begin{equation}
p[(\boldsymbol{i},\boldsymbol{j})\in E]=\beta\exp\!\left[-\left(\frac{|\vec{r_{\boldsymbol{i}}}-\vec{r_{\boldsymbol{j}}}|}{\alpha r_0}\right)^{\eta_0}\right],
\end{equation}
where $r_0>0$ is a characteristic distance scale, $\alpha>0$ controls the spatial decay of the connection probability, $\beta\in[0,1]$ controls its overall magnitude, and $\eta_0>0$ determines the shape of the distance-decay function.
\end{definition}

We note that in the limit of $\eta_0 \rightarrow \infty$ then probability of connection is one if $|\vec{r_{\boldsymbol{x}}}-\vec{r_{\boldsymbol{y}}}|\leq r_0$ and zero if $|\vec{r_{\boldsymbol{x}}}-\vec{r_{\boldsymbol{y}}}| >  r_0$. Such a network is known as a random geometric graph~\cite{Penrose03RGG}, in which a vertex connects to all other vertices located within a given radius of the vertex. Importantly, the thermal-loss capacities we consider have the property that they go to zero for large enough distances. Thus, given a network with a diameter that is sufficiently large that the capacity of a direct channel connecting two network nodes can be zero, there is inherently a degree of random geometric graph-like structure in the network. In particular, if the nodal positions are randomly generated and we allow the nodes to be as densely connected as possible, then the resulting graph will be a random geometric graph. 

We are therefore interested in modelling free-space quantum networks as Waxman graphs since they capture this dynamic whilst also providing a softer random model that does not fully connect vertices within this feasible radius. The Waxman model therefore allows us to explore more sparse random network models while still inherently maintaining a connection to this capacity dependent geometric structure. Additionally, the exponential decay in connection probability with the distance reflects the likely preference for utilising connections with the highest capacities.

We are also interested in considering router-centered star networks, in which all the end-user nodes are connected solely to a router node. Despite the fact that the topology of such a network is fixed as a star graph, the random positions of the nodes still means that the network is represented by a random graph. Whilst the overall structure of a future quantum internet will be unlikely to have a fully designed topology, router models are still useful to consider as the simplest possible topologies that guarantee interconnecting $n$ users. Additionally if router-centered star networks offer sufficiently good performance then they may well make up sub-networks of the emergent topology  of a large scale quantum network.

\section{Network Capacities of Free-Space Quantum Networks}
\label{sec:baseline}
We are now able to characterize the performance of free-space networks for Waxman and router-centered star networks in both the fixed and mobile configurations. This provides the first characterizations of the performances of such networks and will provide the baseline against which we will compare the improved performance of the equivalent networks provided with an assistive fiber-based backbone. We shall begin with router-centered star networks, where we shall derive the distribution of the end-to-end capacities, given the radial distribution of end-users from the router.

\subsection{Router-Centered Star Networks}
\label{subsec:router}
The basic setup we shall consider is as follows. Consider a circular region $\mathcal{R}\subset \mathbb{R}^2$ whose centre lies at the origin and which has a radius of $R$. The network consists of a single router node $\boldsymbol{s}$ positioned at the centre of the region $\mathcal{R}$ and $n$ user nodes whose positions $\vec{r}_i$ are generated randomly with uniform probability across $\mathcal{R}$. Each end user is connected to the router via a quantum channel $\mathcal{E}_i$ with corresponding capacity $\mathcal{C}_i(|\vec{r_i}|)$
\footnote{We adopt this slight abuse of notation and denote the capacity of a distance to be the capacity of a quantum channel of equivalent distance when the type of channel is clear or need not be specified.} the value of which depends on the distance between the user node and router.

The simple topology of a router-centered star network enables a greater degree of analytical analysis than other more complicated topologies. In particular, it is clear that only single-path routing is possible in such a network, since there is a unique path between two end-users, comprised of the edges between each of the end-users and the router. It is therefore the case that the end-to-end capacity between two users $\boldsymbol{a}$ and $\boldsymbol{b}$ is determined by the minimum of the capacities of these two channels:
\begin{equation}
\mathcal{C}(\mathcal{N},\boldsymbol{a},\boldsymbol{b})=\min \bigg(\mathcal{C}^s(|\vec{r}_{\boldsymbol{a}}|), \mathcal{C}^s(|\vec{r}_{\boldsymbol{b}}|)\bigg),
\end{equation}
where we have dropped the indication of single-path routing since this is now implied by the topology. Let us assume that all of the nodes in the network are connected to the router by the same type of bosonic quantum channels. In this case, the capacity of any network channel connecting depends only on the distance to the router and by the monotonicity of the capacity we simply need to find the maximum of the two distances. We write the cumulative distribution as:
\begin{equation}
\Pr(r_{\mathrm{max}} \leq r) =\Pr\big(|\vec{r}_{\boldsymbol{a}}|\leq r,\ |\vec{r}_{\boldsymbol{b}}|\leq r\big),
\end{equation}
and since each of these distances is independently drawn from the same underlying uniform distribution on the circle:
\begin{equation}
\Pr(r_{\mathrm{max}}\leq r) = \left(\frac{r^2}{R^2}\right)^2 = \frac{r^4}{R^4}.
\end{equation}
Differentiating this cumulative distribution gives the corresponding probability density function:
\begin{equation}
p(r_{\mathrm{max}}) = \frac{4r_{\max}^3}{R^4}.
\end{equation}

We now make use of standard techniques concerning functions of random variables~\cite{taboga-stats,Gazi2023}, in particular the following proposition, which enables us to calculate the resulting network capacity distribution under strict monotonicity assumptions.
\begin{proposition}
\label{prop:rand_var}
Let $\mathcal{X}$ be a random variable with support $R_\mathcal{X}$ and probability density function $p(x)$. Let $\mathcal{C}$ be a strictly decreasing function on $R_\mathcal{X}$. Then the random variable $\mathcal{Y}=\mathcal{C}(\mathcal{X})$ has support on $R_\mathcal{Y}$ given by:
\begin{equation}
R_\mathcal{Y}=\{y=\mathcal{C}(x)\ |\ x \in R_\mathcal{X} \},
\end{equation}
and pdf $p_{\mathcal{Y}}$ given by 
\begin{equation}
p_{\mathcal{Y}}(y)=
\begin{dcases*}
~\frac{-p_{\mathcal{X}}\big(\mathcal{C}^{-1}(y)\big)}{\mathcal{C}'\big(\mathcal{C}^{-1}(y)\big)}, & $\ y \in R_{\mathcal{Y}}$
\\
~0, & otherwise
\end{dcases*}
\end{equation}
\end{proposition}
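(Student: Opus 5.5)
The plan is to use the standard cumulative-distribution-function method for a monotone transformation, and then differentiate. Implicitly, $\mathcal{C}$ is differentiable on $R_\mathcal{X}$ with $\mathcal{C}'\neq 0$ there. This holds for the capacity formulas of \cref{subsec:caps} on the range where they are positive.

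\textbf{Support.} Since $\mathcal{Y}=\mathcal{C}(\mathcal{X})$ and $\mathcal{X}$ takes values in $R_\mathcal{X}$ almost surely, $\mathcal{Y}$ takes values in the image $\mathcal{C}(R_\mathcal{X})$. This is exactly the set $R_\mathcal{Y}$ in the statement. Strict monotonicity makes $\mathcal{C}:R_\mathcal{X}\to R_\mathcal{Y}$ a bijection, so $\mathcal{C}^{-1}$ is well defined on $R_\mathcal{Y}$. Since $R_\mathcal{X}$ is an interval (here $[0,R]$) and $\mathcal{C}$ is continuous, $R_\mathcal{Y}$ is also an interval.

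\textbf{CDF.} For $y\in R_\mathcal{Y}$, because $\mathcal{C}$ is strictly decreasing we have $\mathcal{C}(x)\le y \iff x\ge \mathcal{C}^{-1}(y)$. Hence
\begin{equation*}
F_\mathcal{Y}(y)=\Pr\big(\mathcal{X}\ge \mathcal{C}^{-1}(y)\big)=1-F_\mathcal{X}\big(\mathcal{C}^{-1}(y)\big),
\end{equation*}
where replacing $>$ by $\ge$ is harmless because $\mathcal{X}$ has a density and so no atoms.

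\textbf{Density.} Differentiating with the chain rule and the inverse function theorem, $(\mathcal{C}^{-1})'(y)=1/\mathcal{C}'(\mathcal{C}^{-1}(y))$. This gives
\begin{equation*}
p_\mathcal{Y}(y)=-p_\mathcal{X}\big(\mathcal{C}^{-1}(y)\big)\Big/\mathcal{C}'\big(\mathcal{C}^{-1}(y)\big).
\end{equation*}
This expression is nonnegative because $\mathcal{C}'<0$. Outside $R_\mathcal{Y}$, $F_\mathcal{Y}$ is constant (equal to $0$ below the image and $1$ above it), so the density there is $0$.

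\textbf{Main obstacle.} The only delicate point is regularity. Strict monotonicity alone does not give a differentiable inverse: one needs $\mathcal{C}'\neq0$, and at points where $\mathcal{C}'$ vanishes the formula holds only almost everywhere. I would state the differentiability assumption explicitly. As an alternative, I would note that the formula holds at every point where $F_\mathcal{X}$ and $\mathcal{C}^{-1}$ are differentiable, which covers almost every point and is enough to define a pdf.
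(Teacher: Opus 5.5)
Your proposal is correct. It is the standard distribution-function argument (write $F_\mathcal{Y}(y)=1-F_\mathcal{X}(\mathcal{C}^{-1}(y))$, then differentiate with the inverse function theorem), which the paper does not write out but defers to through its citations of textbook treatments of functions of random variables; your explicit hypothesis that $\mathcal{C}$ is differentiable with $\mathcal{C}'\neq 0$ on $R_\mathcal{X}$ is exactly what the statement tacitly assumes.

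One small caution concerns your fallback remark: almost-everywhere differentiability alone is not ``enough to define a pdf''. If $\mathcal{C}$ is strictly decreasing and continuous but its inverse is a singular function, then $F_\mathcal{Y}$ is singular and $\mathcal{Y}$ has no density at all. What is really needed is absolute continuity of $\mathcal{C}^{-1}$, which your main assumption guarantees.
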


\noindent Applying this to the case of the PLOB bound $\mathcal{C}(r)=-\log_2(1-10^{-\gamma r})$, i.e., for a pure-loss channel, we get
\begin{equation}
p(\mathcal{C})=
\begin{dcases*}
\frac{-4\big(\log_{10}(1-2^{-\mathcal{C}})\big)^3}{(\gamma R)^4\log_2(10)(2^\mathcal{C}-1)}, &  $ \mathcal{C} \in 	\left[\mathcal{C}(R),\mathcal{C}(0)\right]$\\
~0, & otherwise
\end{dcases*}.
\label{eq:PLOB_pdf}
\end{equation}

\SP{It is important to note that the assumption of strict monotonicity does not generally apply to the thermal-loss capacity bound unless the distance of the links is confined in a region where the bound is non-zero, i.e., for which Eq.~\eqref{eq:thermchannelbound} [Eq.~\eqref{FSplobTHshort} for mobile] is strictly positive.
Only in such a thermal non-zero region can we derive $p(\mathcal{C})$ for the thermal-loss channel using Proposition~\ref{prop:rand_var}. In the general case, the value of $p(\mathcal{C})$ needs to be computed numerically (and not using the inversion provided by  Proposition~\ref{prop:rand_var}). Starting from $p(\mathcal{C})$, we can then compute the mean network capacity bound:}
\begin{equation}
\langle \mathcal{C}^s(\mathcal{N})\rangle=\int_{\mathcal{C}(R)}^{\mathcal{C}(0)} \ \mathcal{C}~ p(\mathcal{C}) \ d\mathcal{C}.
\label{eq:net_mean_integral}
\end{equation}

We plot the average network capacity bound $\mathcal{C}^s(\mathcal{N})$ in \cref{fig:fiber_router_dist,fig:router_dist_fixed,fig:router_dist_mobile} for router-centered star networks comprised of fiber, fixed-user free-space channels and mobile-user free-space channels. Additionally, we provide numerical results for the distribution $p(\mathcal{C})$ in \cref{fig:router_capacity_pdf_fibre,fig:router_capacity_pdf_fixed,fig:router_capacity_pdf_mobile} and directly compare it with results from the simulated router-centered star networks.

\begin{figure*}[t]
    \centering
    \setlength{\tabcolsep}{0pt}
    \begin{tabular}{cc}
    \subfloat[fiber]{\includegraphics[width=0.42\linewidth,trim=0pt 28pt 0pt 3pt,clip]{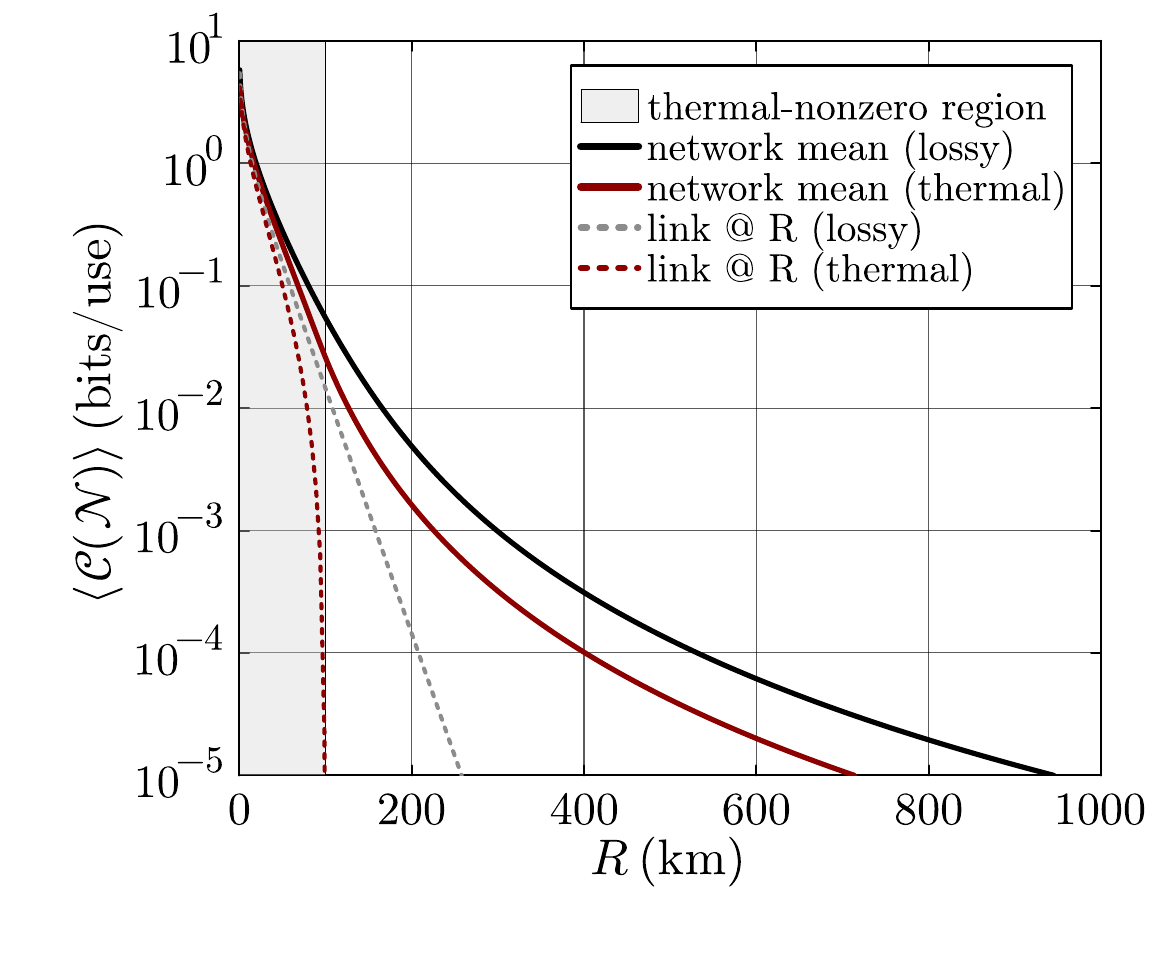}\label{fig:fiber_router_dist}} &
    \subfloat[fiber]{\includegraphics[width=0.42\linewidth,trim=0pt 28pt 0pt 15pt,clip]{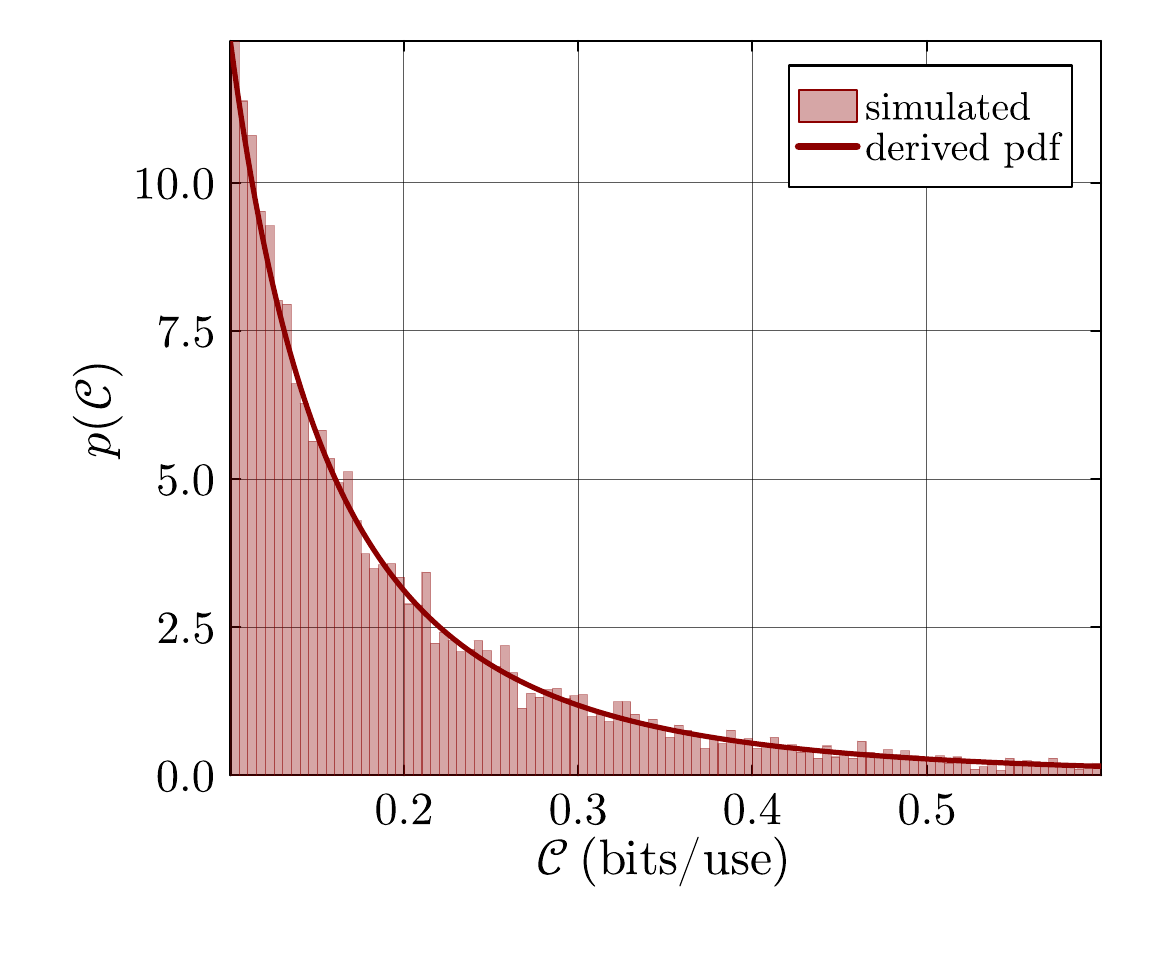}\label{fig:router_capacity_pdf_fibre}} \\[-6pt]
    \subfloat[fixed]{\includegraphics[width=0.42\linewidth,trim=0pt 28pt 0pt 3pt,clip]{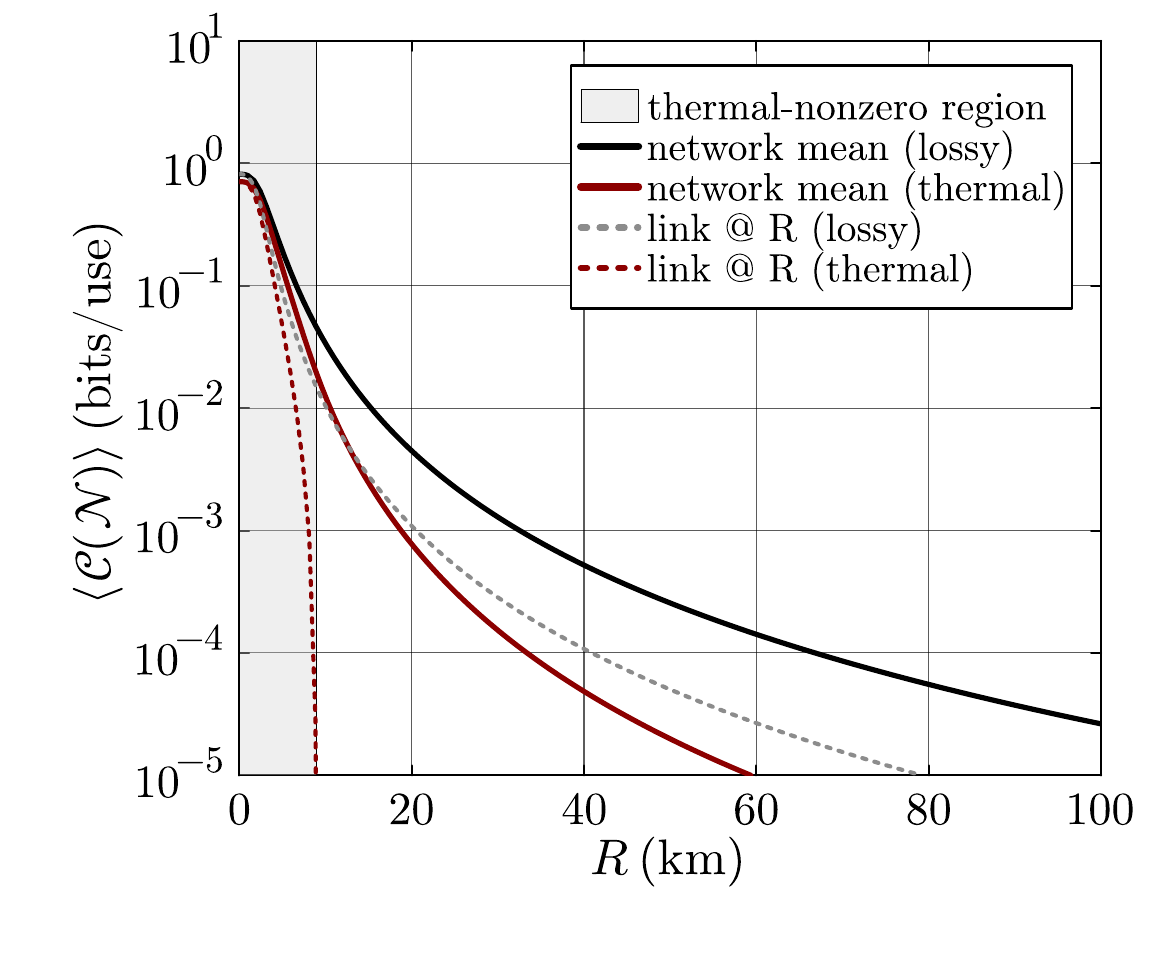}\label{fig:router_dist_fixed}} &
    \subfloat[fixed]{\includegraphics[width=0.42\linewidth,trim=0pt 28pt 0pt 15pt,clip]{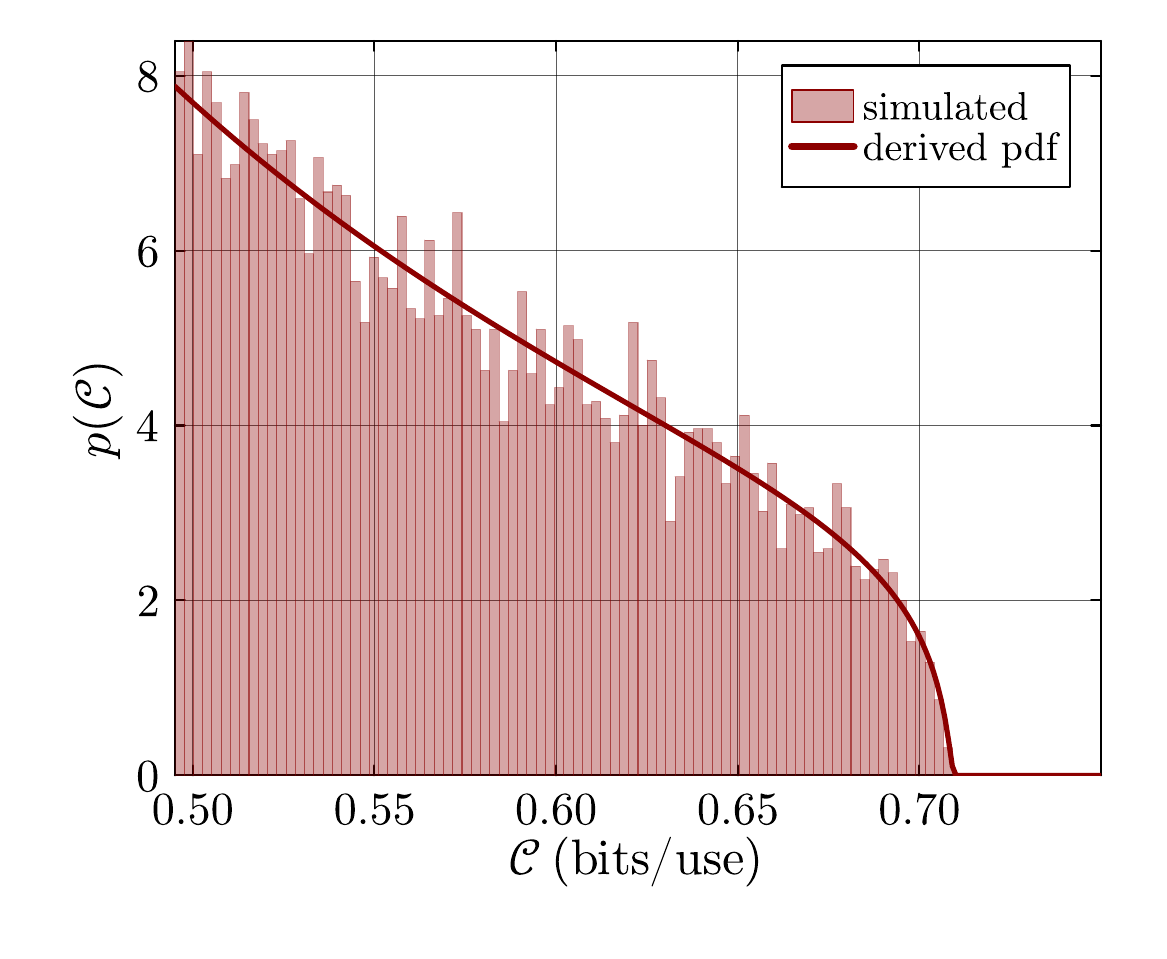}\label{fig:router_capacity_pdf_fixed}} \\[-6pt]
    \subfloat[mobile]{\includegraphics[width=0.42\linewidth,trim=0pt 28pt 0pt 3pt,clip]{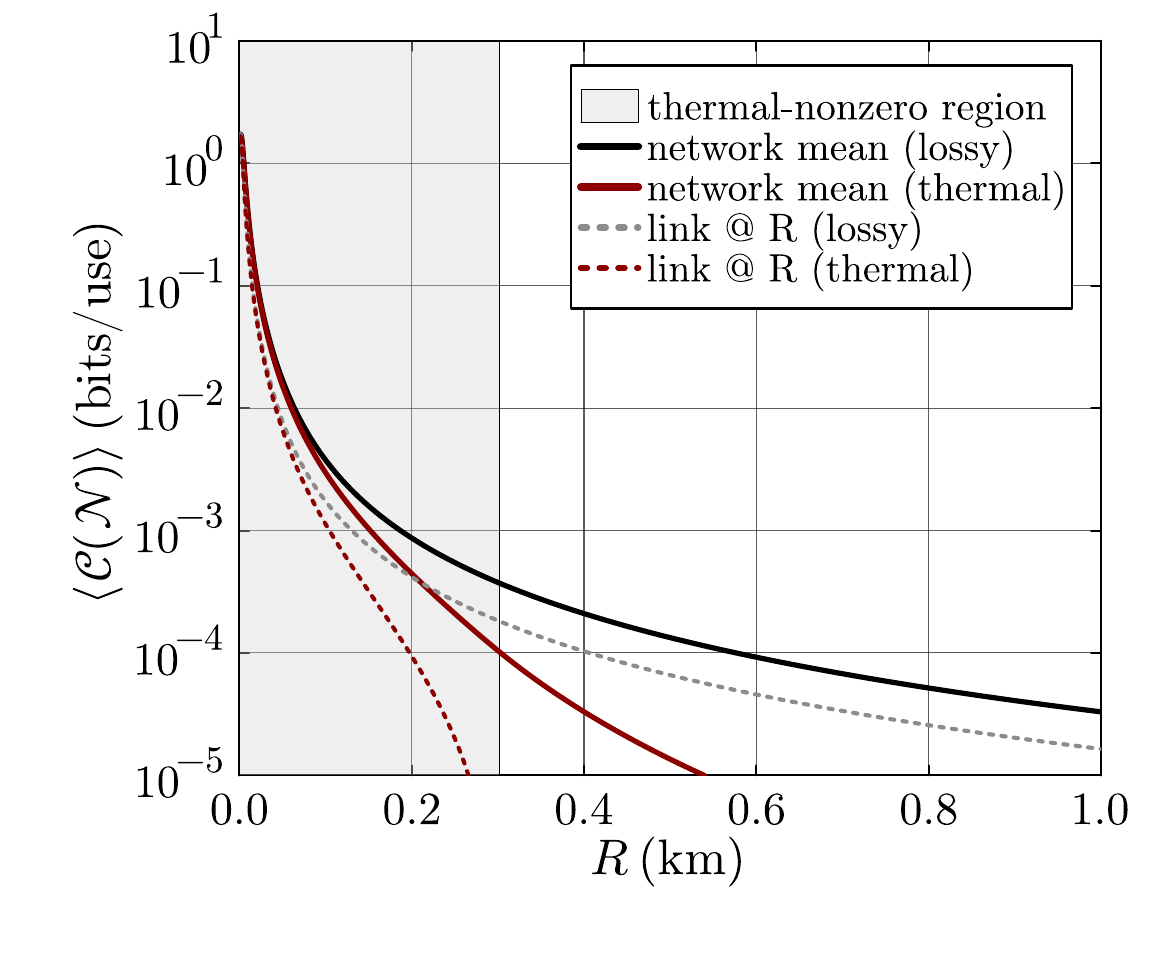}\label{fig:router_dist_mobile}} &
    \subfloat[mobile]{\includegraphics[width=0.42\linewidth,trim=0pt 28pt 0pt 15pt,clip]{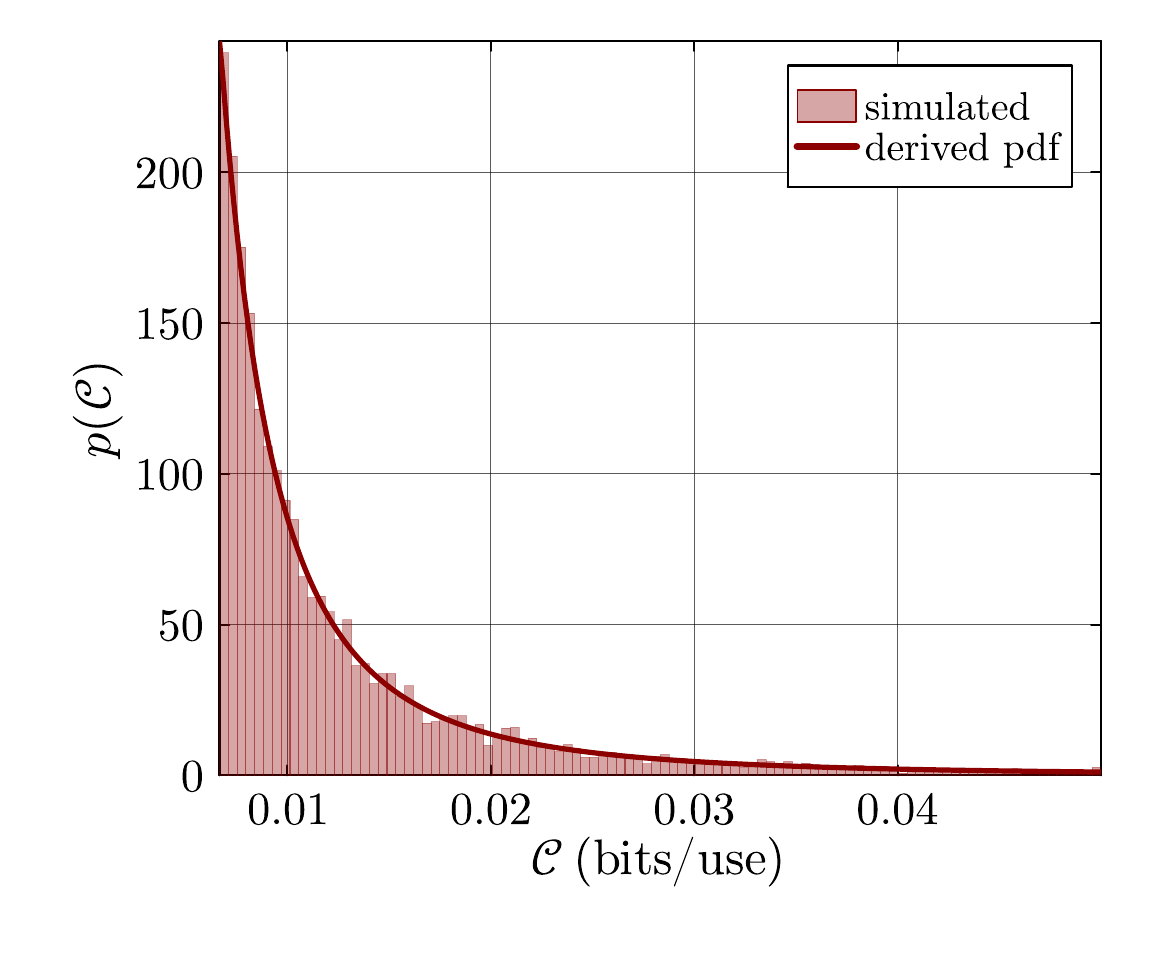}\label{fig:router_capacity_pdf_mobile}}
    \end{tabular}
    \caption{\textbf{Router-centered star networks}. 
    \SP{The left column shows the mean network capacity bound as a function of the router radius $R$ for (a) fiber-connected devices, (c) fixed free-space devices, and (e) mobile free-space devices. Since a router-centered star network has no independent nodal-density parameter, its geometry is characterized solely by $R$. Solid curves show the mean network capacity bound $\langle\mathcal{C} (\mathcal{N})\rangle$, obtained by numerical integration of Eq.~\eqref{eq:net_mean_integral}, for lossy channels (black) and thermal-loss channels (red). The free-space channels are modeled using the parameters specified in \cref{table:Setups} while for fibers we assume $\bar{n}=1$.
    Dotted curves show the corresponding single-link capacity evaluated at the maximum link distance $R$. The shaded regions indicate the values of $R$ for which the thermal-loss capacity bound remains strictly positive, i.e., for which Eq.~\eqref{eq:thermchannelbound} [Eq.~\eqref{FSplobTHshort} for mobile] is $>0$.
    Numerically, the shaded regions end at $\approx100$~km for fiber, $\approx9.01$~km for fixed free-space, and $\approx0.30$~km for mobile free-space. The right column shows the end-to-end capacity distributions for router-centered star networks with thermal-loss links and containing $10^4$ end users, for the cases of: (b) fiber with $R=50~\mathrm{km}$, (d) fixed free-space with $R=2~\mathrm{km}$, and (f) mobile free-space with $R=50~\mathrm{m}$ (all these distances are within the thermal non-zero regions). Histograms represent simulated points, while solid curves show the probability density functions derived from \cref{prop:rand_var}.} 
}
    \label{fig:router_composite}
\end{figure*}

\begin{figure*}[h!]
    \centering
    \setlength{\tabcolsep}{0pt}
    \begin{tabular}{cc}
    \subfloat[mobile]{\includegraphics[width=0.42\linewidth,trim=0pt 27pt 0pt 3pt,clip]{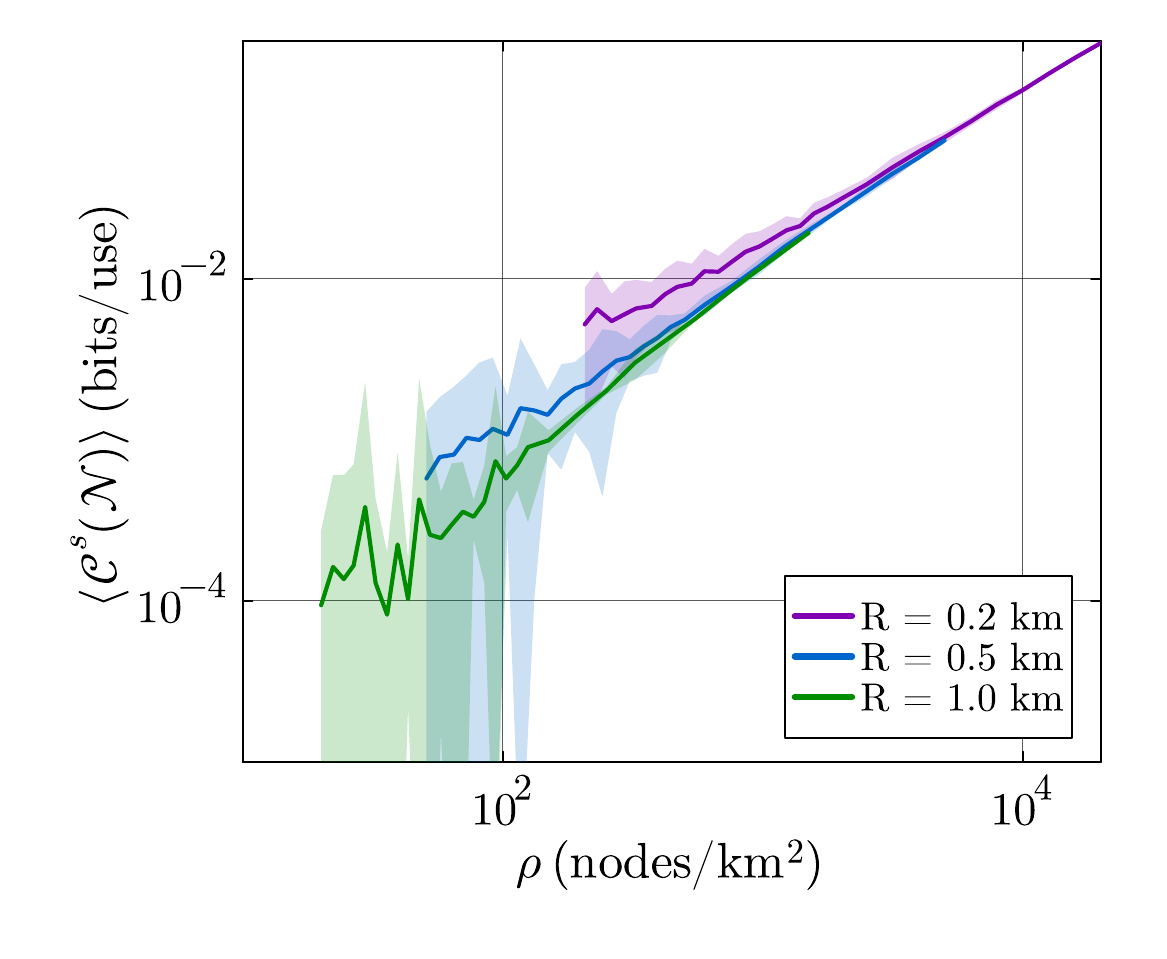}\label{fig:baseline_wax}} &
    \subfloat[fixed]{\includegraphics[width=0.42\linewidth,trim=0pt 27pt 0pt 3pt,clip]{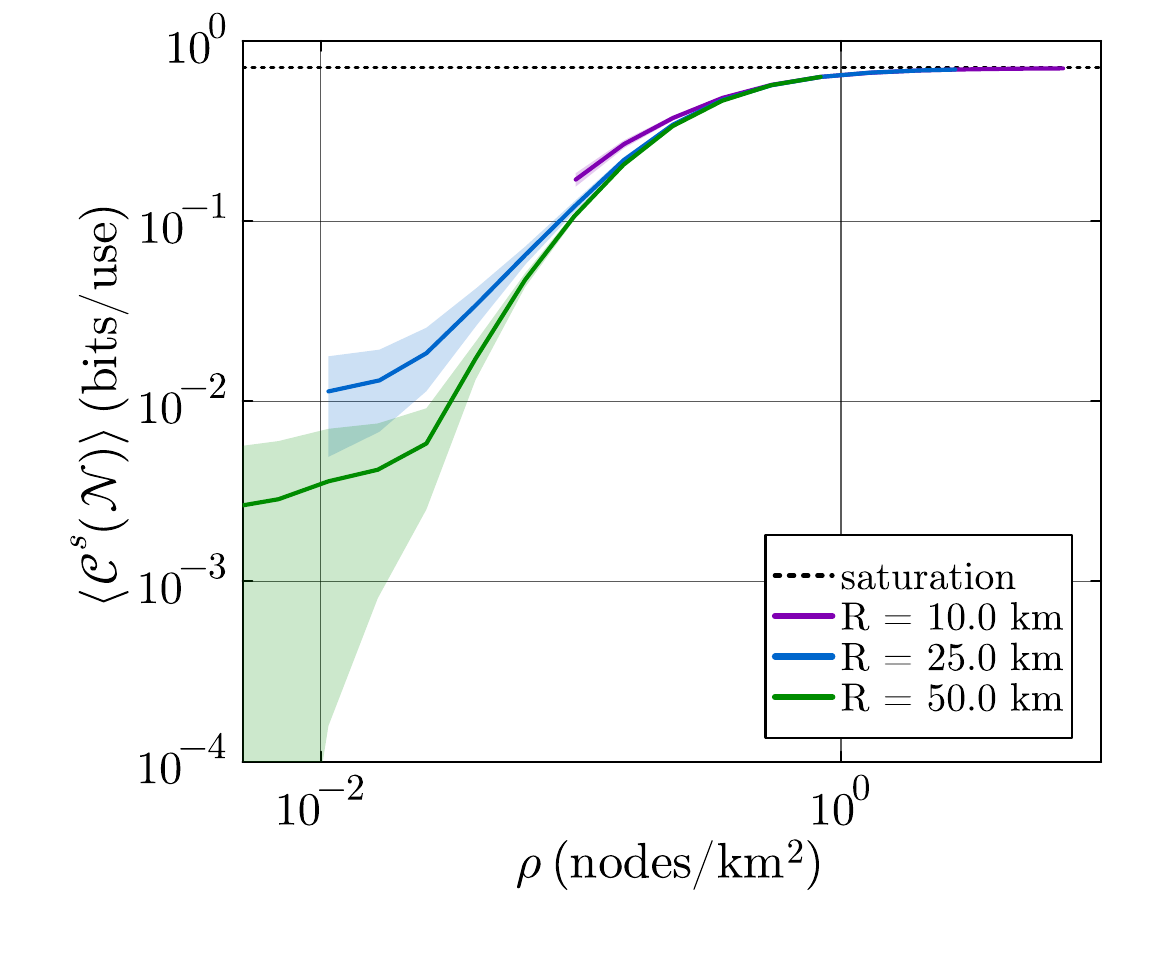}\label{fig:baseline_wax_fixed_cap}} \\[-6pt]
    \subfloat[mobile]{\includegraphics[width=0.42\linewidth,trim=0pt 27pt 0pt 3pt,clip]{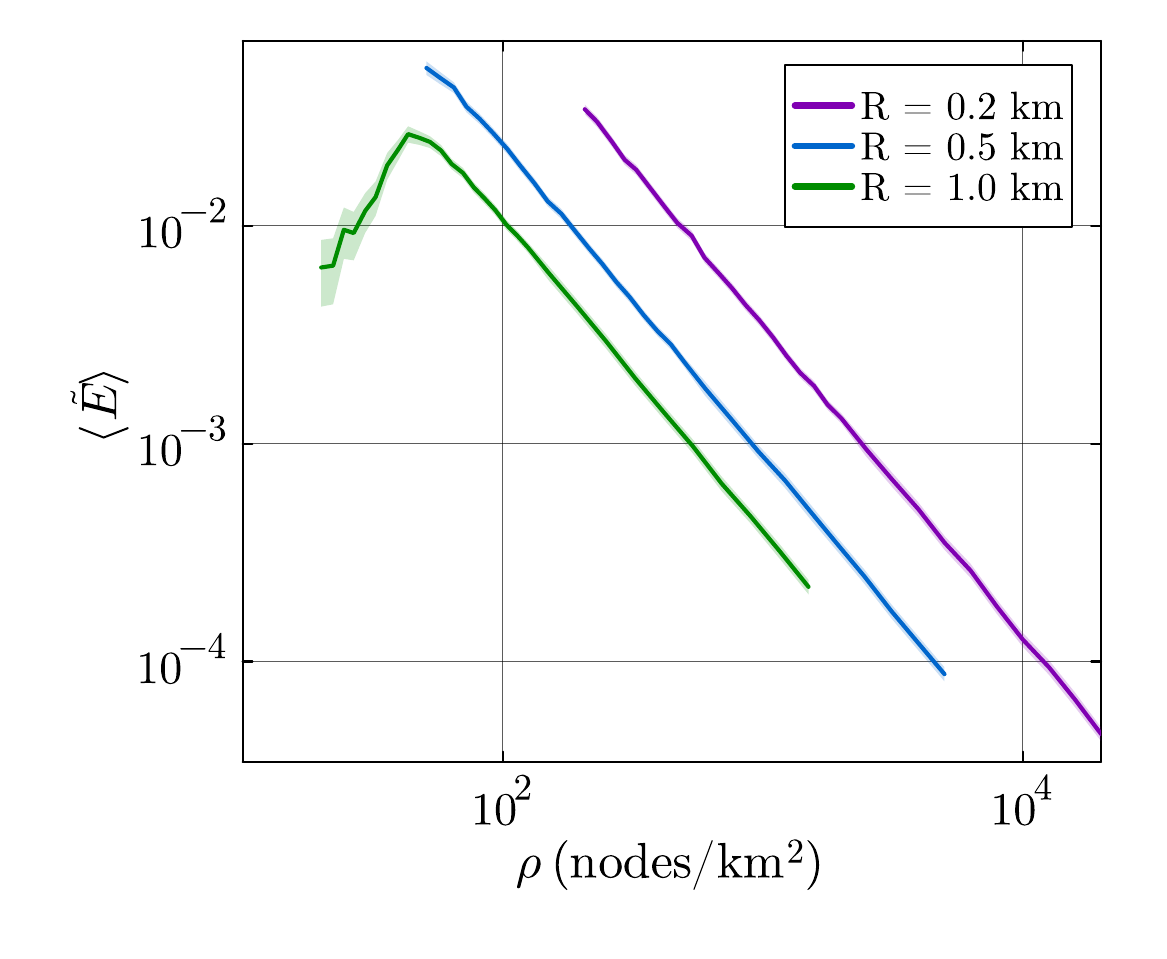}\label{fig:baseline_wax_mobile_cons}} &
    \subfloat[fixed]{\includegraphics[width=0.42\linewidth,trim=0pt 27pt 0pt 3pt,clip]{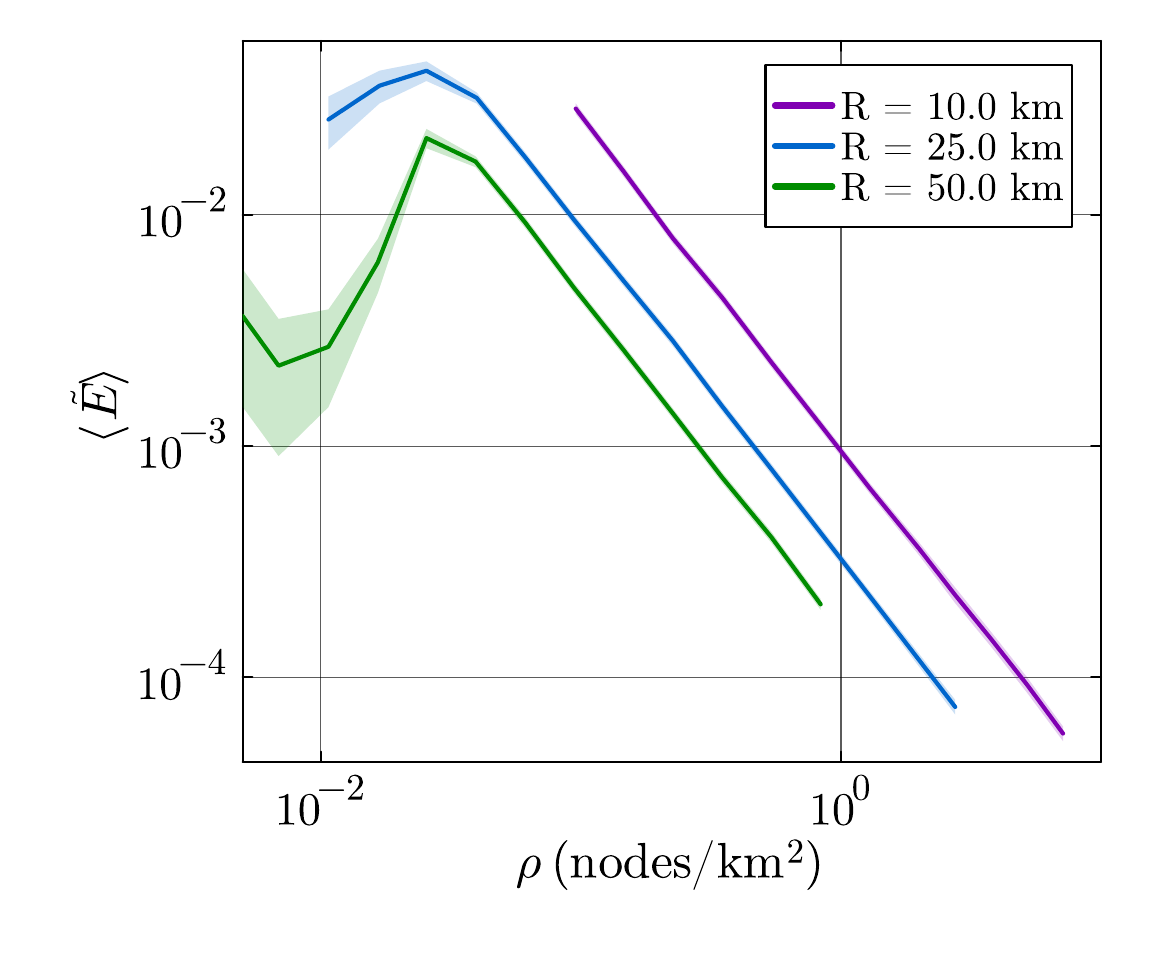}\label{fig:baseline_wax_fixed_cons}} \\[-6pt]
    \subfloat[mobile]{\includegraphics[width=0.42\linewidth,trim=0pt 27pt 0pt 3pt,clip]{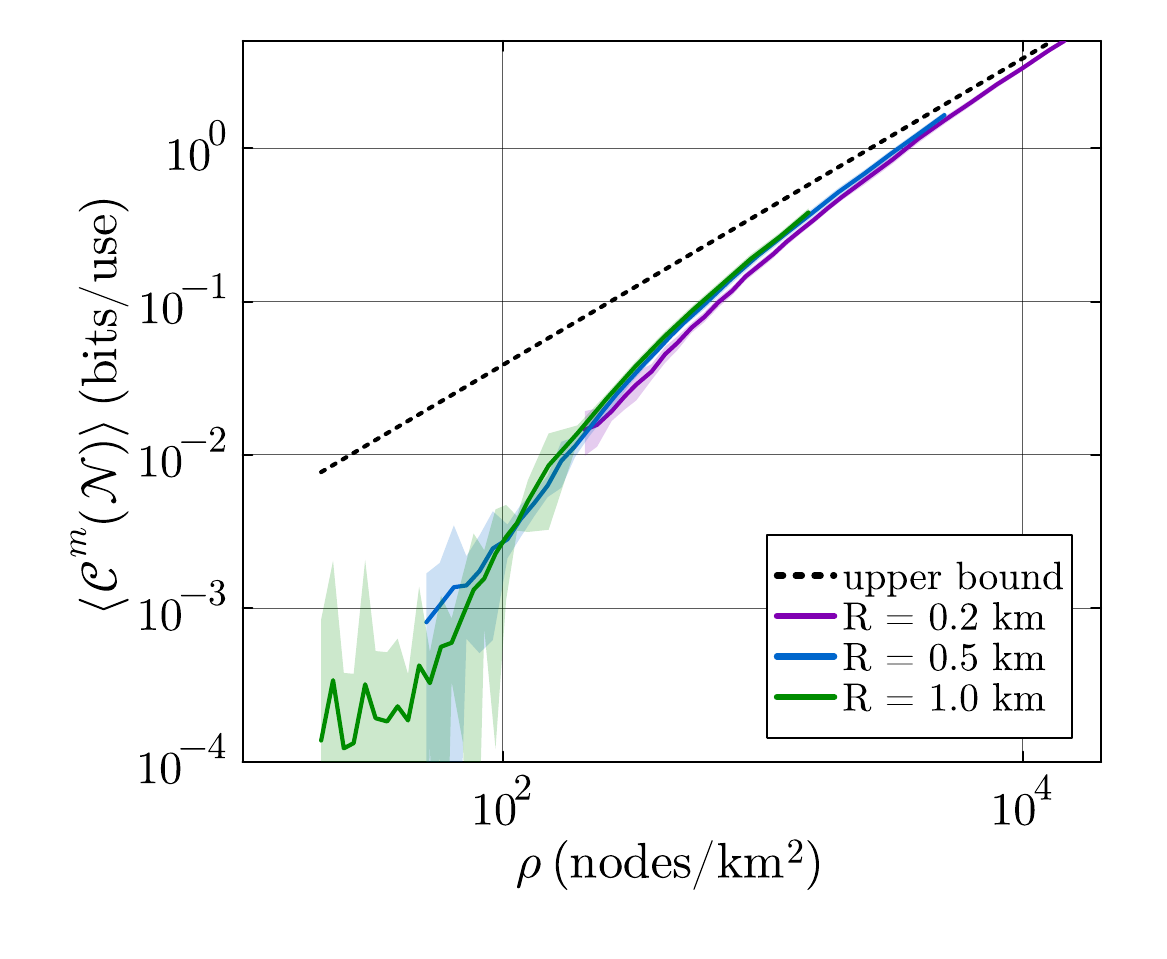}\label{fig:baseline_wax_mobile_mp}} &
    \subfloat[fixed]{\includegraphics[width=0.42\linewidth,trim=0pt 27pt 0pt 3pt,clip]{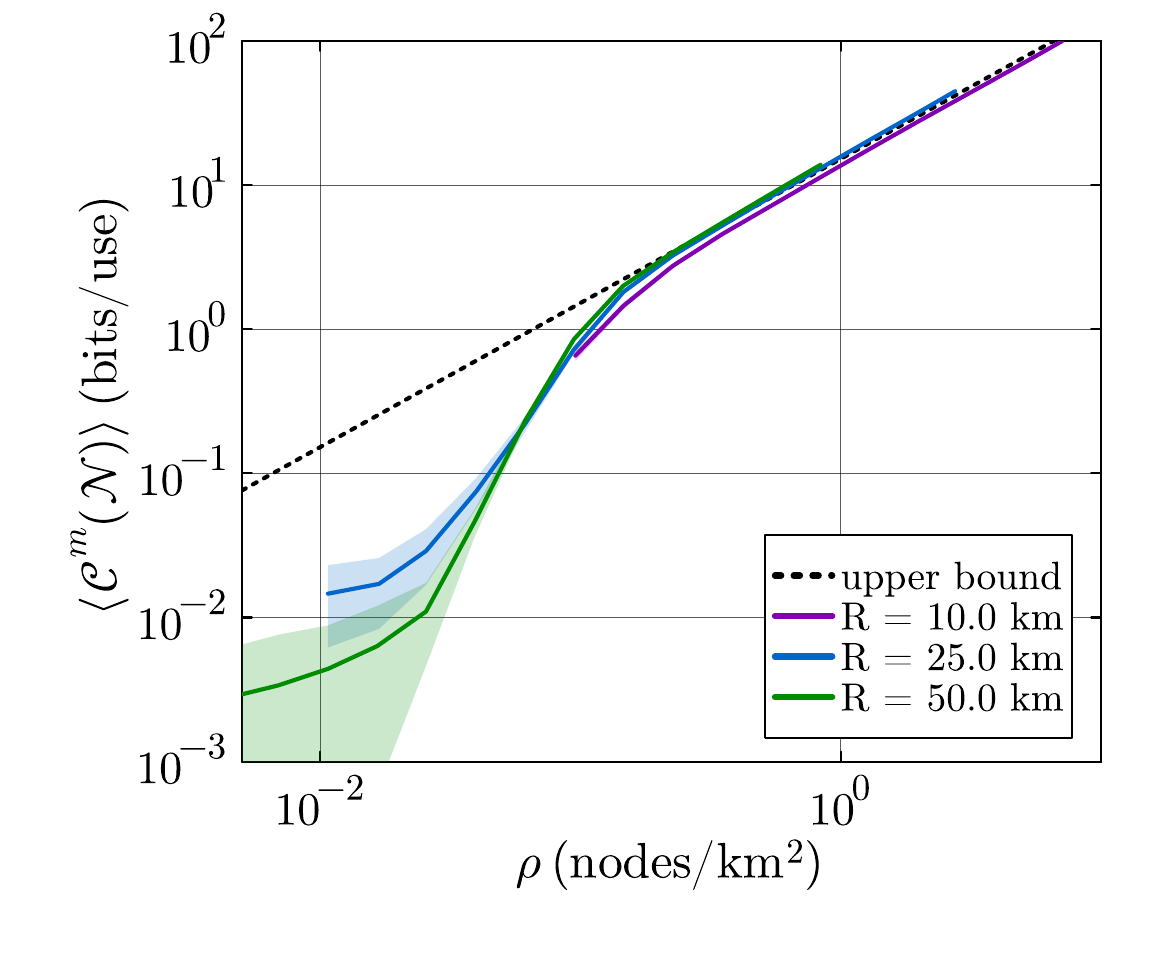}\label{fig:baseline_wax_fixed_mp}}
    \end{tabular}
    \caption{\textbf{Free-space Waxman networks.} Left column: mobile users. Right column: fixed users. (a) Mobile single-path capacity. (b) Fixed single-path capacity. (c) Mobile single-path edge consumption. (d) Fixed single-path edge consumption. (e) Mobile multi-path capacity. (f) Fixed multi-path capacity. Shaded bands show $\pm1$ standard error of the mean across network instances. In the capacity panel (b), the black dotted line is the saturation value: the maximal fixed free-space 
    capacity (computed at zero distance between two nodes). In the multi-path panels (e,f) the dotted line is the flooding common upper bound obtained from end-user neighbourhood cuts [see ~\cref{eq:flooding_upper_bound}]. The free-space channels are modeled as in \cref{table:Setups}.}
    \label{fig:baseline_wax_composite}
\end{figure*}

\subsection{Waxman Networks}\label{subsec:wax_net_baseline}
Let us now present the results for free-space Waxman quantum networks in \cref{fig:baseline_wax,fig:baseline_wax_mobile_cons,fig:baseline_wax_mobile_mp,fig:baseline_wax_fixed_cap,fig:baseline_wax_fixed_cons,fig:baseline_wax_fixed_mp} for various nodal densities $\rho$ (nodes/km$^{2}$). The results are taken from standard Waxman models with the parameters $\alpha=\beta=\eta_0=1$ and we choose to select the value of $r_0$ as the distance such that $\mathcal{C}(r_0)=0.1~$bits per use. 
\SP{In this study, we assume thermal-loss channels, so the capacity bounds are computed starting from the thermal-loss PLOB bound. More precisely, in the case of fixed (long-range) free-space devices, the link rates are given by Eq.~\eqref{eq:thermchannelbound} combined with Eqs.~\eqref{eq:noise} and~\eqref{etalongrange}, while for mobile (short-range) free-space devices, the link rates are based on the formula in Eq.~\eqref{FSplobTHshort}. Single-path and multi-path end to end capacity bounds are then computed accordingly.}

At high densities the single-path capacity bounds for the different sized regions converge to one another. In the fixed user case, the network capacity bounds approach the single-path upper bound, which is the free-space capacity bound between two fixed devices separated by zero distance (with other parameters as in Table~\ref{table:Setups}). In the mobile user case, the single-path network performance is far below the ultimate single-path upper bound but the network single-path capacity bounds still converge, albeit to a still increasing function. 

The multi-path capacity bounds show similar convergence to a common upper bound. In this case the common upper bound corresponds to the case in which the minimum cut becomes a neighbourhood cut around either $\boldsymbol{a}$ or $\boldsymbol{b}$. In fact it is relatively easy to see that this must occur asymptotically almost surely with high density. Thus the high density performance of flooding is dependent only on the nodal density and the capacities of the network channels. It was shown in Ref.~\cite{QuntaoRandQNets} that the multi-path capacity of fiber Waxman random quantum networks may be upper bounded by a linear relationship with the nodal density. The bound arises and is asymptotically tight due to the fact that the minimum capacity cut between two users $\boldsymbol{a}$ and $\boldsymbol{b}$ is asymptotically almost surely a neighbourhood cut. The expected value of such a cut can be approximated by~\footnote{This formula, proven for a pure-loss channel in Ref.~\cite{QuntaoRandQNets}, can be readily extended to a thermal-loss channel.}:
\begin{equation}
\langle\mathcal{C}(\mathcal{N})\rangle\simeq2\pi \rho \int_0^\infty r~\exp\!\left(-\left(\frac{r}{\alpha r_0}\right)^{\eta_0}\right) \mathcal{C}(r) ~dr,
\end{equation}
and therefore $\langle \mathcal{C}(\mathcal{N})\rangle \sim \xi \rho$ where:
\begin{equation}
\xi=2\pi  \int_0^\infty r~\exp\!\left(-\left(\frac{r}{\alpha r_0}\right)^\eta\right) \mathcal{C}(r) ~dr.
\label{eq:flooding_upper_bound}
\end{equation}

At lower densities the characteristic phase transition in the multi-path network capacity may be observed. The capacity rapidly increases from a very small value to approaching the common upper bound. This phase transition mirrors that of the well-known phase transitions for the connectivity of Waxman networks.

The edge consumption is presented only for the single-path routing since flooding by definition utilises every network edge. The consumption plots show an interesting feature at low densities, which is easiest seen in \cref{fig:baseline_wax_fixed_cons}. The consumption falls off, then rebounds before decaying again. This behaviour too corresponds to the phase transitions in Waxman networks~\footnote{At very low densities, the network has very sparse network construction and nodes are mostly isolated. Any two end-user nodes are therefore disconnected with high probability and unable to communicate. These instances contribute nothing to the edge consumption. On the occasion that a pair of end-user are connected this will generally correspond to short paths and direct connections. Nonetheless this will consume a relatively large number of network edges in comparison to the small overall size of the network. As the density increases, the network mostly stays disconnected so that the path length remains small but the number of overall edges increases, resulting in decreased consumption when two end-users are able to connect. At a critical density the network undergoes a phase transition forming a giant component with high probability. This is a connected component of the graph which contains a large fraction of the overall nodes. This transition causes the consumption to increase as many more pairs of nodes are connected to via longer paths. After the phase transition the edge consumption peaks. At this point, the straight line on the logarithmic plot is established and is indicative of a power law relationship. Since the number of edges in a network scales by at most $\mathcal{O}(n^2)$ and assuming then that the number of edges used in a path should scale in $n$ at some lower rate then we should get a power law relationship where $\langle \tilde{E} \rangle \sim k\rho^{-m}$ $m \in (0,2)$ for large densities. Across both mobile and fixed configurations, we find similar values of $m$, consistent with this expected range.}.

\section{Optimal Backbone Design}
\label{sec:optimal_bb_design}
We are now able to introduce fiber-based backbone structures to the free-space networks in order to improve their overall performance. Large scale network structures in the current classical internet are very frequently hierarchical. Rather than one, enormous, global network, we have a `network of networks'; in such a structure, a backbone is a core part of a network, comprised of dedicated network devices, facilitating the connection of multiple smaller networks. In our case, the backbones we shall consider will consist of a small number of additional nodes interconnected via fiber-based links. 

Voronoi-based backbone topology design was previously studied for quantum key distribution networks in Ref.~\cite{Alleaume2009TopologicalOptQKD}, which modelled a QKD backbone network as the Delaunay graph dual to a Voronoi partition of the served nodes and optimized the partition to minimize deployment cost subject to a coverage constraint. Our approach differs in objective and setting: we optimize the Voronoi generators directly for end-to-end capacity, approximating the capacity-maximizing tessellation by a centroidal Voronoi tessellation obtained via Lloyd's algorithm, and we couple the resulting fiber backbone to a random free-space access network rather than treating the full topology as fixed infrastructure to be costed. 

Let us denote this network as $\mathcal{N}_b=(V_b,E_b)$. We will consider constraining the number of backbone nodes we introduce and ask that the number of such nodes should be small in comparison to the number of nodes in the overall network. Since these nodes will act as repeaters and routers in the network this is a reasonable assumption, since the required quantum memories will be an expensive technology particularly in comparison to the links which will be comprised of comparatively cheap and straightforward optical fiber.

The idea of using a backbone to assist in the communication and improve network performance is relatively straightforward. However, the notion of optimal performance is not so clear-cut and many definitions are possible. Let us therefore discuss what it should mean for a backbone network to be optimal before putting together a suitable definition. Firstly, we are working in the scenario where the positions of the nodes and the corresponding edges are not known \textit{a priori}. This is of course appropriate in the mobile case where the positions of the nodes are by definition not fixed. In the fixed user case, the interpretation is slightly less straightforward, nonetheless it provides an average over the network class and should be accurate for a large number of nodes, which is well approximated by an even distribution of nodes across the region. Therefore whatever notion of optimality we settle on it should be calculable from the distribution of backbone nodes alone. 

We have seen that the capacities of free-space quantum channels, whether in the mobile or fixed user scenario, are substantially less than a fiber-based channel of equal length. Thus, so long as the fiber-based backbone channels do not cover excessively long distances, any capacity limiting bottlenecks will occur in the free-space network layer. For example in the multi-path case, the minimum network cut is highly likely to be comprised solely of free-space edges. Similarly in the single-path case, a free-space edge will almost certainly be the rate-limiting bottleneck. It is clear that the distance between two end users affects the end-to-end capacity. Thus, if we reduce the average distance between the nodes then this should result in increased end-to-end performance. In effect the fiber-based links act as a shortcut through the network space reducing the effective distance by which any two end-users are separated. To this end we can introduce the following two definitions:

\begin{definition}[Backbone Distance]
The backbone distance $d_b(\boldsymbol{x})$ for a node $\boldsymbol{x}$ is the distance from that node to the nearest backbone node.
\begin{equation}
d_b(\boldsymbol{x})\defeq\underset{b_i\in V_b}{\min} ~|\vec{r}_{\boldsymbol{x}}-\vec{r}_{b_i} |.
\end{equation}
\end{definition}

\begin{definition}[Effective Separation]
The effective separation, $d_{\mathrm{eff}}(\boldsymbol{x},\boldsymbol{y})$, between two nodes $\boldsymbol{x}$ and $\boldsymbol{y}$ is defined by the following expression:
\begin{equation}
d_{\mathrm{eff}}(\boldsymbol{x},\boldsymbol{y})\defeq \min\big(|\vec{r}_{\boldsymbol{x}}-\vec{r}_{\boldsymbol{y}}|,d_{b}(\boldsymbol{x})+d_b(\boldsymbol{y})\big) .
\end{equation}
\end{definition}

The effective separation captures this notion of providing a shortcut through the spatial region. If two nodes are already located very close together then the existence of the backbone should not affect the end-to-end performance between them. However if the nodes are located further apart they will instead make use of the backbone. In this case we should expect that the performance is equivalent to two nodes in a network without a backbone, separated by a distance equal to the effective separation. We may therefore expect that minimizing the average effective separation between end-user nodes is a useful heuristic for optimising the positions of the backbone nodes. In this case we seek to minimize the following integral:
\begin{equation}
\int_{\mathcal{R}} \int_{\mathcal{R}}~ d_{\text{eff}}(\boldsymbol{x},\boldsymbol{y}) ~ d \vec{r}_{\boldsymbol{x}}~ d \vec{r}_{\boldsymbol{y}}.
\end{equation}

However, this is still difficult to optimize. We simplify the problem by considering only the minimization of the backbone distance~\footnote{This has the additional advantage of also being immediately applicable to the case of router-centered star networks where all end-user nodes are connected directly to the backbone node.}. We can improve the optimization by introducing a weighting according to the capacity function. In this case we can replace minimizing the backbone distance with maximizing the following expression: 
\begin{equation}
\sum_{i=1}^k\int_{\mathcal{R}_i} \mathcal{C}(|\vec{r}-\vec{r}_{b_i}|) ~ d\vec{r},
\label{eq:heuristic int}
\end{equation}
where we have made the simplification of introducing disjoint regions $\mathcal{R}_i$ which tessellate the overall region, each of which correspond to the region in which end user nodes will connect to a particular router node. This leads us to the following definition of an optimal backbone network.

\begin{definition}[Optimal backbone network]
\label{def:opt_bb}
A backbone network $\mathcal{N}_B$ comprised of $k$ nodes with positions $\{r_{b_i}\}$ and associated disjoint regions $\{\mathcal{R}_i\}$ is described as being optimal if the functional $\mathcal{F}$ defined as:
\begin{equation}
\mathcal{F}(\{\vec{r}_{b_i}\},\{\mathcal{R}_i\})=\sum_{i=1}^k\int_{\mathcal{R}_i} \mathcal{C}\big(|\vec{r}-\vec{r}_{b_i}|\big) \ d\vec{r},
\end{equation}
achieves its maximal value.
\end{definition}

\subsection{Voronoi Tessellations}
Before we proceed further, let us introduce two connected notions that will aid in the optimization of backbone structures. Given a set of backbone nodes with positions $\{\vec{r}_{b_i}\}$, the corresponding \textit{Voronoi diagram} is a partitioning of the underlying region $\mathcal{R}_{\mathcal{N}}$ into a set of convex regions $\mathcal{R}_{b_i}$ each of which contain all the points which lie closer to that backbone node than any other. Formally we write:
\begin{equation}
\mathcal{V}_i=\{ \vec{r} \in \mathcal{R}_{\mathcal{N}} ~ | ~ |\vec{r}-\vec{r}_{b_i}| \leq |\vec{r}-\vec{r}_{b_j}| \ \forall j\neq i\}.
\end{equation}
Given the same set of backbone nodes, the corresponding \textit{Delaunay triangulation} is the dual graph to the Voronoi diagram. Two backbone nodes are connected by an edge if their Voronoi cells share a boundary. The two notions are illustrated in \cref{fig:Vor+Del}. We can see that maximizing the expression in \cref{eq:heuristic int} is therefore equivalent to the definition in \cref{def:opt_bb} when the regions $\{\mathcal{R}_i\}$ are the Voronoi regions $\{\mathcal{V}_i\}$. 
\begin{figure*}[t]
    \centering
    \begin{tabular}{ccc}
\subfloat[Voronoi Diagram]{\includegraphics[width=0.3\linewidth]{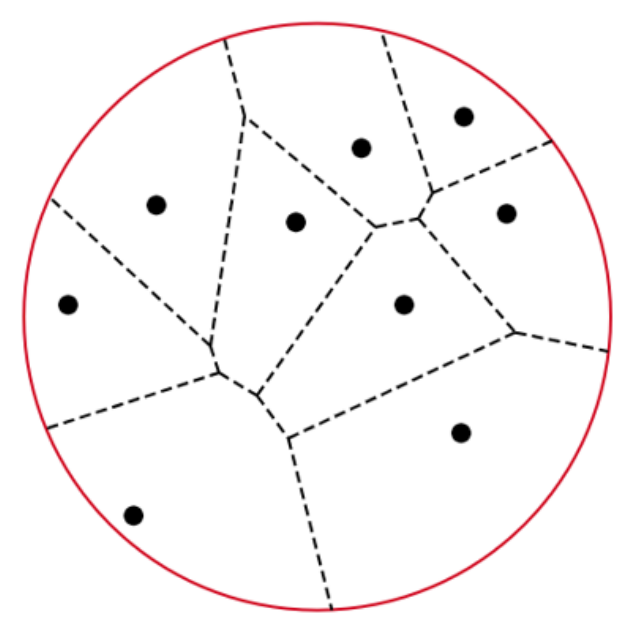}}\hspace{0.05\linewidth}
\subfloat[Dual Objects]{\includegraphics[width=0.3\linewidth]{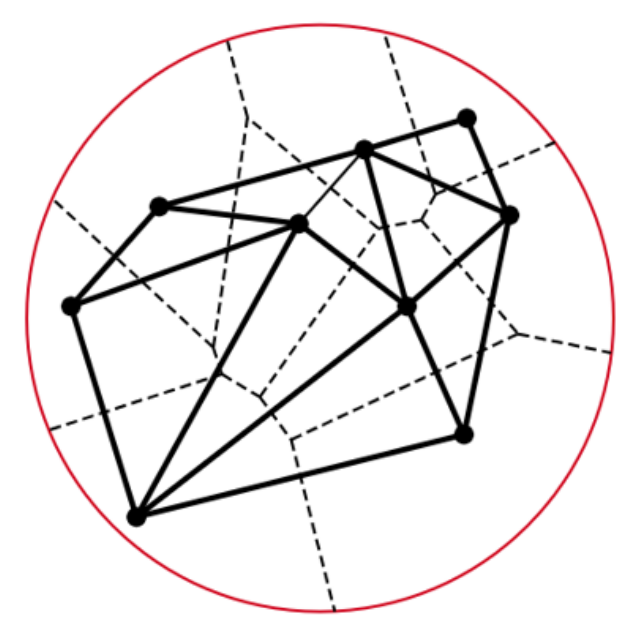}}\hspace{0.05\linewidth}
\subfloat[Delaunay Triangulation]{\includegraphics[width=0.3\linewidth]{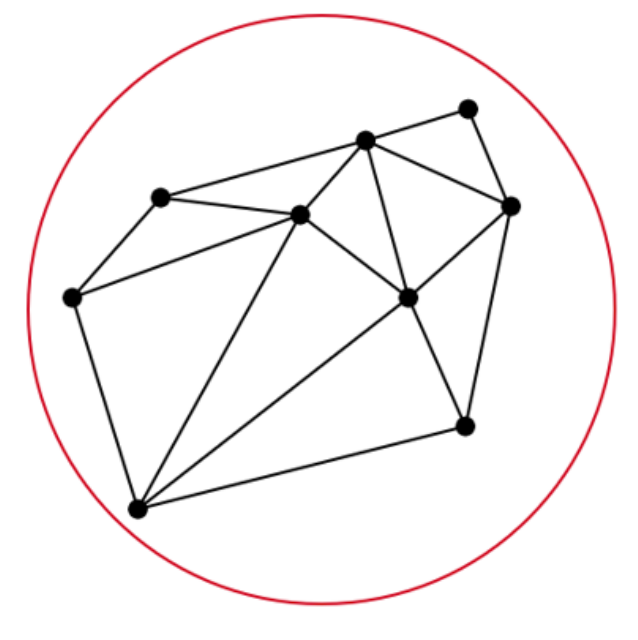}}
  \end{tabular}
    \caption{\textbf{Voronoi Diagram and Delaunay Triangulation.} A circular region is depicted with backbone node plotted as black points. The corresponding Voronoi diagram to these points is shown in (a). In (c) the Delaunay triangulation of the points is shown. The duality between these objects is shown in (b). Points in neighbouring Voronoi regions are connected by an edge in the triangulation.}
    \label{fig:Vor+Del}
\end{figure*}

Let us also now introduce the following two notions which describe notions of the centre of a region. We may define the centroid $\bar{\vec{r}}$ of a region with uniform density $\mathcal{R}\subset \mathbb{R}^2~$\footnote{Equivalent notions should hold in higher dimensions, but since we are interested in ground-based quantum networks, we restrict ourselves to the 2-dimensional case.} as:
\begin{equation}
\bar{\vec{r}}\defeq \underset{\vec{r_0}}{\argmin} \int_{\mathcal{R}} ~|\vec{r}-\vec{r}_0|^2 ~d\vec{r} .
\label{eq:centroid}
\end{equation}
The special case of a Voronoi tessellation where all the corresponding points are placed at the centroids of their Voronoi cells is called a \textit{centroidal Voronoi tessellation} (CVT). In the limit of an infinite two-dimensional region, the energy-minimizing CVT has every Voronoi region a regular hexagon \cite{Gersho_Conj,Hexagon_theorem} and the backbone points arranged on a triangular lattice. We can also define the capacity maximizing point for the same region and a capacity $\mathcal{C}$ as:
\begin{equation}
\vec{r}_{*} \defeq \underset{\vec{r}_b}{\argmax}\int_{\mathcal{R}} \mathcal{C}\big(|\vec{r}-\vec{r}_{b}|\big) ~d\vec{r}.
\label{eq:cap_maximizing_point}
\end{equation}
Using these definitions we are now able to give the following result which describes the necessary structure of an optimal backbone network.

\begin{proposition}
A necessary condition for a given backbone network $\mathcal{N}_b$ with a positive integer $k$ number of backbone nodes with positions $\{\vec{r}_{b_i}\}$ and associated regions $\{\mathcal{R}_i\}$ to be an optimal backbone network as defined in \cref{def:opt_bb} is that the regions are the Voronoi regions corresponding to the backbone nodes and simultaneously the backbone nodes are located at the capacity maximizing point of each region.
\label{prop:backbone}
\end{proposition}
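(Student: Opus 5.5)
The plan is to argue by contradiction in two separate steps, each a local improvement argument in the style used for Lloyd-type quantization. Suppose $(\{\vec{r}_{b_i}\},\{\mathcal{R}_i\})$ maximizes $\mathcal{F}$ in \cref{def:opt_bb}. We show that (i) for fixed node positions the partition must be the Voronoi partition, and (ii) for fixed partition each node must sit at the capacity maximizing point $\vec{r}_*$ of its region, as in \cref{eq:cap_maximizing_point}. Since both must hold at a maximizer, both hold simultaneously. Throughout we use that $\mathcal{C}$ is a non-increasing function of distance, as discussed in \cref{subsec:router}.

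\textbf{Step (ii), nodes at fixed regions.} With $\{\mathcal{R}_i\}$ fixed, $\mathcal{F}$ decouples into a sum of $k$ independent terms $\int_{\mathcal{R}_i}\mathcal{C}(|\vec{r}-\vec{r}_{b_i}|)\,d\vec{r}$, each depending only on $\vec{r}_{b_i}$. If some $\vec{r}_{b_i}$ is not a maximizer of its term, replace it by $\vec{r}_*$ for $\mathcal{R}_i$. This strictly increases that term and leaves the others unchanged, contradicting optimality. Hence each $\vec{r}_{b_i}$ is a capacity maximizing point of $\mathcal{R}_i$. If the argmax is not unique, the statement is read as ``lies in the argmax set.'' The continuity of $\mathcal{C}$ on the compact region $\mathcal{R}$ guarantees that the maximum is attained.

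\textbf{Step (i), regions at fixed nodes.} With the nodes fixed, write $\mathcal{F}=\int_{\mathcal{R}}\mathcal{C}(|\vec{r}-\vec{r}_{b_{\iota(\vec{r})}}|)\,d\vec{r}$, where $\iota(\vec{r})$ is the index of the region containing $\vec{r}$. This is pointwise at most $\int_{\mathcal{R}}\max_i \mathcal{C}(|\vec{r}-\vec{r}_{b_i}|)\,d\vec{r}$. Because $\mathcal{C}$ is decreasing in distance, the pointwise maximum is attained by assigning $\vec{r}$ to its nearest backbone node, i.e.\ by the Voronoi cells $\mathcal{V}_i$. So the Voronoi partition achieves the upper bound. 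Conversely, suppose a partition differs from the Voronoi one on a set of positive measure, meaning a set $S\subset\mathcal{R}_j$ whose points are strictly closer to some $\vec{r}_{b_i}$ with $i\neq j$. Reassigning $S$ to $\mathcal{R}_i$ then strictly increases $\mathcal{F}$, provided the capacity is strictly decreasing there.

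The main obstacle is this strictness. Where $\mathcal{C}$ vanishes, as for thermal-loss channels beyond the zero-capacity distance, or at ties on cell boundaries, reassignment does not strictly improve $\mathcal{F}$. The conclusion therefore holds only up to sets of measure zero, or up to regions where both candidate capacities are zero. I would handle this by stating the result modulo null sets and modulo the zero-capacity region, where the assignment is irrelevant to $\mathcal{F}$. Equivalently, one can restrict to the strictly monotone regime, as in the discussion following \cref{prop:rand_var}. With that caveat, combining (i) and (ii) gives the necessary condition, and it also motivates the alternating, Lloyd-type iteration that follows.
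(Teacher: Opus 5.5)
Your proposal is correct and follows the paper's proof essentially step for step: with the regions fixed, $\mathcal{F}$ decouples, so each node must sit at the capacity maximizing point of its region; with the nodes fixed, assigning each $\vec{r}$ to the node of maximal capacity gives the Voronoi cells, by monotonicity of $\mathcal{C}$. Your caveats about null sets (cell boundaries) and the zero-capacity regime are a slightly more careful version of the paper's remark after its proof, which says that for a merely non-increasing $\mathcal{C}$ one can only guarantee that some optimal tessellation is Voronoi.
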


\noindent For the proof, we follow a similar method to that presented in Refs.~\cite{Du1999VoroniReview,Du2010VoronoiReview} for demonstrating that centroidal Voronoi tessellations minimize the mean squared distance from a generating point. We proceed in two steps. Firstly, given any tessellation, we show that optimality requires that the points be located at the capacity maximizing point. Then we show that the tessellation must be a Voronoi tessellation.

\begin{proof}
 Consider any set of $k$ regions $\{\mathcal{R}_i\}_{i=1}^{i=k}$ which tessellate $\mathcal{R} \subset \mathbb{R}^2$ and corresponding $k$ backbone nodes $\{\vec{r}_{b_i}\in\mathcal{R}\}_{i=1}^{k}$. Then we are maximizing the expression:

\begin{equation}
\mathcal{F}(\{\vec{r}_{b_i}\}_{i=1}^k,\{\mathcal{R}_i\}_{i=1}^{k})=\sum_{i=1}^k \int_{\mathcal{R}_i} \mathcal{C} \big(|\vec{r}-\vec{r}_{b_i}|\big) ~d\vec{r}.
\end{equation}
Each term in the sum may independently be maximized by finding the optimal position for $b_i$. Clearly this is given by:

\begin{equation}
\vec{r}_{b_i}=\underset{\vec{r}_b}{\argmax} \int_{\mathcal{R}_i} \mathcal{C}\big(|\vec{r}-\vec{r}_{b}|\big) ~d\vec{r},
\end{equation}
which is the definition of the capacity maximizing point. It remains to show that the tessellation should be a Voronoi tessellation. Clearly the functional is maximized if we assign each point to the backbone node with which it has the maximum capacity. Each of the corresponding regions can therefore be given by:
\begin{equation}
\mathcal{R}_i=\big\{\vec{r} \in \mathcal{R} \ \big| \ \mathcal{C}(|\vec{r}-\vec{r_{b_i}}|)\geq \mathcal{C}(|\vec{r}-\vec{r_{b_j}}|) \ \forall j\neq i\big\},
\end{equation}
then since $\mathcal{C}$ is strictly decreasing, this is equivalent to:
\begin{equation}
\mathcal{R}_i=\big\{\vec{r}\in \mathcal{R}  \ \big| |\vec{r}-\vec{r_{b_i}}| \leq |\vec{r}-\vec{r_{b_j}}|  \ \forall j\neq i \ \big\},
\end{equation}
which is the definition of a Voronoi tessellation.
\end{proof}

We will therefore be concerned with constructing backbone networks whose nodal positions form a capacity-maximizing Voronoi tessellation. It is important to note that the preceding proposition extends to the weaker case in which $\mathcal{C}$ is merely non-increasing. In this case, the optimal tessellation need not be unique, since points lying in regions where $\mathcal{C}$ is constant may be assigned to different backbone nodes without changing the value of the objective. Nevertheless, there always exists an optimal tessellation that is Voronoi.

This observation applies in particular to the thermal loss channels. For these channels, $\mathcal{C}$ is strictly decreasing up to a distance $r'$, beyond which it is identically zero. The simple case is when $r'>d$, where $d$ is the diameter of the region $\mathcal{R}$. Since all distances within $\mathcal{R}$ are then less than $r'$, $\mathcal{C}$ is strictly decreasing over the entire range of relevant distances, and the conditions of~\cref{prop:backbone} are satisfied. 
More generally, even when the constant-capacity regime is reached within $\mathcal{R}$, the generalized result for non-increasing capacity functions guarantees the existence of a capacity-maximizing Voronoi tessellation.

\begin{figure*}[tb]
\centering
\begin{tabular}{cc}
\subfloat[Optimization Procedure]{\includegraphics[width=0.5\linewidth]{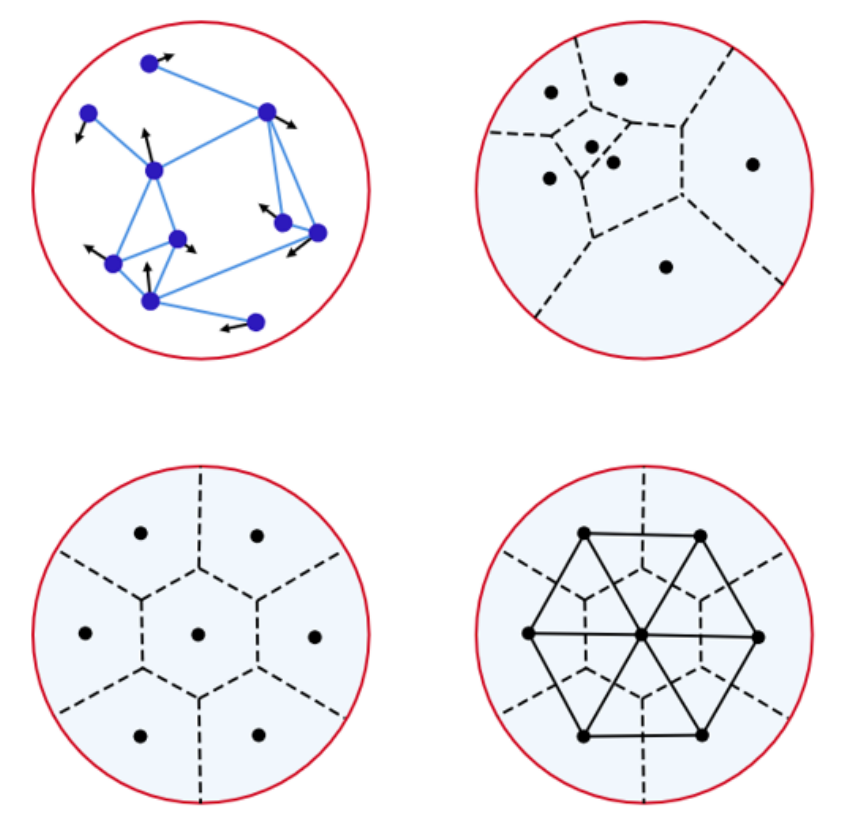}} & 
\subfloat[Example Network Instance]{\includegraphics[width=0.5\linewidth]{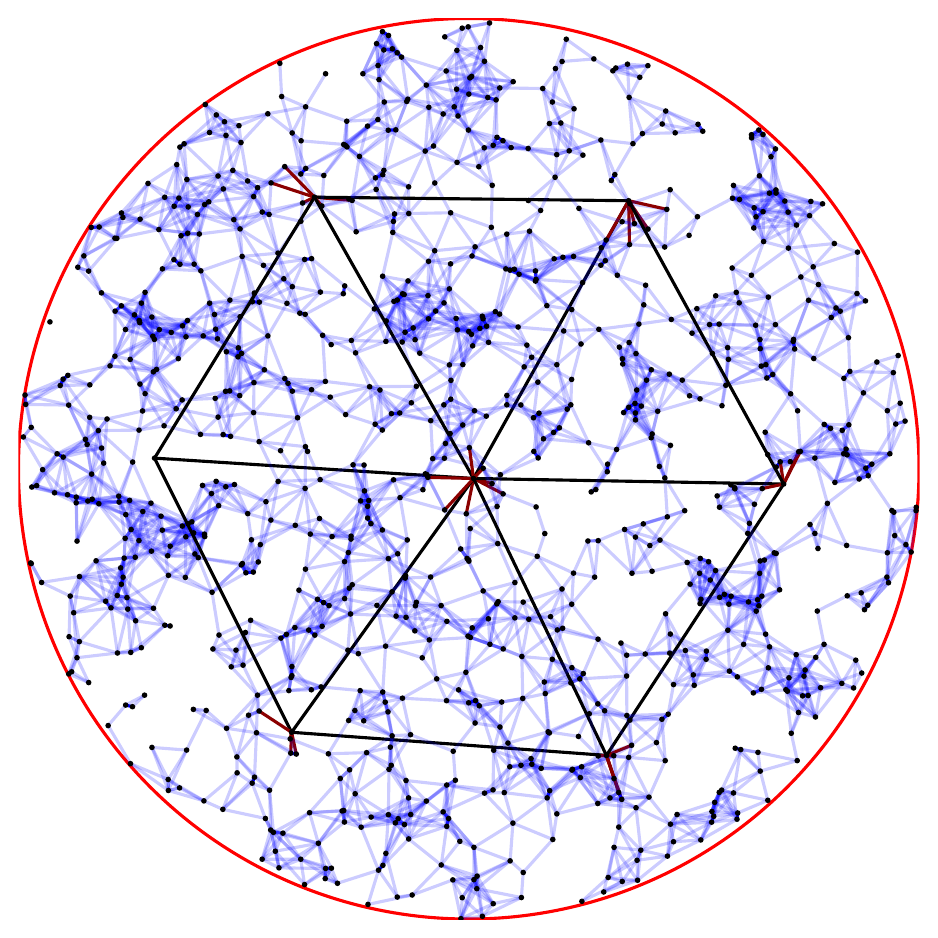}}
\end{tabular}
\caption{
\textbf{Optimization of fiber backbone for mobile user free-space quantum network}. The left panel (clockwise from top left) shows the optimization procedure for the fiber backbone. The first image shows a snapshot of the network of mobile users connecting via free-space edges; in the second image, the random positions of the user nodes are replaced with a constant density $\rho$ represented by the light blue background. A fixed number of backbone nodes (shown in black) are generated with random positions in the network region $\mathcal{R}_{\mathcal{N}}$ and the corresponding Voronoi diagram is calculated. In image three the positions of the backbone nodes are relaxed to optimal positions via Lloyd's algorithm, resulting in a centroidal Voronoi tessellation; in the final image, the backbone nodes are connected via fiber edges into a desired topology (in this case a Delaunay triangulation). An example network is shown in the right panel. Blue lines indicate free-space channels; black lines fiber backbone channels and red lines free-space channels connecting to the backbone. The plotted network shown represents the state of the network at a given time. It is comprised of a Waxman network of radius 1 km with 1000 mobile end-user nodes communicating via free-space channels (depicted in blue). The backbone consists of 7 nodes connected via fiber-based channels (depicted in black). The mobile users connect to the backbone via the same free-space channels as with each other (highlighted in red) and follow the same Waxman connectivity rule.}
\label{fig:lloyd opt}
\end{figure*}

\subsection{Computing Centroidal Voronoi Tessellations}
 Assuming that the average density of user nodes is constant across the network region $\mathcal{R}_{\mathcal{N}}$, we may optimize the position of a given number of backbone nodes using Lloyd's algorithm \cite{Lloyd's_algo}, which relaxes the positions of the backbone nodes towards an optimal configuration. The optimization procedure is illustrated in \cref{fig:lloyd opt}. Firstly, the positions of the random backbone nodes are randomly generated in the region $\mathcal{R}_{\mathcal{N}}$ and the corresponding Voronoi diagram is computed. The capacity maximizing point of each Voronoi region is computed and the corresponding backbone node is moved to the position of that point. The process is repeated until the positions of the backbone nodes converge to optimal positions and the resulting configuration is a capacity maximizing Voronoi 
tessellation. 

Nonetheless we have a problem. The capacity maximizing point is not easily calculated and involves maximizing the integral in \cref{eq:cap_maximizing_point}. This is not an issue in computing the standard centroidal Voronoi tessellation since the centroid has a nice alternative computation:
\begin{equation}
\bar{\vec{r}}=\frac{\int_{\mathcal{R}} ~\vec{r}~d\vec{r}}{\int_{\mathcal{R}} ~ d\vec{r}},
\end{equation}
which can be found by numerical integration far more quickly than by solving the optimization problem implied by the definition in \cref{eq:centroid}. Fortunately we have the following proposition, which enables us to approximate a capacity maximizing Voronoi tessellation with a centroidal Voronoi tessellation.

\begin{proposition}
Given a convex region $\mathcal{R}\subset\mathbb{R}^2$ with two lines of mirror symmetry, then there exists a unique point $\bar{\vec{r}} \in \mathcal{R}$ which maximizes (minimizes) 
\begin{equation}
\int_{\mathcal{R}}f\big(|\vec{r}-\bar{\vec{r}}|\big) d\vec{r},
\end{equation}
for any strictly decreasing (increasing) function $f$.
\label{prop:mirror_sym}
\end{proposition}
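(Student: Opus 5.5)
The plan is to reduce the statement to a pointwise comparison of disc–region overlaps, and then use a concavity argument (Brunn–Minkowski) together with the symmetry of $\mathcal{R}$. It suffices to treat strictly decreasing $f$; the increasing case follows by replacing $f$ with $-f$.

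First, I identify the candidate point. The two mirror lines are distinct and both leave the bounded convex set $\mathcal{R}$ invariant, so they must intersect. Let $\vec{c}$ be their intersection point. The reflections generate a group $G$ of isometries preserving $\mathcal{R}$. Its only common fixed point is $\vec{c}$, because the two reflections alone have no other common fixed point. Moreover, for every $\vec{p}$ the average of the orbit $G\vec{p}$ is $\vec{c}$, since the orbit average is $G$-invariant. If $G$ is infinite, i.e.\ the mirrors meet at an irrational angle, $\mathcal{R}$ is a disc and the same statement holds with Haar averaging. As a consistency check, $\vec{c}$ is also the centroid of \cref{eq:centroid}, which ties the result to Lloyd's algorithm.

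Next, I use a layer-cake decomposition. Write $f(t)=f(D)+\int_t^{D}\big(-df(s)\big)$, where $D$ is the diameter of $\mathcal{R}$ and only $t\in[0,D]$ matters. Then
\begin{equation}
g(\vec{p})\defeq\int_{\mathcal{R}} f\big(|\vec{r}-\vec{p}|\big)\,d\vec{r} = f(D)\,|\mathcal{R}| + \int_0^{D} h_s(\vec{p})\,\big(-df(s)\big),
\end{equation}
with $h_s(\vec{p})\defeq|\mathcal{R}\cap B(\vec{p},s)|$ and $-df$ a positive measure charging every interval, because $f$ is strictly decreasing. It therefore suffices to prove two things: that $h_s(\vec{p})\le h_s(\vec{c})$ for every $s$ and every $\vec{p}\in\mathcal{R}$, and that for each $\vec{p}\neq\vec{c}$ the inequality is strict on a set of $s$ of positive $(-df)$-measure.

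For the weak inequality, I consider the convex set $K_s=\{(\vec{p},\vec{x}):\vec{x}\in\mathcal{R},\,|\vec{x}-\vec{p}|\le s\}\subset\mathbb{R}^4$. Its sections over $\vec{p}$ are exactly $\mathcal{R}\cap B(\vec{p},s)$. By Brunn–Minkowski, $\vec{p}\mapsto h_s(\vec{p})^{1/2}$ is concave on its support, and this support contains $\mathcal{R}$. Since $h_s$ is $G$-invariant, Jensen's inequality over the orbit of $\vec{p}$ gives
\begin{equation}
h_s(\vec{c})^{1/2} = h_s\big(\mathrm{avg}(G\vec{p})\big)^{1/2} \ge \mathrm{avg}_{G}\, h_s(\vec{p})^{1/2} = h_s(\vec{p})^{1/2}.
\end{equation}
Integrating against $-df$ then gives $g(\vec{p})\le g(\vec{c})$, which proves that $\vec{c}$ is a maximizer.

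The main obstacle is uniqueness, i.e.\ strictness. My first attempt is to use the range of $s$ slightly above $\max_{\vec{x}\in\mathcal{R}}|\vec{x}-\vec{c}|$ and just below $\max_{\vec{x}\in\mathcal{R}}|\vec{x}-\vec{p}|$. When $\vec{p}\neq\vec{c}$ and the two maxima differ, we have $B(\vec{c},s)\supset\mathcal{R}$ while $B(\vec{p},s)$ misses a piece of $\mathcal{R}$ of positive area, so $h_s(\vec{p})<h_s(\vec{c})$ on an interval of $s$. Strict monotonicity of $f$ makes this interval carry positive $(-df)$-mass. To cover the degenerate case where the two maxima coincide, I would fall back on the equality case of Brunn–Minkowski: equality in Jensen forces the sections $\mathcal{R}\cap B(g\vec{p},s)$ to be homothetic translates of one another for all $g\in G$, for every $s$. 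Taking $s$ small with $\vec{p}$ near $\partial\mathcal{R}$, or varying $s$ over an interval, one can rule this out for $\vec{p}\neq\vec{c}$. This last step is the one I expect to require the most care.
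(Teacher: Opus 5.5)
Your proof that $\vec c$ is \emph{a} maximizer is correct, and it takes a genuinely different route from the paper. The paper uses an elementary reflection argument. For $\vec r_0$ off a mirror line $l_1$, it compares $\vec r_0$ with the foot $\vec r_l$ of the perpendicular on $l_1$. It then reflects the part of $\mathcal{R}$ beyond the perpendicular bisector $\tilde l$ of the segment $\vec r_0\vec r_l$ across $\tilde l$; the image stays in $\mathcal{R}$ by symmetry and convexity. These two pieces contribute equally to both integrals, and the leftover set has positive measure and is strictly closer to $\vec r_l$, so $I_2>I_1$. Hence any maximizer lies on both mirror lines. This gives strictness, and thus uniqueness, directly, without a layer-cake decomposition or Brunn--Minkowski. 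On the other hand, it uses the mirror lines essentially, and it tacitly relies on a maximizer existing (continuity plus compactness) to conclude that the intersection point is one. Your layer-cake/Brunn/Jensen route yields $g(\vec p)\le g(\vec c)$ outright for every $\vec p$. It only uses that the symmetry group of $\mathcal{R}$ fixes a single point, so it also covers, for instance, centrally symmetric bodies with no mirror axis. The price is that you must extract strictness separately.

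That is where your proposal has a gap. You prove $g(\vec p)<g(\vec c)$ only when $M(\vec p)\defeq\max_{\vec x\in\bar{\mathcal{R}}}|\vec x-\vec p|$ exceeds $M(\vec c)$. For the remaining case you appeal to an unspecified equality analysis in Brunn--Minkowski, and that fallback does not work as sketched. For small $s$ the equality $h_s(\vec p)=h_s(\vec c)$ genuinely holds with translated sections, so there is nothing to contradict. For example, in the rectangle $[-1,1]\times[-\tfrac12,\tfrac12]$ with $\vec p=(0.3,0)$, this holds for every $s\le 0.7$. The missing observation is that the degenerate case never arises. Take $\vec x\in\bar{\mathcal{R}}$ with $|\vec x-\vec c|=M(\vec c)$. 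Every $g\vec x$ is again a farthest point from $\vec c$, and the orbit averages to $\vec c$ (Haar average if $G$ is infinite). Hence some $g$ satisfies $(g\vec x-\vec c)\cdot(\vec p-\vec c)\le 0$, and therefore
\[
|g\vec x-\vec p|^2=M(\vec c)^2-2\,(g\vec x-\vec c)\cdot(\vec p-\vec c)+|\vec p-\vec c|^2\ \ge\ M(\vec c)^2+|\vec p-\vec c|^2 .
\]
Thus $M(\vec p)>M(\vec c)$ for every $\vec p\neq\vec c$. Your first argument then applies on $[M(\vec c),M(\vec p))$, where $h_s(\vec c)=|\mathcal{R}|>h_s(\vec p)$ and which carries positive $(-df)$-mass. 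Uniqueness follows with no equality-case analysis at all.
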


\begin{proof}
Let us briefly sketch our method. We shall show that a maximizing (minimizing) point must lie on a line of symmetry. Then two unique lines of mirror symmetry will  intersect at a single point which must lie within the convex region. This intersection point will then maximize (minimize) the integral. 

Let us work with the case of maximizing a strictly decreasing function $f$. Consider a line of symmetry $l_1$ of the region $\mathcal{R}$ and a point $\vec{r}_0 \notin l_1$. Then define a new point $\vec{r}_l \in l_1$ to be the nearest point on $l_1$ to $\vec{r}_0$. We define the following Voronoi region $\mathcal{R}_0=\{\vec{r}~\big|~|\vec{r}-\vec{r}_0|<|\vec{r}-\vec{r}_l|\}$ and the line $\tilde{l}$ as the line where $\tilde{l}=\{\vec{r}~\big|~|\vec{r}-\vec{r}_0|=|\vec{r}-\vec{r}_l|\}$ which is the boundary of $\mathcal{R}_0$ in the interior of $\mathcal{R}$ and which is necessarily parallel to $l_1$. Finally we define:
\begin{equation}
I_1=\int_{\mathcal{R}} f\big(|\vec{r}-\vec{r_0}|\big) ~ d\vec{r},
\end{equation}
\begin{equation}
I_2=\int_{\mathcal{R}} f\big(|\vec{r}-\vec{r_l}|\big) ~ d\vec{r},
\end{equation}
and we wish to show that $I_2>I_1$.

Consider reflecting any point in $\vec{r}\in\mathcal{R}_0$ about the line of mirror symmetry  $l_1$ and connecting the two points via a line. By convexity of the region all points on this line lie in $\mathcal{R}$. If the point $\vec{r}$ is instead reflected about $\tilde{l}$ then this point clearly still lies on this line, since $\tilde{l}$  is parallel to $l_1$. It is therefore possible to reflect $\mathcal{R}_0$ about $\tilde{l}$ defining the new region $\tilde{\mathcal{R}}_0 \subset \mathcal{R}$. We can partition $\mathcal{R}$ into the three regions $\mathcal{R}_0,\tilde{\mathcal{R}}_0$ and $\mathcal{R}/(\mathcal{R}_0\cup\tilde{\mathcal{R}}_0)$ for the purposes of calculating $I_1$ and $I_2$. Since reflection about $\tilde l$ is an isometry pairing $\vec r_0\leftrightarrow\vec r_l$, it preserves distances pointwise and therefore any function of them, in particular $f$. It is clear that :

\begin{equation}
 \int_{\mathcal{R}_0} f\big(|\vec{r}-\vec{r_0}|\big) ~ d\vec{r} =\int_{\tilde{\mathcal{R}}_0} f\big(|\vec{r}-\vec{r_l}|\big) ~ d\vec{r},
\end{equation}
and
\begin{equation}
 \int_{\tilde{\mathcal{R}}_0} f\big(|\vec{r}-\vec{r_0}|\big) ~ d\vec{r} =\int_{\mathcal{R}_0} f\big(|\vec{r}-\vec{r_l}|\big) ~ d\vec{r},
\end{equation}
so that
\begin{align}
I_2-I_1= \int_{\mathcal{R}/(\mathcal{R}_0 \cup \tilde{\mathcal{R}}_0)} f\big(|\vec{r}-\vec{r_l}|\big) - f\big(|\vec{r}-\vec{r_0}|\big) ~ d\vec{r} .
\end{align}
  Since $\vec{r}_0\notin l_1$, we have $\tilde l\neq l_1$; by convexity of $\mathcal{R}$, the reflected region $\tilde{\mathcal{R}}_0$ is a strict subset of the half-plane on the far side of $\tilde l$, so $\mathcal{R}/(\mathcal{R}_0\cup\tilde{\mathcal{R}}_0)$ has strictly positive measure. On this set $f\big(|\vec{r}-\vec{r_l}|\big) > f\big(|\vec{r}-\vec{r_0}|\big)$ since $f$ is strictly decreasing, and thus $I_2-I_1>0$. Thus we always get an improved value if we map to the nearest point on the line of symmetry.
\end{proof}

The centroid is the point that minimizes the integral of the squared distance over a region. Therefore, when the capacity bound is strictly decreasing with distance and the region possesses sufficient symmetry, the capacity-maximizing point coincides with its centroid. Since the original region $\mathcal{R}$ is a circle and hence convex, its Voronoi cells are also convex. Furthermore, the hexagonal unit cell, which characterizes the interior of the expected optimal tessellation, possesses multiple axes of mirror symmetry; consequently, its capacity-maximizing point coincides with its centroid. Given that the interior of the tessellation is expected to approach a hexagonal tiling, appreciable differences between the centroid and the capacity-maximizing point should arise primarily near the boundary of $\mathcal{R}$, where this symmetry is broken. Motivated by this observation, and to reduce the computational complexity of the optimization, we approximate the optimal Voronoi tessellation by a centroidal Voronoi tessellation.

It is important to emphasize that this reduction of the optimization problem is heuristic and approximate for two main reasons. First, it relies on the presence of sufficient symmetry within the Voronoi cells; although such symmetry is typical of cells in the interior of a nearly hexagonal tessellation, it is not guaranteed in the general case, particularly near the boundary. Second, the argument assumes that the capacity bound is strictly decreasing with distance throughout the regions under consideration, which need not always hold, for example when evaluating the capacity bound for long-distance thermal-loss channels. Despite these limitations, the resulting approximation nevertheless yields the desired improvements, as demonstrated in the next section on the optimized backbone results.

\subsection{Backbone Edges and Interconnections}
Our optimization has thus far focused solely on the positions of the backbone (or router) nodes. Since we expect that such nodes will be a far more expensive resource than the links and the capacity of the backbone links is so much greater than the free-space links any reasonable topology should suffice. Nonetheless we now highlight a specific desirable topology, which we shall make use of in our results. In particular, we connect the backbone according to their Delaunay triangulation. Such triangulations have a number of properties \cite{musin2003propertiesDeltri} which make them appealing potential topologies for quantum network backbones. In particular, there is linear scaling in the number of backbone network edges $|E_b|$:
\begin{equation}
|E_b| = 3n_b-3-n_h,
\end{equation}
where $n_b$ is the number of backbone vertices and $n_h$ is the number of such vertices which lie on  the convex hull of the backbone. Additionally, every node is connected to its nearest neighbours, maximizing point-to-point performance on the backbone and the  stretch factor of the graph is known to be $<2$\ \cite{Xia2013StretchFactorDelaunay} \footnote{Given any two nodes $\boldsymbol{x},\boldsymbol{y}\in V$, then there exists a path $\omega$ such that $\sum_{(i,j)\in \omega} |\vec{r}_{\boldsymbol{i}}-\vec{r}_{\boldsymbol{j}}|\ < 2|\vec{r}_{\boldsymbol{x}}-\vec{r}_{\boldsymbol{y}}|$}  which guarantees that communication across the backbone should not use an excessively high proportion of the backbone edges.

Finally, for backbone networks added to Waxman networks, it is important to specify how the base network is connected to the backbone. We assume that this follows the same Waxman connectivity rule as the rest of the network. Thus, all edges comprised of the same type of channels follow the same connection rules, which aids analysis of the network.

\begin{figure*}[t]
    \centering
    \setlength{\tabcolsep}{0pt}
    \begin{tabular}{cc}
    \subfloat[$R=0.2$km]{\includegraphics[width=0.42\linewidth,trim=0pt 27pt 0pt 3pt,clip]{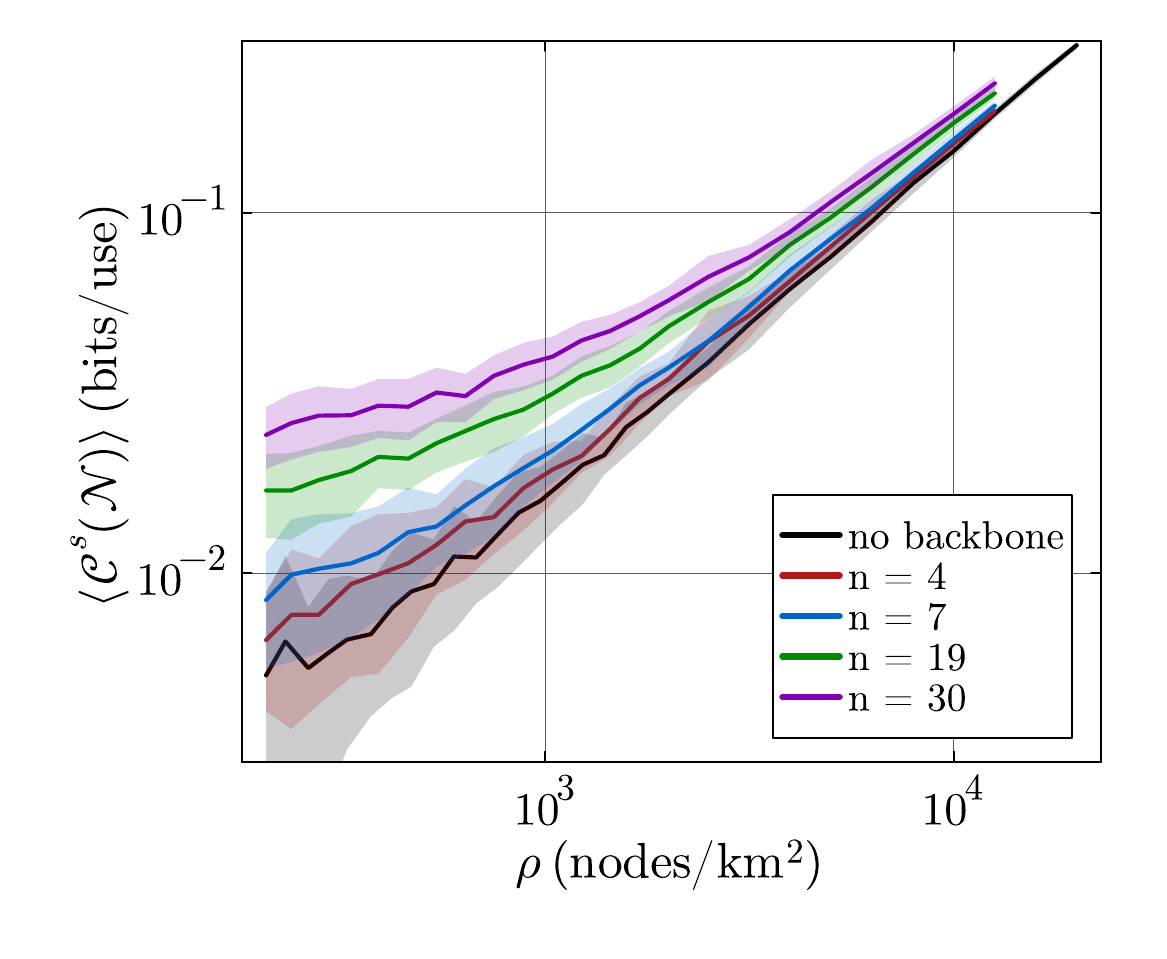}\label{fig:single_path_mobile_bb_res}} &
    \subfloat[$R=0.2$km]{\includegraphics[width=0.42\linewidth,trim=0pt 27pt 0pt 3pt,clip]{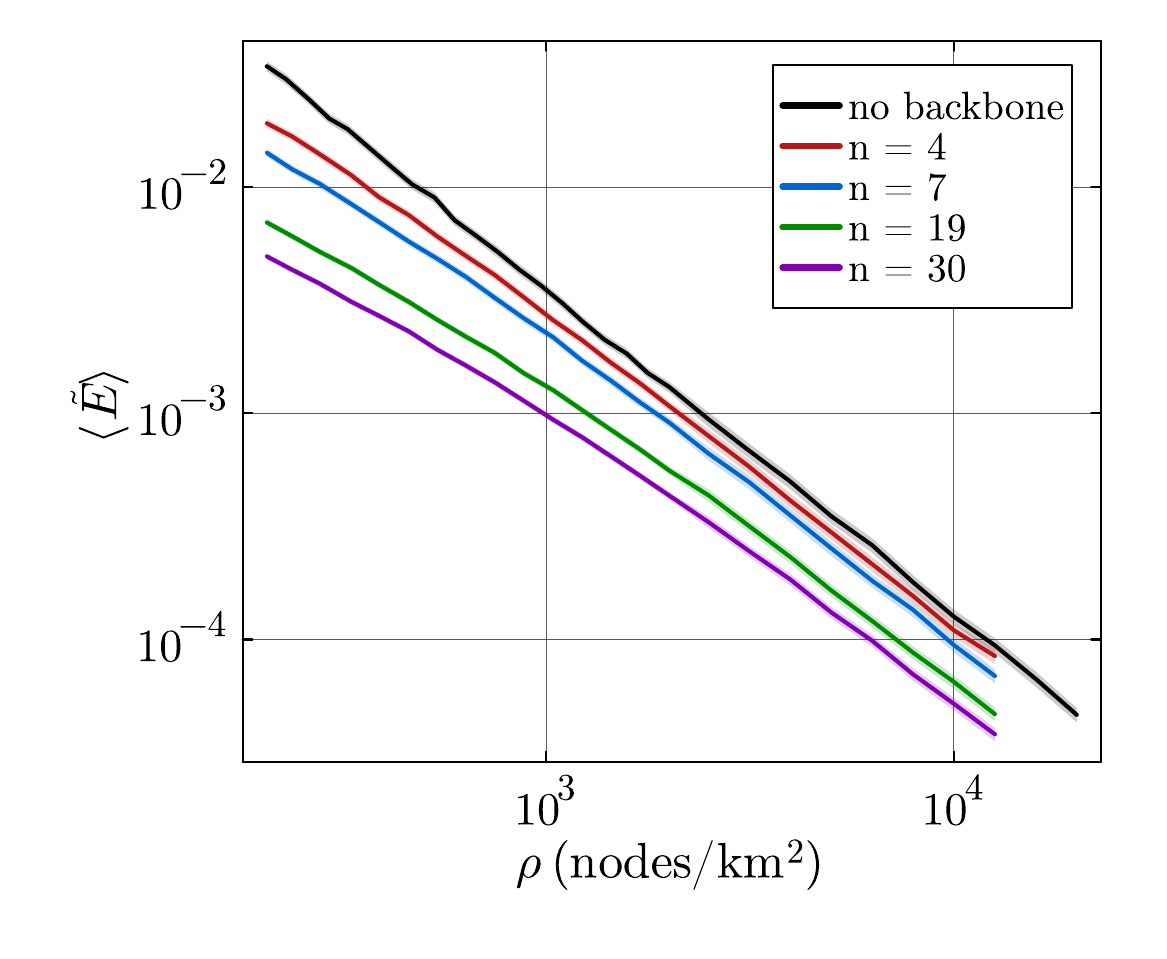}\label{fig:mobile_bb_cons_02}} \\[-6pt]
    \subfloat[$R=0.5$km]{\includegraphics[width=0.42\linewidth,trim=0pt 27pt 0pt 3pt,clip]{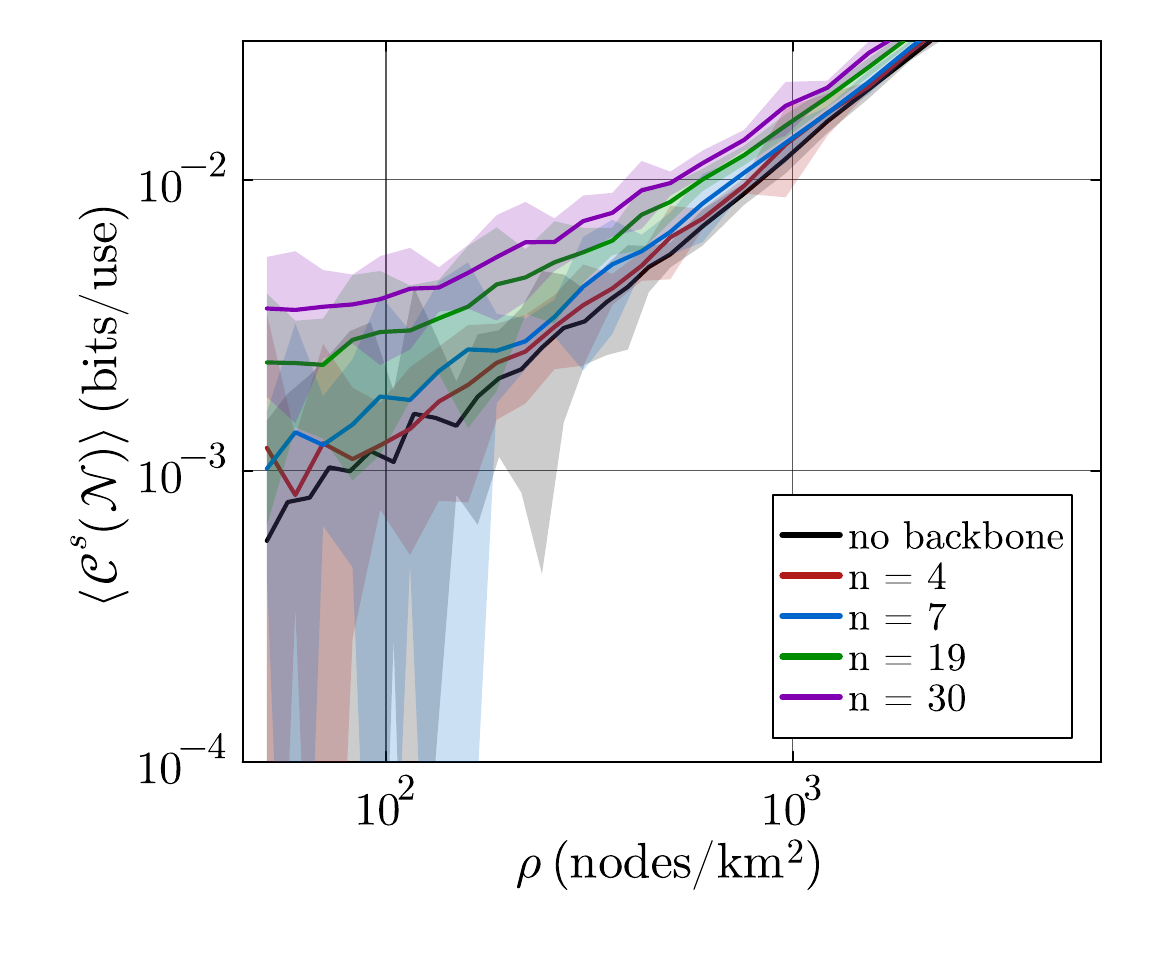}\label{fig:mobile_bb_cap_05}} &
    \subfloat[$R=0.5$km]{\includegraphics[width=0.42\linewidth,trim=0pt 27pt 0pt 3pt,clip]{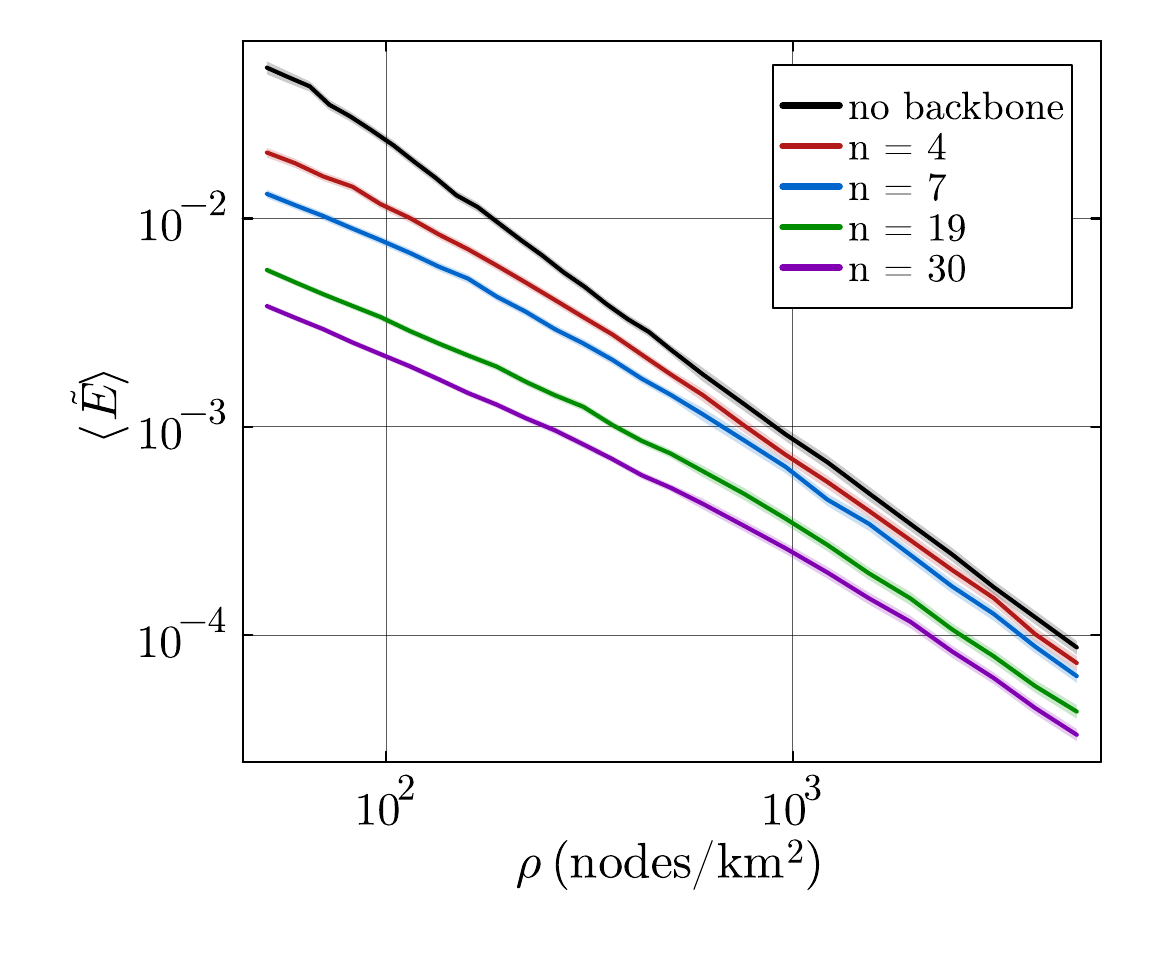}\label{fig:mobile_bb_cons_05}} \\[-6pt]
    \subfloat[$R=1.0$km]{\includegraphics[width=0.42\linewidth,trim=0pt 27pt 0pt 3pt,clip]{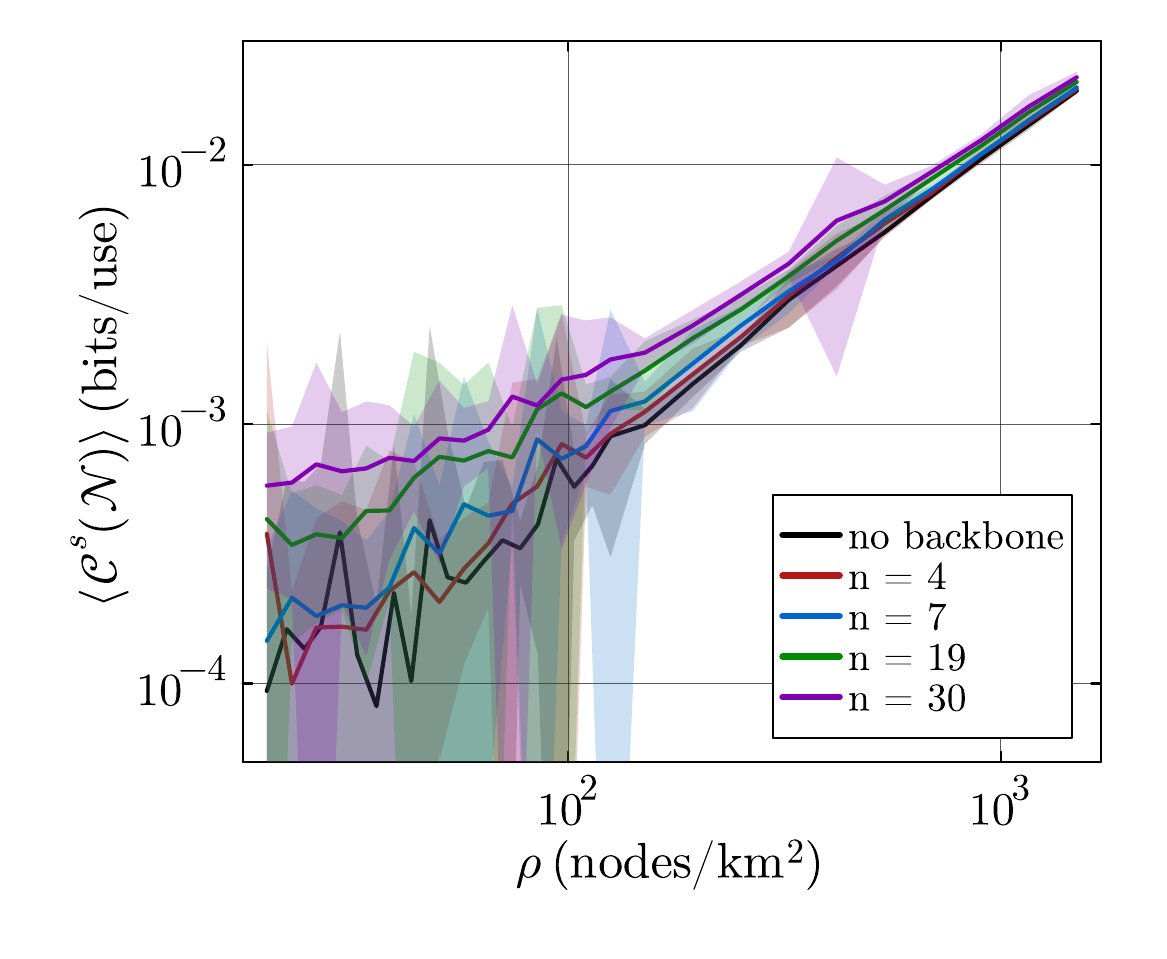}\label{fig:mobile_bb_cap_10}} &
    \subfloat[$R=1.0$km]{\includegraphics[width=0.42\linewidth,trim=0pt 27pt 0pt 3pt,clip]{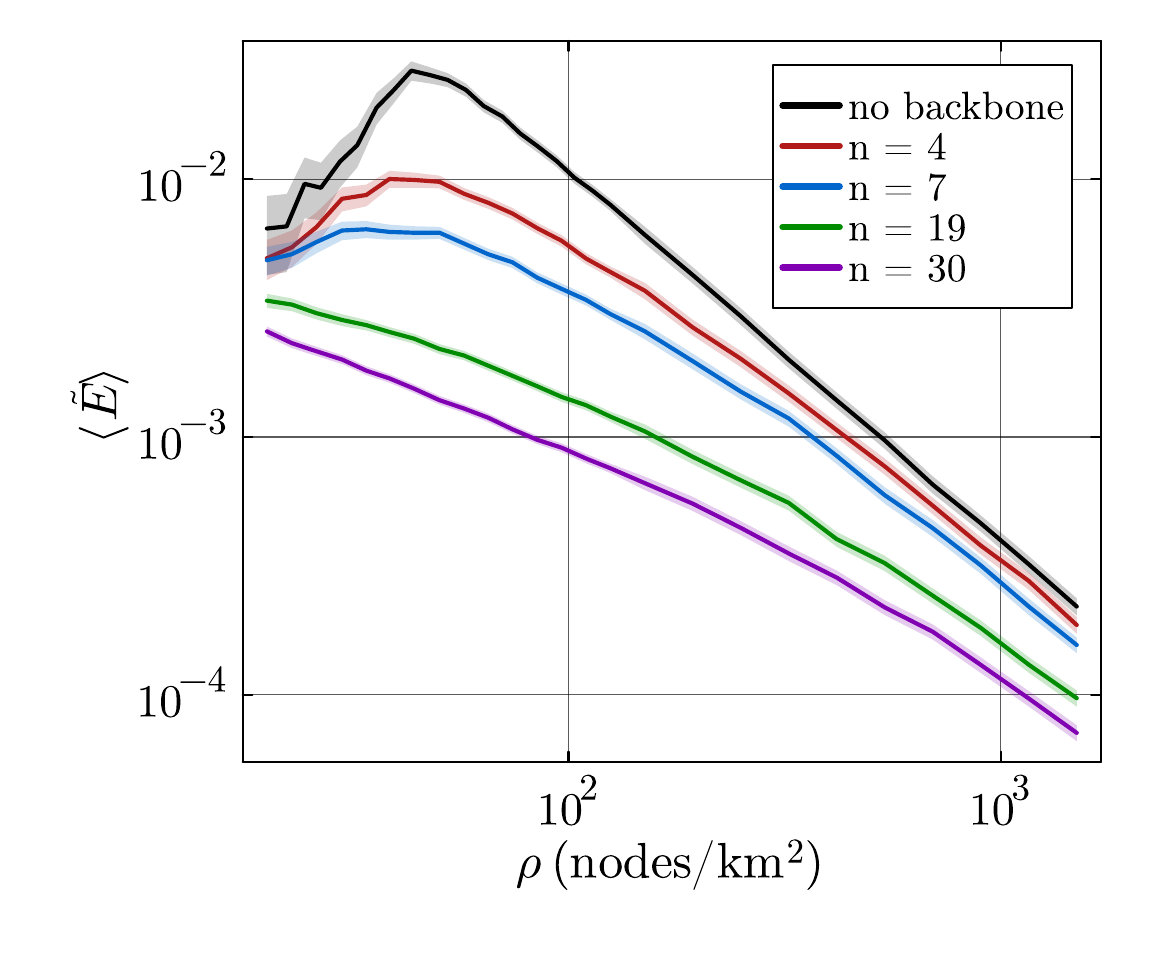}\label{fig:mobile_bb_cons_10}}
    \end{tabular}
    \caption{\textbf{Single-path mobile-user results with fiber backbone.} Left column: single-path network capacity bound vs.\ nodal density. Right column: edge consumption vs.\ nodal density. (a,b) $R=0.2$km. (c,d) $R=0.5$km. (e,f) $R=1.0$km. Shaded bands show $\pm1$ standard error of the mean, each colour corresponds to a different number of backbone nodes $n$, and each data point averages 100 network instances and 50 end-user pairs per network. See text for more details.}
    \label{fig:single_path_mobile_bb_res_composite}
\end{figure*}

\begin{figure*}[t]
    \centering
    \setlength{\tabcolsep}{0pt}
    \begin{tabular}{cc}
    \subfloat[$R=10$km]{\includegraphics[width=0.42\linewidth,trim=0pt 27pt 0pt 3pt,clip]{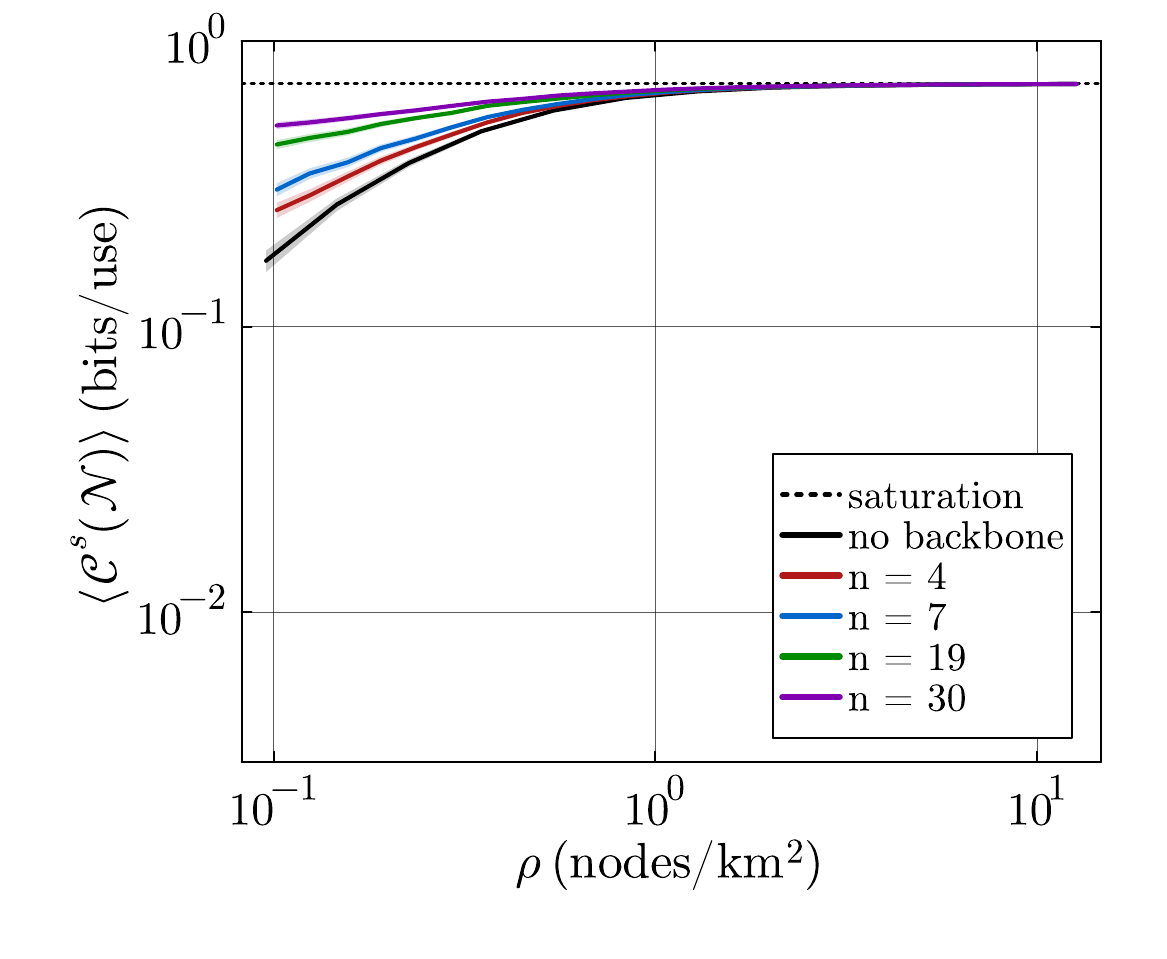}\label{fig:single_path_fixed_bb_res}} &
    \subfloat[$R=10$km]{\includegraphics[width=0.42\linewidth,trim=0pt 27pt 0pt 3pt,clip]{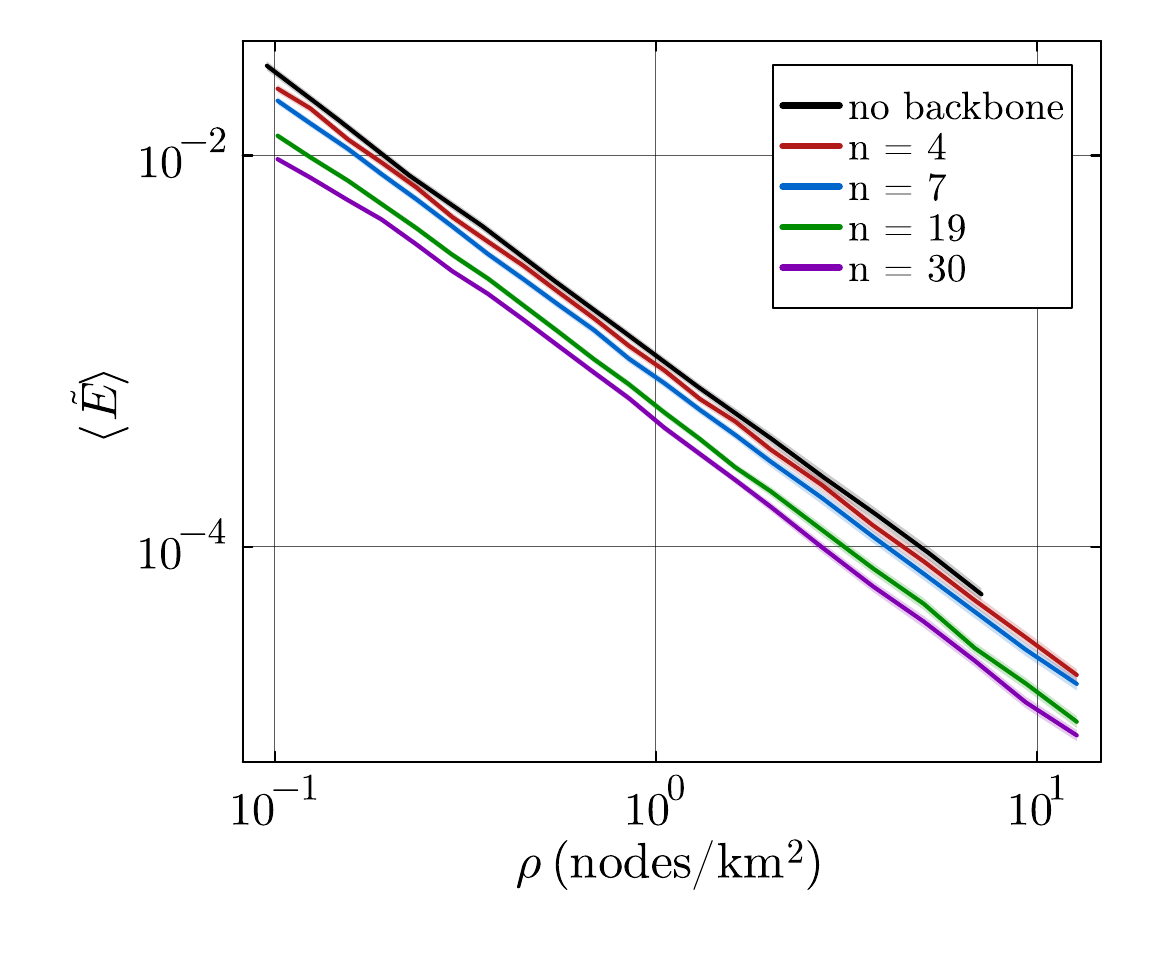}\label{fig:fixed_bb_cons_10}} \\[-6pt]
    \subfloat[$R=25$km]{\includegraphics[width=0.42\linewidth,trim=0pt 27pt 0pt 3pt,clip]{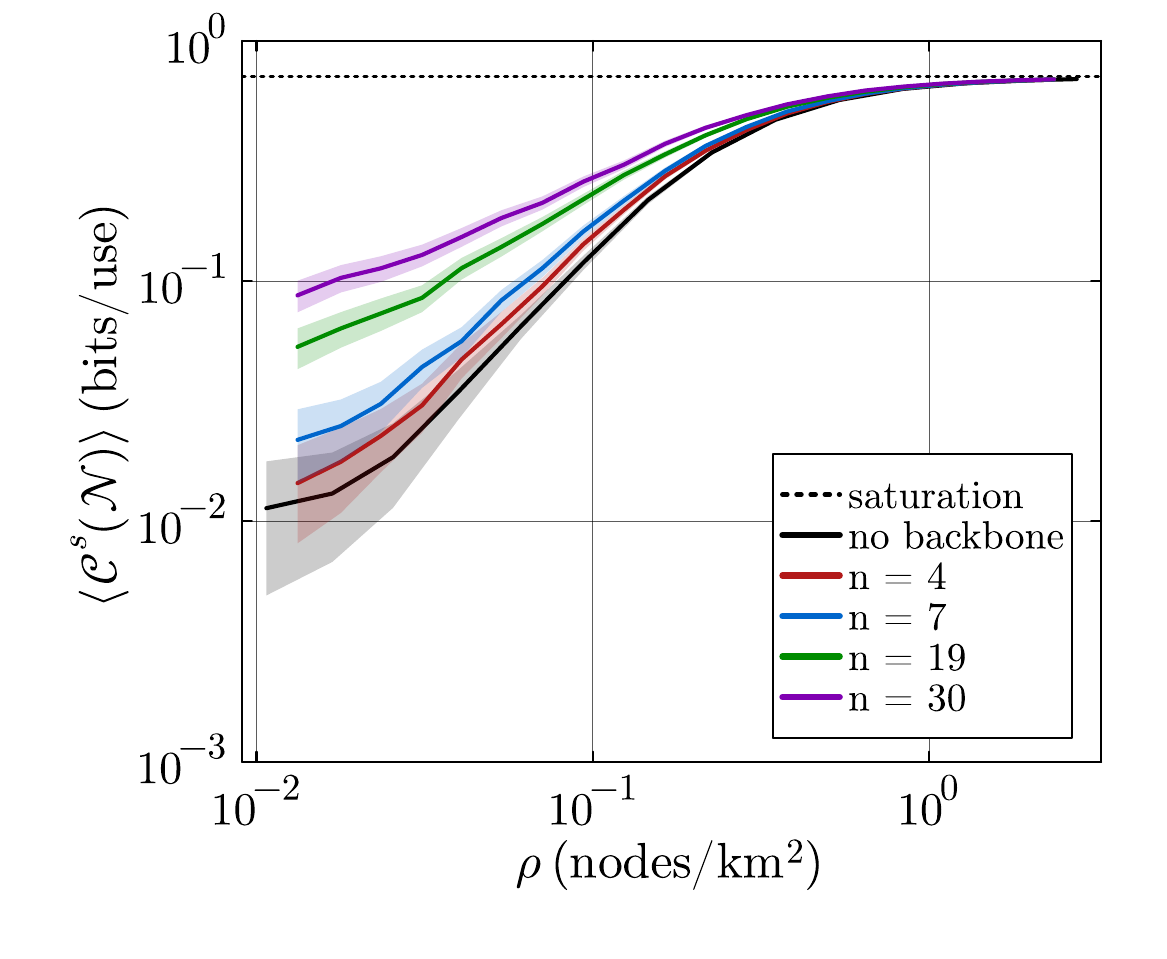}\label{fig:fixed_bb_cap_25}} &
    \subfloat[$R=25$km]{\includegraphics[width=0.42\linewidth,trim=0pt 27pt 0pt 3pt,clip]{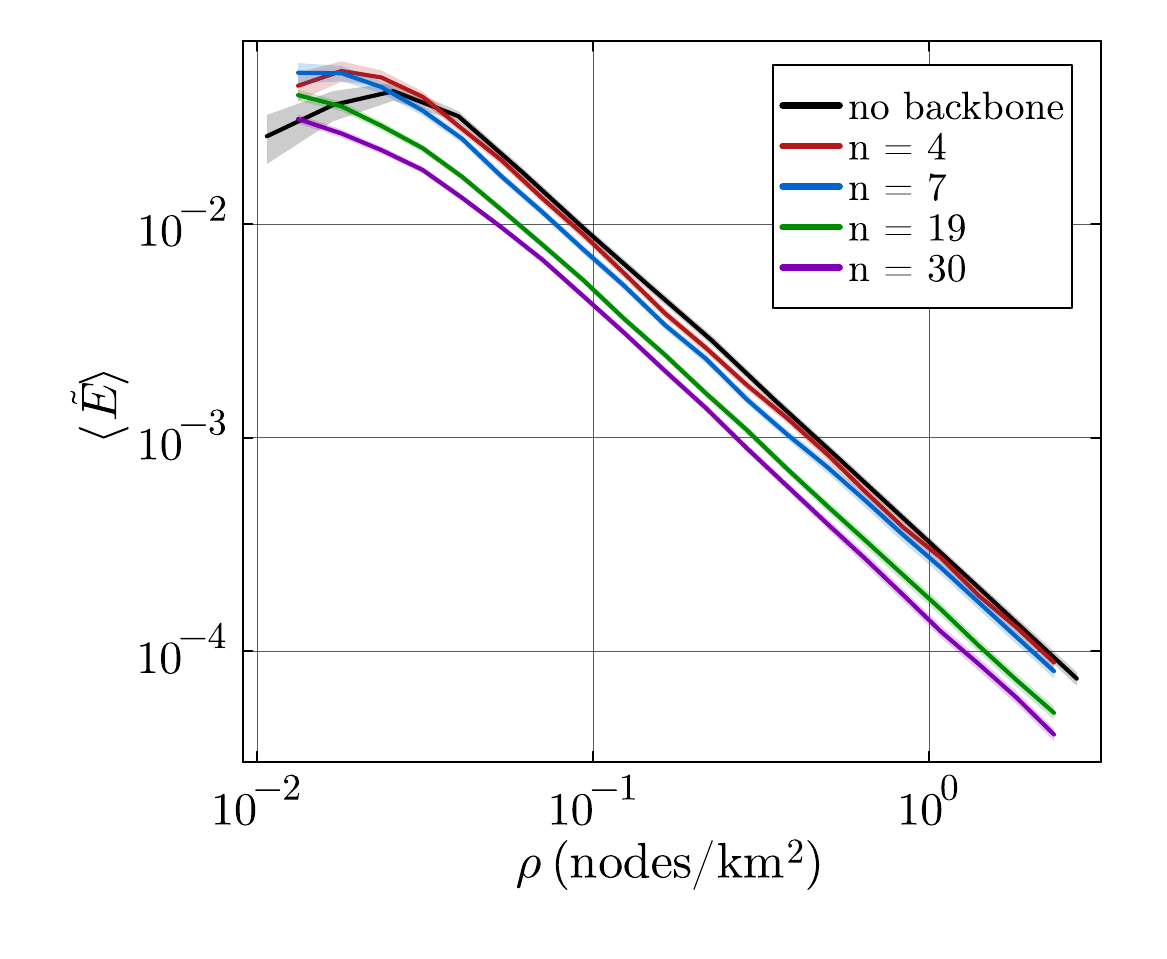}\label{fig:fixed_bb_cons_25}} \\[-6pt]
    \subfloat[$R=50$km]{\includegraphics[width=0.42\linewidth,trim=0pt 27pt 0pt 3pt,clip]{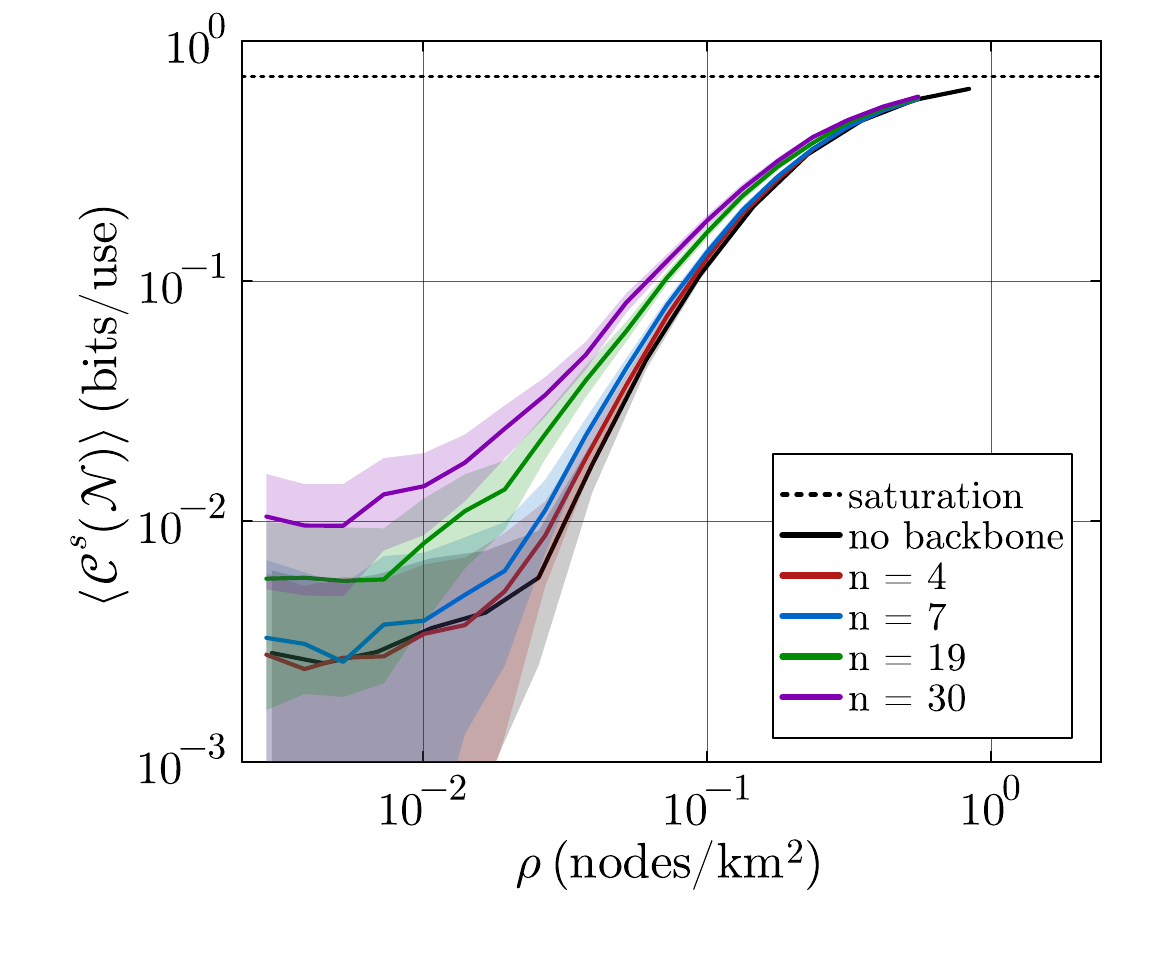}\label{fig:fixed_bb_cap_50}} &
    \subfloat[$R=50$km]{\includegraphics[width=0.42\linewidth,trim=0pt 27pt 0pt 3pt,clip]{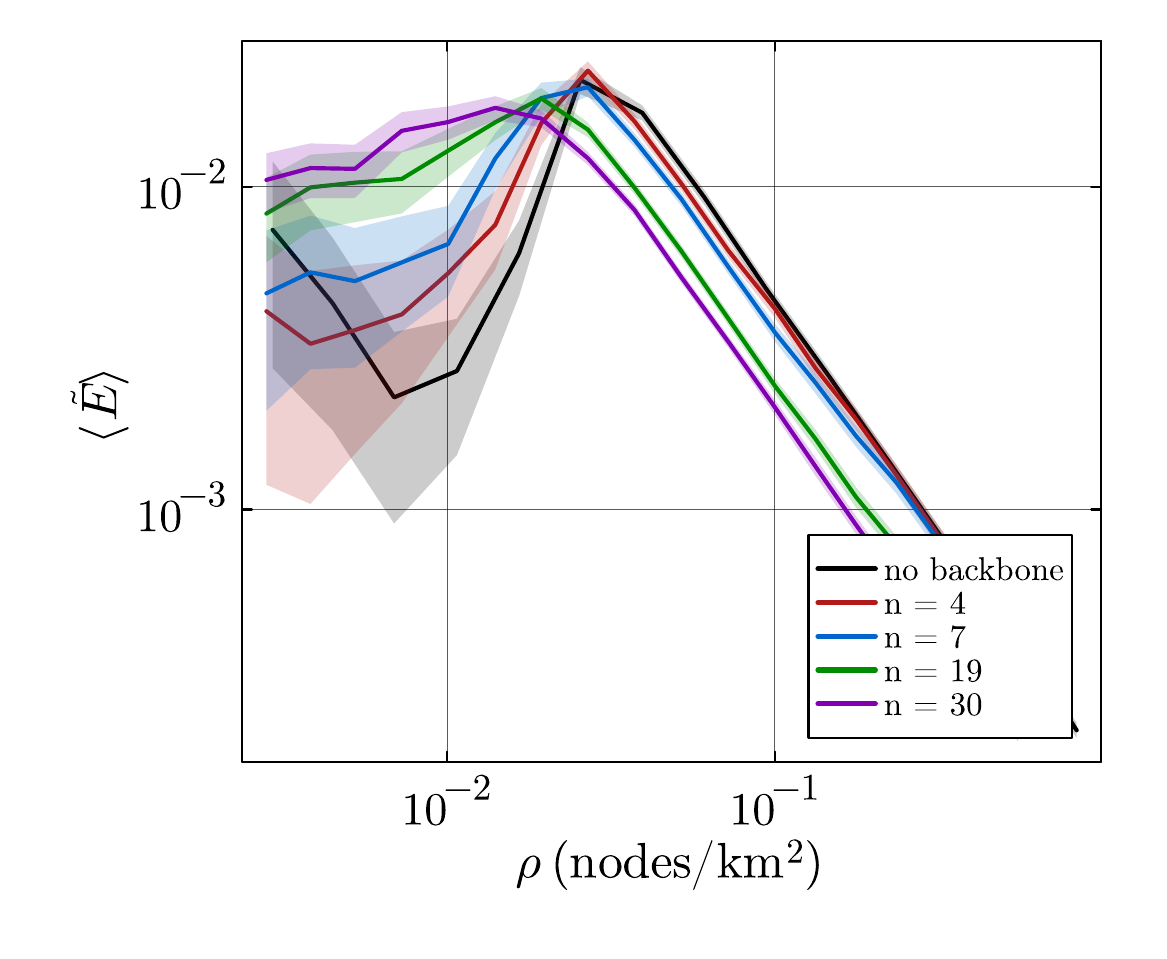}\label{fig:fixed_bb_cons_50}}
    \end{tabular}
    \caption{\textbf{Single-path fixed-user results with fiber backbone.} Left column: single-path network capacity bound vs.\ nodal density; the dotted line is the maximal fixed free-space 
    capacity, an upper bound on the single-path network capacity. Right column: edge consumption vs.\ nodal density. (a,b) $R=10$km. (c,d) $R=25$km. (e,f) $R=50$km. Shaded bands show $\pm1$ standard error of the mean, each data point averages 100 network instances and 50 end-user pairs per network, and each colour corresponds to a different number of backbone nodes. See text for more details.}
    \label{fig:single_path_fixed_bb_res_composite}
\end{figure*}

\section{optimized Backbone Results}
\label{sec:optimized bb results}
We now consider hybrid quantum networks comprised of an optimized fiber backbone and free-space base network. In particular, we show how to improve the performance of free-space Waxman networks, by allocating a backbone fiber infrastructure according to the heuristic approach specified in the previous section. \SP{As in~\cref{subsec:wax_net_baseline}, our analysis is based on thermal-loss channels, here employed for both free-space and fiber-based links (the former chosen as in Table~\ref{table:Setups} while the latter chosen with $\bar{n}=1$).}

For mobile user nodes, the single-path results are depicted in \cref{fig:single_path_mobile_bb_res,fig:mobile_bb_cap_05,fig:mobile_bb_cap_10,fig:mobile_bb_cons_02,fig:mobile_bb_cons_05,fig:mobile_bb_cons_10} while the multi-path results are shown in \cref{fig:mobile_bb_flood,fig:mobile_bb_flood_05,fig:mobile_bb_flood_10}. The results for fixed user nodes are shown in \cref{fig:single_path_fixed_bb_res,fig:fixed_bb_cap_25,fig:fixed_bb_cap_50,fig:fixed_bb_cons_10,fig:fixed_bb_cons_25,fig:fixed_bb_cons_50} and \cref{fig:fixed_bb_flood,fig:fixed_bb_flood_25,fig:fixed_bb_flood_50}. The equivalent baseline results presented in \cref{subsec:wax_net_baseline} are shown for comparison (`no backbone' lines in the figures). We remark on the following features of the results.  

For the single-path network capacity bounds, adding an increasing number of backbone nodes results in improved performance at low nodal density in both the fixed and the mobile cases. At higher densities, the capacity bounds begin to converge in the mobile case, but there is sequential improvement with each increased size of backbone over the full density range we consider. In the fixed case, the convergence is to the maximal single-path capacity so the backbone can only improve performance in the low density regime. 

The consumption is reduced with an increasing number of backbone nodes, causing an offset between the lines in the consumption plot. 
This is reminiscent of that seen in \cref{fig:baseline_wax_mobile_cons,fig:baseline_wax_fixed_cons} where, once a transition occurs, the edge consumption is well fitted by a power-law relation. 
The consumption reduction therefore seems to be explained by an effective reduction in the network size due to the shortcuts through the region enabled by the backbone.
 
The performance improvement for flooding is more limited. At lower densities, the performance is increased when the number of backbone nodes is increased. At higher densities, the network capacity bounds converge towards the neighbourhood capacity upper bound \SP{specified by~\cref{eq:flooding_upper_bound}} at which point the backbone provides no advantage. Additionally, we do not observe a substantial reduction in the density at which a phase transition occurs in any of the multi-path capacity results~\footnote{Whilst this may on the surface seem surprising, the addition of a backbone does not affect the connectivity phase transition in the underlying network. If the graph nodes are mostly isolated, then the addition of a limited number of backbone nodes, no matter how well interconnected by fiber links, will not substantially improve performance. Since connections to the backbone nodes are governed by the same probabilistic rule as in the base Waxman network, the inclusion of $n_b$ backbone nodes over a network region of area $A$ would heuristically shift the transition density from $\rho_0$ to
\[
\tilde{\rho}_0 = \rho_0 - \frac{n_b}{A}.
\]
When $n_b/A \ll \rho_0$, we have $\tilde{\rho}_0 \simeq \rho_0$, and the density at which the capacity transition occurs therefore remains effectively unchanged.}.

\begin{figure*}[h!]
    \centering
    \setlength{\tabcolsep}{0pt}
    \begin{tabular}{cc}
    \subfloat[mobile, $R=0.2$km]{\includegraphics[width=0.42\linewidth,trim=0pt 27pt 0pt 15pt,clip]{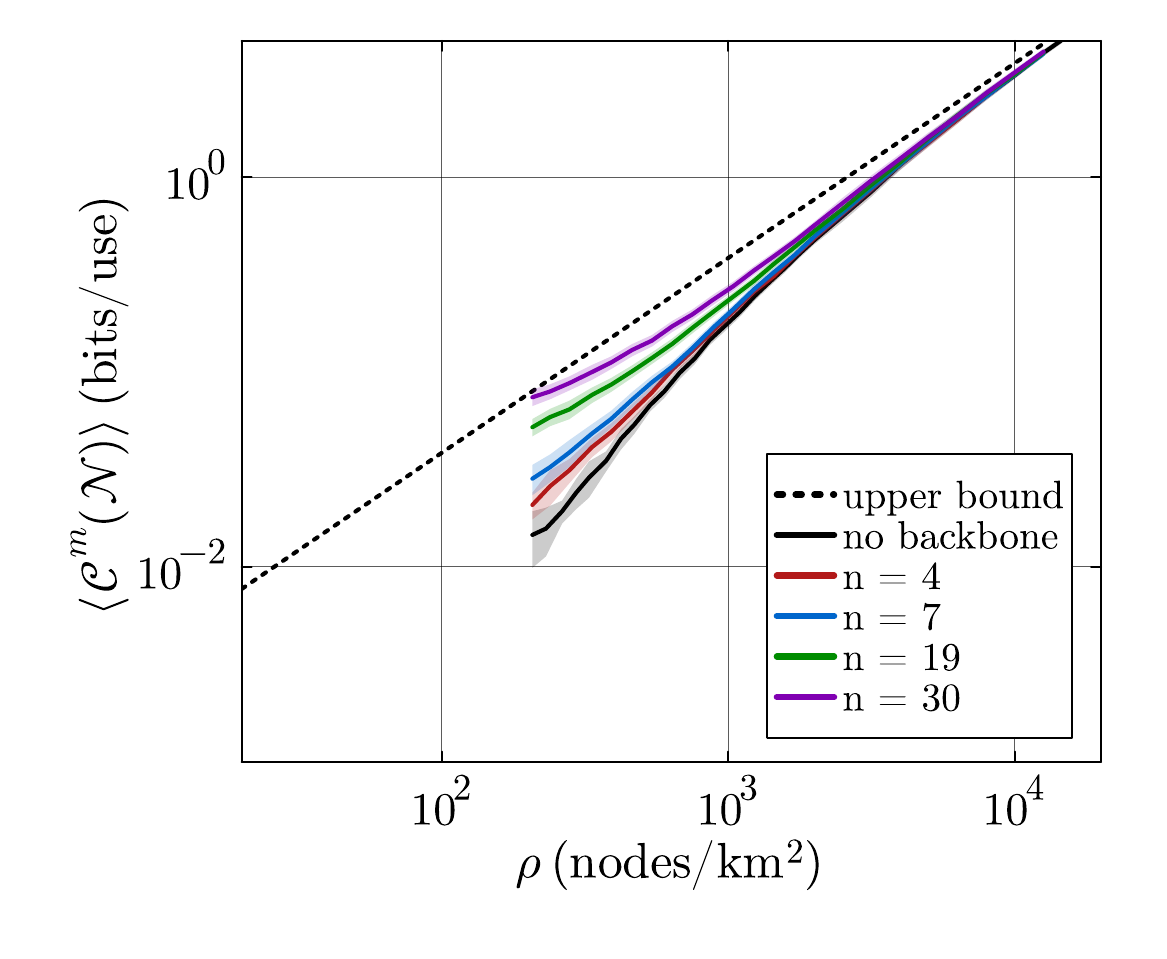}\label{fig:mobile_bb_flood}} &
    \subfloat[fixed, $R=10$km]{\includegraphics[width=0.42\linewidth,trim=0pt 27pt 0pt 3pt,clip]{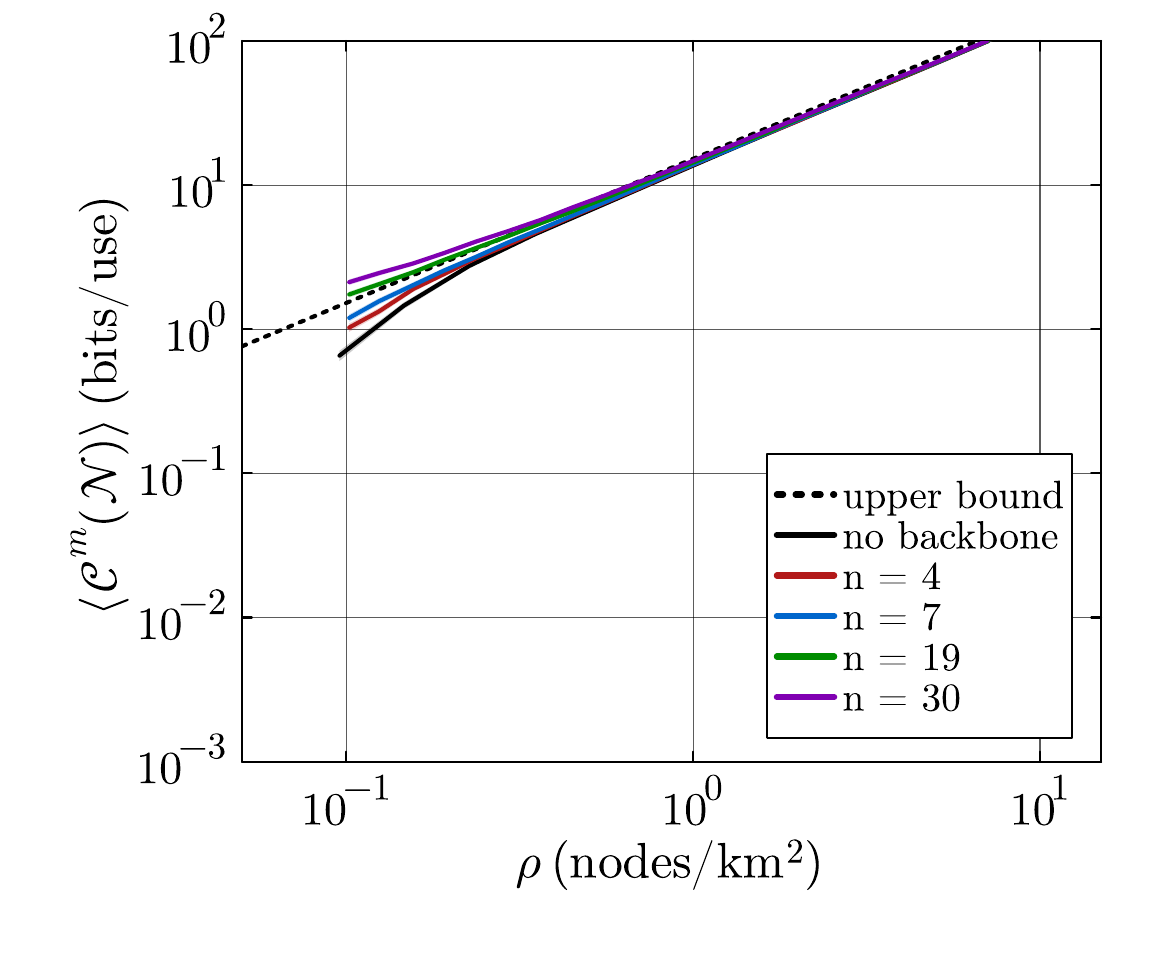}\label{fig:fixed_bb_flood}} \\[-6pt]
    \subfloat[mobile, $R=0.5$km]{\includegraphics[width=0.42\linewidth,trim=0pt 27pt 0pt 15pt,clip]{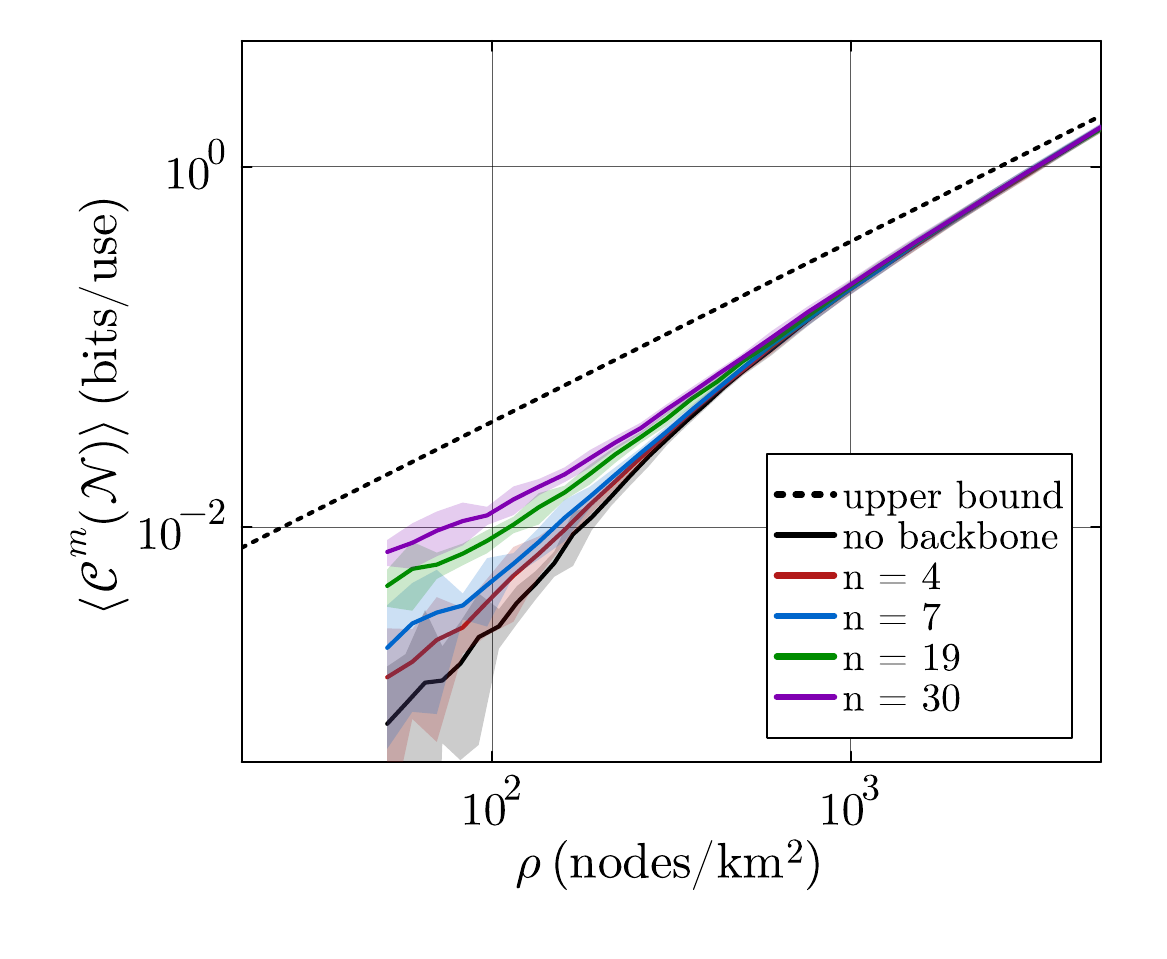}\label{fig:mobile_bb_flood_05}} &
    \subfloat[fixed, $R=25$km]{\includegraphics[width=0.42\linewidth,trim=0pt 27pt 0pt 3pt,clip]{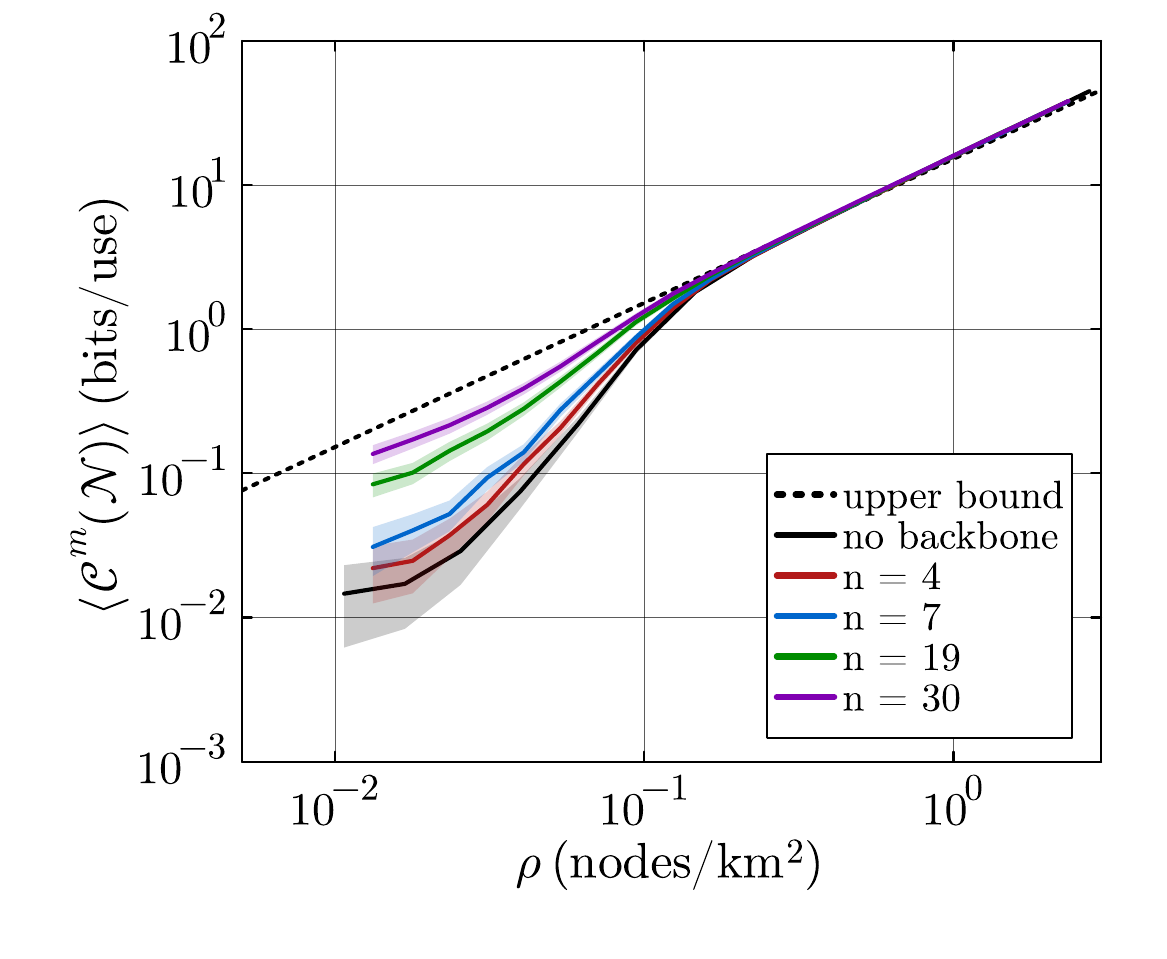}\label{fig:fixed_bb_flood_25}} \\[-6pt]
    \subfloat[mobile, $R=1.0$km]{\includegraphics[width=0.42\linewidth,trim=0pt 27pt 0pt 15pt,clip]{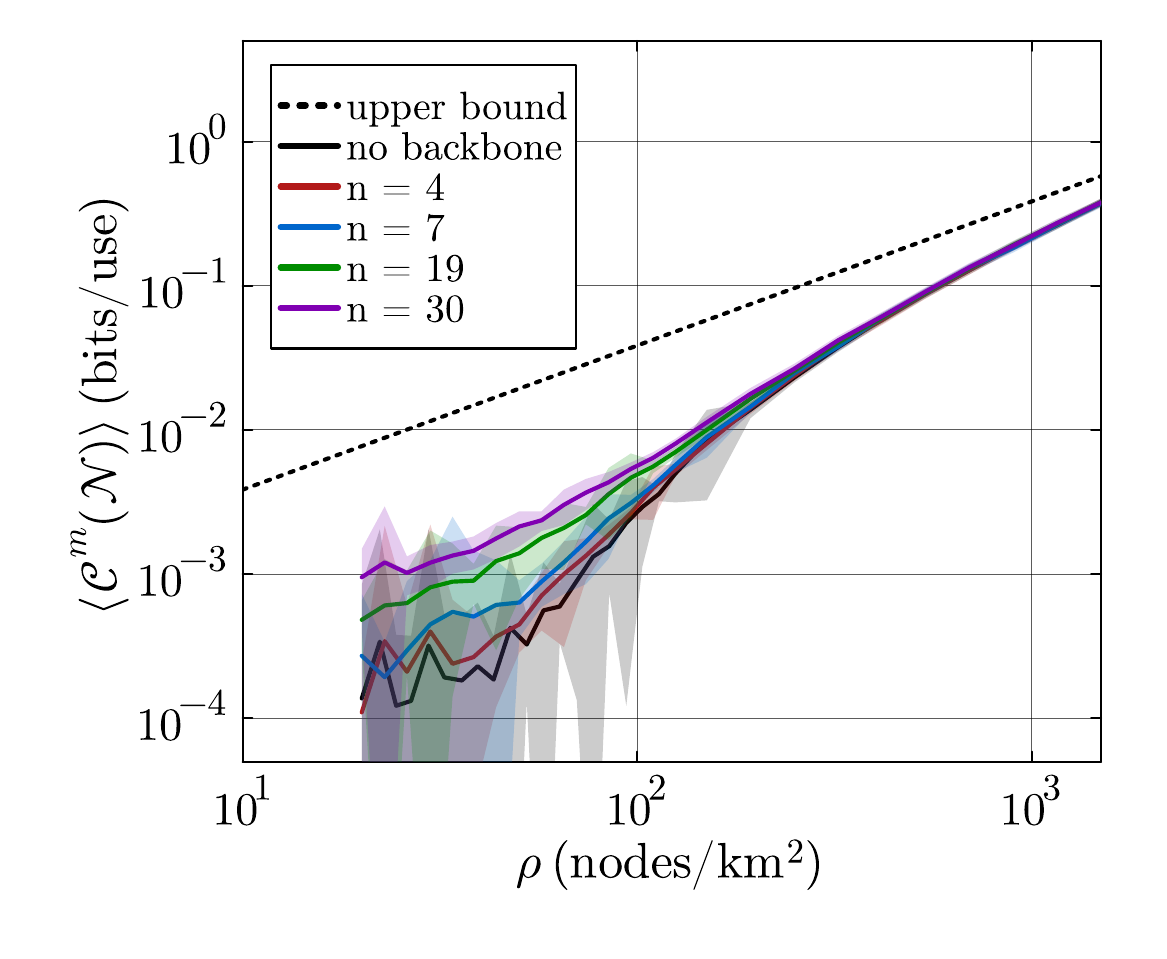}\label{fig:mobile_bb_flood_10}} &
    \subfloat[fixed, $R=50$km]{\includegraphics[width=0.42\linewidth,trim=0pt 27pt 0pt 3pt,clip]{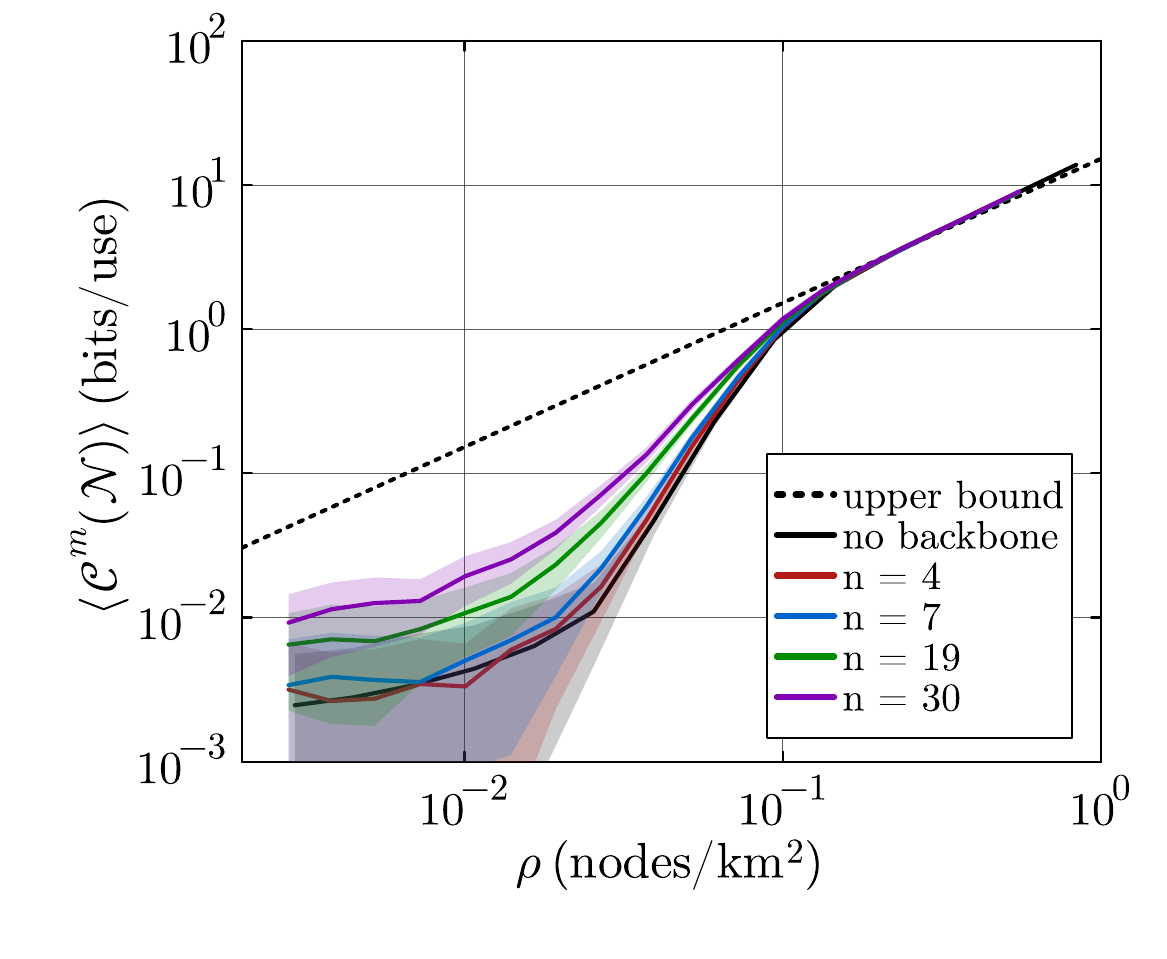}\label{fig:fixed_bb_flood_50}}
    \end{tabular}
    \caption{\textbf{Multi-path capacity bound with fiber backbone.} Left column: mobile users. Right column: fixed users. (a) Mobile, $R=0.2$km. (b) Fixed, $R=10$km. (c) Mobile, $R=0.5$km. (d) Fixed, $R=25$km. (e) Mobile, $R=1.0$km. (f) Fixed, $R=50$km. Multi-path (flooding) network capacity bound vs.\ nodal density, with and without an optimized fiber backbone. Shaded bands show $\pm1$ standard error of the mean, each data point averages 100 network instances and 50 end-user pairs per network, and each colour corresponds to a different number of backbone nodes. See text for more details.}
    \label{fig:flood_composite}
\end{figure*}

\section{Conclusion}
\label{sec:conclusions}
In this work, we have studied free-space quantum communication networks. Whilst we have focused on circular networks with uniform spatial distribution of nodes due to the tractability this provides, we note the method is readily extended to other regions whose radial distribution is known and to other radial nodal density distributions. More precisely, we have considered router-centered star networks and random Waxman networks whose link-rates are computed considering \SP{the two-way assisted capacity bound $\mathcal{C}=Q_2,D_2,K_2$ for a thermal-loss channel.}

We have shown the advantage that introducing an optimized fiber-based backbone can provide to free-space quantum networks for mobile and fixed end-users. The results demonstrate how single-path routing performance is improved by the introduction of additional backbone nodes both in terms of increased end-to-end network capacity and in terms of reduced edge consumption. In the fixed-user case this effect is reduced for the network capacity by the fact that the network capacities approach the maximal point-to-point capacity. Similar capacity improvements are found for multi-path routing.

A number of further directions are apparent. Clearly, Waxman structures are not the only random network model that may be relevant to the future topology of a large scale quantum network. In particular, scale free networks are of particular interest and have been explored in previous works on random quantum networks \cite{QuntaoRandQNets,ZhangQInt}. It would be natural to extend our exploration of optimized backbone structures to such networks. 

Additionally, we have made use of the same CVT backbones for fixed users in addition to mobile users. However, in the fixed case it may be possible to locate the backbone nodes at better positions via clustering algorithms. In particular, the CVT is optimal under an assumption of a constant density of nodes across the network region. This is of course appropriate in the mobile case, since the end-user devices move freely and the constant density approximates the positions of the nodes over a sufficiently long time period. However, in the fixed case, the nodes do not move, and the constant density is instead the underlying distribution from which the nodal positions were drawn. Thus for a fixed number of backbone nodes we may be able to find better locations. That being said, in the limit of a large number of nodes drawn from a uniform distribution, the optimal positions of the backbone nodes should converge to the mobile case. It would therefore be useful to determine density regions in which it is necessary to take such an approach over the one we have used. 

Throughout, backbone nodes have been treated as ideal and costed only by their number. A natural extension would be to model imperfect backbone nodes, for example with limited memory fidelity or multiplexing capacity, and assess the sensitivity of the Voronoi-optimality result to such imperfections. Relatedly, we have swept the backbone number $n_b$ as a free parameter throughout rather than optimising it directly; incorporating an explicit deployment cost model, as in Ref.~\cite{Alleaume2009TopologicalOptQKD}, would allow the backbone size itself to be chosen subject to a budget constraint. 

Calculating the end-to-end capacity for single pairs of end-users is useful in determining the ultimate upper limits for end-to-end communication. However, in reality, a given network will almost certainly contain more than one pair of communicating users at any given time. In this paper we have used the edge consumption as a metric to assess the scalability of the network to a higher number of concurrent users. Nonetheless it would be preferable to extend this directly to instances with multiple users. In particular this would likely elucidate the topological requirements of the backbone network which we have mostly not addressed here. 

Finally more complex routing strategies are available which strike a balance between single-path routing and flooding. In particular it is possible to utilise multi-path routing algorithms which do not necessarily utilise every edge in the network for a single communication instance. The results we present for multi-path routing should therefore be treated as ultimate upper bounds to the multi-path performances of free-space quantum networks (with or without backbones) with the acknowledgment that realistic deployments should use more scalable strategies, albeit at the cost of reduced end-to-end network capacity.

\noindent \textbf{Acknowledgments}. This work was supported by the Integrated Quantum Networks (IQN) Research Hub (EPSRC, Grant No. EP/Z533208/1).

\bibliography{FS}

@article{Du1999VoroniReview,
  author    = {Du, Qiang and Faber, Vance and Gunzburger, Max},
  title     = {Centroidal Voronoi Tessellations: Applications and Algorithms},
  journal   = {SIAM Review},
  volume    = {41},
  number    = {4},
  pages     = {637--676},
  year      = {1999},
  publisher = {Society for Industrial and Applied Mathematics},
  issn      = {1095-7200},
  doi       = {10.1137/S0036144599352836}
}

@article{Du2010VoronoiReview,
   author = {Qiang Du and Max Gunzburger and Lili Ju},
   doi = {10.4208/nmtma.2010.32s.1},
   issue = {2},
   journal = {Numer. Math. Theor. Meth. Appl},
   pages = {119-142},
   title = {REVIEW ARTICLE Advances in Studies and Applications of Centroidal Voronoi Tessellations},
   volume = {3},
   url = {http://www.global-sci.org/nmtma},
   year = {2010},
}

@ARTICLE{Gersho_Conj,
  author={Gersho, A.},
  journal={IEEE Transactions on Information Theory}, 
  title={Asymptotically optimal block quantization}, 
  year={1979},
  volume={25},
  number={4},
  pages={373-380},
  doi={10.1109/TIT.1979.1056067}}

@ARTICLE{Hexagon_theorem,
  author={Newman, D.},
  journal={IEEE Transactions on Information Theory}, 
  title={The hexagon theorem}, 
  year={1982},
  volume={28},
  number={2},
  pages={137-139},
  doi={10.1109/TIT.1982.1056492}}

@article{Wooters1982,
  title={A single quantum cannot be cloned},
  author={Wootters, William K and Zurek, Wojciech H},
  journal={Nature},
  volume={299},
  number={5886},
  pages={802--803},
  year={1982},
  publisher={Nature Publishing Group}
}

@article{Yook,
author = {Soon-Hyung Yook  and Hawoong Jeong  and Albert-László Barabási },
title = {Modeling the Internet's large-scale topology},
journal = {Proceedings of the National Academy of Sciences},
volume = {99},
number = {21},
pages = {13382-13386},
year = {2002},
URL = {https://www.pnas.org/doi/abs/10.1073/pnas.172501399},
}

@article{SF,
author = {Albert-László Barabási  and Réka Albert },
title = {Emergence of Scaling in Random Networks},
journal = {Science},
volume = {286},
number = {5439},
pages = {509-512},
year = {1999},
URL = {https://www.science.org/doi/abs/10.1126/science.286.5439.509}
}

@book{SiegmanLasers,
  title={Lasers},
  author={{A.} Siegman},
  publisher={University Science Books},
  address={Sausalito, California},
  year = {1986}
}

@book{AndrewsTurb,
  title={Laser Beam Propagation Through Random Medium},
  author={{L.~C.} Andrews and {R.~L.} Phillips},
  publisher={SPIE},
  address={Bellinghan},
  year = {2005}
}

@article{OPGQN,
  title = {Analytical Methods for High-Rate Global Quantum Networks},
  author = {Harney, Cillian and Pirandola, Stefano},
  journal = {PRX Quantum},
  volume = {3},
  issue = {1},
  pages = {010349},
  numpages = {12},
  year = {2022},
  month = {Mar},
  publisher = {American Physical Society},
  url = {https://link.aps.org/doi/10.1103/PRXQuantum.3.010349}
}

@article{IRQN,
	url = {https://doi.org/10.1088/2058-9565/ac7ba0},
	year = 2022,
	month = {jul},
	publisher = {{IOP} Publishing},
	volume = {7},
	number = {4},
	pages = {045009},
	author = {Cillian Harney and Stefano Pirandola},
	title = {End-to-end capacities of imperfect-repeater quantum networks},
	journal = {Quantum Science and Technology}
	}

@article{ZhangQInt,
	url = {https://doi.org/10.1088/2058-9565/ac1041},
	year = 2021,
	month = {jul},
	publisher = {{IOP} Publishing},
	volume = {6},
	number = {4},
	pages = {045007},
	author = {Bingzhi Zhang and Quntao Zhuang},
	title = {Quantum internet under random breakdowns and intentional attacks},
	journal = {Quantum Sci. Technol.}
}

@article{PirPatron09,
  title = {Direct and Reverse Secret-Key Capacities of a Quantum Channel},
  author = {Pirandola, Stefano and Garc\'{\i}a-Patr\'on, Raul and Braunstein, Samuel L. and Lloyd, Seth},
  journal = {Phys. Rev. Lett.},
  volume = {102},
  issue = {5},
  pages = {050503},
  numpages = {4},
  year = {2009},
  month = {Feb},
  publisher = {American Physical Society},
  url = {https://link.aps.org/doi/10.1103/PhysRevLett.102.050503}
}

@article{FS,
  title = {Limits and security of free-space quantum communications},
  author = {Pirandola, Stefano},
  journal = {Phys. Rev. Research},
  volume = {3},
  issue = {1},
  pages = {013279},
  numpages = {33},
  year = {2021},
  month = {Mar},
  publisher = {American Physical Society},
  url = {https://link.aps.org/doi/10.1103/PhysRevResearch.3.013279}
}

@article{UniteQInt,
  doi = {10.1038/532169a},
  url = {https://doi.org/10.1038/532169a},
  year = {2016},
  month = apr,
  publisher = {Springer Science and Business Media {LLC}},
  volume = {532},
  number = {7598},
  pages = {169--171},
  author = {Stefano Pirandola and Samuel L. Braunstein},
  title = {Physics: Unite to build a quantum Internet},
  journal = {Nature}
}

@book{Holevo19,
author = {Alexander S. Holevo},
title = {Quantum Systems, Channels, Information},
year = {2019},
publisher = {De Gruyter}
}

@article{QuntaoRandQNets,
  title = {Quantum communication capacity transition of complex quantum networks},
  author = {Zhuang, Quntao and Zhang, Bingzhi},
  journal = {Phys. Rev. A},
  volume = {104},
  issue = {2},
  pages = {022608},
  numpages = {9},
  year = {2021},
  month = {Aug},
  publisher = {American Physical Society},
  url = {https://link.aps.org/doi/10.1103/PhysRevA.104.022608}
}

@article{PLOB,
  author        = {Pirandola, Stefano and Laurenza, Riccardo and Ottaviani, Carlo and Banchi, Leonardo},
  title         = {Fundamental Limits of Repeaterless Quantum Communications},
  journal       = {Nature Communications},
  volume        = {8},
  pages         = {15043},
  year          = {2017},
  month         = {apr},
  publisher     = {Springer Nature},
  issn          = {2041-1723},
  doi           = {10.1038/ncomms15043}
}

@article{MasoudFS,
  url = {https://doi.org/10.1038/s42005-022-00814-5},
  year = {2022},
  month = feb,
  publisher = {Springer Science and Business Media {LLC}},
  volume = {5},
  number = {1},
  author = {Masoud Ghalaii and Stefano Pirandola},
  title = {Quantum communications in a moderate-to-strong turbulent space},
  journal = {Communications Physics}
}

@article{PracticalRouting,
  title         = {Practical Routing and Criticality in Large-Scale Quantum Communication Networks},
  author        = {Harney, Cillian and Pirandola, Stefano},
  journal       = {Physical Review Research},
  volume        = {7},
  number        = {4},
  pages         = {043168},
  numpages      = {17},
  year          = {2025},
  month         = {nov},
  publisher     = {American Physical Society},
  doi           = {10.1103/vy37-28jc}
}

@article{Harney2022HybridQNs,
  title = {End-To-End Capacities of Hybrid Quantum Networks},
  author = {Harney, Cillian and Fletcher, Alasdair I. and Pirandola, Stefano},
  journal = {Phys. Rev. Appl.},
  volume = {18},
  issue = {1},
  pages = {014012},
  numpages = {20},
  year = {2022},
  month = {Jul},
  publisher = {American Physical Society},
  doi = {10.1103/PhysRevApplied.18.014012},
  url = {https://link.aps.org/doi/10.1103/PhysRevApplied.18.014012}
}

@article{Xia2013StretchFactorDelaunay,
author = {Xia, Ge},
archivePrefix = {arXiv},
arxivId = {1103.4361},
doi = {10.1137/110832458},
issn = {00975397},
journal = {SIAM Journal on Computing},
number = {4},
pages = {1620--1659},
title = {{The stretch factor of the Delaunay triangulation is less than 1.998}},
volume = {42},
year = {2013}
}

@article{VanMeter2016PathToScalableQC,
author = {Van Meter, Rodney and Devitt, Simon J.},
doi = {10.1109/MC.2016.291},
issn = {00189162},
journal = {Computer},
number = {9},
pages = {31--42},
publisher = {IEEE},
title = {{The Path to Scalable Distributed Quantum Computing}},
volume = {49},
year = {2016}
}

@article{QINetworkChallenges2020,
author = {Cacciapuoti, Angela Sara and Caleffi, Marcello and Tafuri, Francesco and Cataliotti, Francesco Saverio and Gherardini, Stefano and Bianchi, Giuseppe},
doi = {10.1109/MNET.001.1900092},
issn = {1558156X},
journal = {IEEE Network},
number = {1},
pages = {137--143},
title = {{Quantum Internet: Networking Challenges in Distributed Quantum Computing}},
volume = {34},
year = {2020}
}

@article{Cuomo2020TowardsDistQC,
author = {Cuomo, Daniele and Caleffi, Marcello and Cacciapuoti, Angela Sara},
doi = {10.1049/iet-qtc.2020.0002},
issn = {2632-8925},
journal = {IET Quantum Communication},
number = {1},
pages = {3--8},
title = {{Towards a distributed quantum computing ecosystem}},
volume = {1},
year = {2020}
}

@article{QKD250km,
    title = {{High rate, long-distance quantum key distribution over 250 km of ultra low loss fibres}},
    year = {2009},
    journal = {New Journal of Physics},
    author = {Stucki, D. and Walenta, N. and Vannel, F. and Thew, R. T. and Gisin, N. and Zbinden, H. and Gray, S. and Towery, C. R. and Ten, S.},
    month = {7},
    volume = {11},
    pages = {075003},
    doi = {10.1088/1367-2630/11/7/075003},
    issn = {13672630}
}

@article{Joshi2020,
author = {Joshi, Siddarth Koduru and Aktas, Djeylan and Wengerowsky, S{\"o}ren and Lon{\v{c}}ari{\'c}, Martin and Neumann, Sebastian Philipp and Liu, Bo and Scheidl, Thomas and Curr{\'a}s-Lorenzo, Guillermo and Samec, {\v{Z}}eljko and Kling, Laurent and Qiu, Alex and Razavi, Mohsen and Stip{\v{c}}evi{\'c}, Mario and Rarity, John G. and Ursin, Rupert},
doi = {10.1126/sciadv.aba0959},
journal = {Science Advances},
number = {36},
pages = {eaba0959},
title = {{A trusted node-free eight-user metropolitan quantum communication network}},
url = {https://doi.org/10.1126/sciadv.aba0959},
volume = {6},
year = {2020}
}

@article{Chen2021_46nodeQN,
author = {Chen, Teng-Yun and Jiang, Xiao and Tang, Shi-Biao and Zhou, Lei and Yuan, Xiao and Zhou, Hongyi and Wang, Jian and Liu, Yang and Chen, Luo-Kan and Liu, Wei-Yue and Zhang, Hong-Fei and Cui, Ke and Liang, Hao and Li, Xiao-Gang and Mao, Yingqiu and Wang, Liu-Jun and Feng, Si-Bo and Chen, Qing and Zhang, Qiang and Li, Li and Liu, Nai-Le and Peng, Cheng-Zhi and Ma, Xiongfeng and Zhao, Yong and Pan, Jian-Wei},
doi = {10.1038/s41534-021-00474-3},
issn = {2056-6387},
journal = {npj Quantum Information},
number = {1},
pages = {134},
title = {{Implementation of a 46-node quantum metropolitan area network}},
url = {https://doi.org/10.1038/s41534-021-00474-3},
volume = {7},
year = {2021}
}

@article{CambridgeQN,
author = {Dynes, J F and Wonfor, A and Tam, W W -S. and Sharpe, A W and Takahashi, R and Lucamarini, M and Plews, A and Yuan, Z L and Dixon, A R and Cho, J and Tanizawa, Y and Elbers, J -P. and Grei{\ss}er, H and White, I H and Penty, R V and Shields, A J},
doi = {10.1038/s41534-019-0221-4},
issn = {2056-6387},
journal = {npj Quantum Information},
number = {1},
pages = {101},
title = {{Cambridge quantum network}},
url = {https://doi.org/10.1038/s41534-019-0221-4},
volume = {5},
year = {2019}
}

@inproceedings{Handley2019,
  author    = {Handley, Mark},
  title     = {Using Ground Relays for Low-Latency Wide-Area Routing in Megaconstellations},
  booktitle = {Proceedings of the 18th ACM Workshop on Hot Topics in Networks},
  series    = {HotNets '19},
  pages     = {125--132},
  year      = {2019},
  publisher = {Association for Computing Machinery},
  address   = {New York, NY, USA},
  isbn      = {9781450370202},
  doi       = {10.1145/3365609.3365859}
}

@article{Preet_2022,
	doi = {10.1149/10701.0033ecst},
	url = {https://doi.org/10.1149/10701.0033ecst},
	year = 2022,
	month = {apr},
	publisher = {The Electrochemical Society},
	volume = {107},
	number = {1},
	pages = {33--43},
	author = {Shiv Preet},
	title = {Satellite Internet Communication: A Race with Contemporary Optical Fiber Network with the Help of {SPT} Algorithm},
	journal = {{ECS} Transactions},
}

@article{Cirac1999DistQcomp,
author = {Cirac, J. I. and Ekert, A. K. and Huelga, S. F. and Macchiavello, C.},
doi = {10.1103/PhysRevA.59.4249},
issn = {10941622},
journal = {Physical Review A},
number = {6},
pages = {4249--4254},
primaryClass = {quant-ph},
title = {{Distributed quantum computation over noisy channels}},
volume = {59},
year = {1999}
}

@article{Pirandola2018ChannelSim,
  author        = {Pirandola, Stefano and Braunstein, Samuel L. and Laurenza, Riccardo and Ottaviani, Carlo and Cope, Thomas P. W. and Spedalieri, Gaetana and Banchi, Leonardo},
  title         = {Theory of channel simulation and bounds for private communication},
  journal       = {Quantum Science and Technology},
  volume        = {3},
  number        = {3},
  pages         = {035009},
  year          = {2018},
  publisher     = {IOP Publishing},
  issn          = {2058-9565},
  doi           = {10.1088/2058-9565/aac394}
}

@article{Kimble2008QInternet,
author = {Kimble, H. J.},
doi = {10.1038/nature07127},
issn = {14764687},
journal = {Nature},
number = {7198},
pages = {1023--1030},
pmid = {18563153},
title = {{The quantum internet}},
volume = {453},
year = {2008}
}

@article{Agiwal2016-5Greview,
author = {Agiwal, Mamta and Roy, Abhishek and Saxena, Navrati},
doi = {10.1109/COMST.2016.2532458},
issn = {1553877X},
journal = {IEEE Communications Surveys and Tutorials},
number = {3},
pages = {1617--1655},
publisher = {IEEE},
title = {{Next generation 5G wireless networks: A comprehensive survey}},
volume = {18},
year = {2016}
}

@article{Raychaudhuri2012FrontiersWireless,
author = {Raychaudhuri, Dipankar and Mandayam, Narayan B.},
doi = {10.1109/JPROC.2011.2182095},
isbn = {0001411101},
issn = {00189219},
journal = {Proceedings of the IEEE},
number = {4},
pages = {824--840},
title = {{Frontiers of wireless and mobile communications}},
volume = {100},
year = {2012}
}

@article{Vasylyev2019Sat,
  title = {Satellite-mediated quantum atmospheric links},
  author = {Vasylyev, D. and Vogel, W. and Moll, F.},
  journal = {Phys. Rev. A},
  volume = {99},
  issue = {5},
  pages = {053830},
  numpages = {27},
  year = {2019},
  month = {May},
  publisher = {American Physical Society},
  doi = {10.1103/PhysRevA.99.053830},
  url = {https://link.aps.org/doi/10.1103/PhysRevA.99.053830}
}

@article{PirandolaEnd-to-End19,
archivePrefix = {arXiv},
arxivId = {1905.12674},
author = {Pirandola, Stefano},
doi = {10.1038/s42005-019-0147-3},
issn = {2399-3650},
journal = {Communications Physics},
number = {1},
pages = {51},
title = {{End-to-end capacities of a quantum communication network}},
volume = {2},
year = {2019}
}

@article{ETEarxiv,
title={Capacities of repeater-assisted quantum communications
},
journal ={arxiv:1601.00966},
author = {Pirandola, Stefano},
year = {2016}
}

@book{taboga-stats,
author={Taboga, Marco},
title={Lectures on probability theory and mathematical statistics},
edition={3rd},
year={2017},
isbn={9781981369195},
publisher={Createspace Independent Publishing Platform},
url={https://www.statlect.com/}
}

@Inbook{Gazi2023,
author="Gazi, Orhan",
title="Functions of Random Variables",
bookTitle="Introduction to Probability and Random Variables",
year="2023",
publisher="Springer Nature Switzerland",
address="Cham",
pages="111--139",
isbn="978-3-031-31816-0",
doi="10.1007/978-3-031-31816-0_4",
url="https://doi.org/10.1007/978-3-031-31816-0_4"
}

@book{Penrose03RGG,
    author = {Penrose, Mathew},
    title = "{Random Geometric Graphs}",
    publisher = {Oxford University Press},
    year = {2003},
    month = {05},
    isbn = {9780198506263},
    doi = {10.1093/acprof:oso/9780198506263.001.0001},
    url = {https://doi.org/10.1093/acprof:oso/9780198506263.001.0001},
}

@inproceedings{musin2003propertiesDeltri,
  author        = {Musin, Oleg R.},
  title         = {Properties of the Delaunay Triangulation},
  booktitle     = {Proceedings of the Thirteenth Annual Symposium on Computational Geometry},
  pages         = {424--426},
  year          = {1997},
  publisher     = {ACM Press},
  doi           = {10.1145/262839.263061},
  archivePrefix = {arXiv},
  eprint        = {math/0312050},
  primaryClass  = {math.MG}
}

@article{Alleaume2009TopologicalOptQKD,
   author = {Alléaume, Romain and Roueff, François and Diamanti, Eleni and Lütkenhaus, Norbert},
   title = {Topological optimization of quantum key distribution networks},
   journal = {New Journal of Physics},
   volume = {11},
   pages = {075002},
   year = {2009},
   doi = {10.1088/1367-2630/11/7/075002}
}

@book{Simeone_2026, place={Cambridge}, title={Classical and Quantum Information Theory: Uncertainty, Information, and Correlation}, publisher={Cambridge University Press}, author={Simeone, Osvaldo}, year={2026}}

\end{document}